\begin{document}

\newtheorem{theorem}{Theorem}
\newtheorem*{theorem*}{Theorem}
\newtheorem{corollary}[theorem]{Corollary}
\newtheorem{prop}[theorem]{Proposition} 
\newtheorem{problem}[theorem]{Problem}
\newtheorem{lemma}[theorem]{Lemma} 
\newtheorem*{lemma*}{Lemma}
\newtheorem{remark}[theorem]{Remark}
\newtheorem{observation}[theorem]{Observation}
\newtheorem{defin}{Definition} 
\newtheorem{example}{Example}
\newtheorem{conj}{Conjecture} 
\newcommand{\PR}{\noindent {\bf Proof:\ }} 
\def\EPR{\hfill $\Box$\linebreak\vskip.5mm} 
 
\def\Pol{{\sf Pol}} 
\def\mPol{{\sf MPol}} 
\def\Polo{{\sf Pol}_1} 
\def\PPol{{\sf pPol\;}} 
\def\Inv{{\sf Inv}}
\def\mInv{{\sf MInv}} 
\def\Clo{{\sf Clo}\;} 
\def\Con{{\sf Con}} 
\def\concom{{\sf Concom}\;} 
\def\End{{\sf End}\;}
\def\Sub{{\sf Sub}\;} 
\def\Im{{\sf Im}} 
\def\Ker{{\sf Ker}\;} 
\def\H{{\sf H}}
\def\S{{\sf S}} 
\def\D{{\sf P}} 
\def\I{{\sf I}} 
\def\Var{{\sf var}} 
\def\PVar{{\sf pvar}} 
\def\fin#1{{#1}_{\rm fin}}
\def\P{{\sf P}} 
\def\Pfin{{\sf P_{\rm fin}}} 
\def\Id{{\sf Id}}
\def\R{{\rm R}} 
\def\F{{\rm F}} 
\def\Term{{\sf Term}}
\def\var#1{{\sf var}(#1)} 
\def\Sg#1{{\sf Sg}(#1)} 
\def\Sgg#1#2{{\sf Sg}_{#1}(#2)} 
\def\Cg#1{{\sf Cg}(#1)}
\def\tol{{\sf tol}} 
\def\rbcomp#1{{\sf rbcomp}(#1)}
\def\core{\mathsf{core}}
  
\let\cd=\cdot 
\let\eq=\equiv 
\let\op=\oplus 
\let\ot=\otimes
\let\omn=\ominus
\let\meet=\wedge 
\let\join=\vee 
\let\tm=\times
\def\ldiv{\mathbin{\backslash}} 
\def\rdiv{\mathbin/}
  
\def\typ{{\sf typ}} 
\def\zz{{\un 0}} 
\def\zo{{\un 1}}
\def\one{{\bf1}} 
\def\two{{\bf2}} 
\def\three{{\bf3}}
\def\four{{\bf4}} 
\def\five{{\bf5}}
\def\pq#1{(\zz_{#1},\mu_{#1})}
  
\let\wh=\widehat 
\def\ox{\ov x} 
\def\oy{\ov y} 
\def\oz{\ov z}
\def\of{\ov f} 
\def\oa{\ov a} 
\def\ob{\ov b} 
\def\oc{\ov c}
\def\od{\ov d} 
\def\oob{\ov{\ov b}} 
\def\rx{{\rm x}}
\def\rf{{\rm f}} 
\def\rrm{{\rm m}} 
\let\un=\underline
\let\ov=\overline 
\let\cc=\circ 
\let\rb=\diamond 
\def\ta{{\tilde a}} 
\def\tz{{\tilde z}}
  
  
\def\zZ{{\mathbb Z}} 
\def\B{{\mathcal B}} 
\def\P{{\mathcal P}}
\def\zL{{\mathbb L}} 
\def\zD{{\mathbb D}}
 \def\zE{{\mathbb E}}
\def\zG{{\mathbb G}} 
\def\zA{{\mathbb A}} 
\def\zB{{\mathbb B}}
\def\zC{{\mathbb C}} 
\def\zM{{\mathbb M}} 
\def\zR{{\mathbb R}}
\def\zS{{\mathbb S}} 
\def\zT{{\mathbb T}} 
\def\zN{{\mathbb N}}
\def\zQ{{\mathbb Q}} 
\def\zW{{\mathbb W}} 
\def\bK{{\bf K}}
\def\C{{\bf C}} 
\def\M{{\bf M}} 
\def\E{{\bf E}} 
\def\N{{\bf N}}
\def\O{{\bf O}} 
\def\bN{{\bf N}} 
\def\bX{{\bf X}} 
\def\GF{{\rm GF}} 
\def\cC{{\mathcal C}} 
\def\cA{{\mathcal A}}
\def\cB{{\mathcal B}} 
\def\cD{{\mathcal D}} 
\def\cE{{\mathcal E}} 
\def\cF{{\mathcal F}} 
\def\cG{{\mathcal G}} 
\def\cH{{\mathcal H}}
\def\cI{{\mathcal I}} 
\def\cL{{\mathcal L}} 
\def\cM{{\mathcal M}} 
\def\cP{{\mathcal P}} 
\def\cR{{\mathcal R}} 
\def\cRY{{\mathcal RY}}
\def\cS{{\mathcal S}} 
\def\cT{{\mathcal T}} 
\def\cX{{\mathcal X}} 
\def\cY{{\mathcal Y}} 
\def\oB{{\ov B}}
\def\oC{{\ov C}} 
\def\ooB{{\ov{\ov B}}} 
\def\ozB{{\ov{\zB}}}
\def\ozD{{\ov{\zD}}} 
\def\ozG{{\ov{\zG}}}
\def\tcA{{\widetilde\cA}} 
\def\tcC{{\widetilde\cC}}
\def\tcF{{\widetilde\cF}} 
\def\tcI{{\widetilde\cI}}
\def\tB{{\widetilde B}} 
\def\tC{{\widetilde C}}
\def\tD{{\widetilde D}} 
\def\ttB{{\widetilde{\widetilde B}}}
\def\ttC{{\widetilde{\widetilde C}}}
\def\tba{{\tilde\ba}} 
\def\ttba{{\tilde{\tilde\ba}}}
\def\tbb{{\tilde\bb}} 
\def\ttbb{{\tilde{\tilde\bb}}}
\def\tbc{{\tilde\bc}} 
\def\tbd{{\tilde\bd}}
\def\tbe{{\tilde\be}} 
\def\tbt{{\tilde\bt}}
\def\tbu{{\tilde\bu}} 
\def\tbv{{\tilde\bv}}
\def\tbw{{\tilde\bw}} 
\def\tdl{{\tilde\dl}} 
\def\ocP{{\ov\cP}}
\def\tzA{{\widetilde\zA}} 
\def\tzC{{\widetilde\zC}}
\def\new{{\mbox{\footnotesize new}}}
\def\old{{\mbox{\footnotesize old}}}
\def\prev{{\mbox{\footnotesize prev}}}
\def\oo{{\mbox{\sf\footnotesize o}}}
\def\pp{{\mbox{\sf\footnotesize p}}}
\def\nn{{\mbox{\sf\footnotesize n}}} 
\def\oR{{\ov R}}
  
  
\def\gA{{\mathfrak A}} 
\def\gV{{\mathfrak V}} 
\def\gS{{\mathfrak S}} 
\def\gK{{\mathfrak K}} 
\def\gH{{\mathfrak H}}
  
\def\ba{{\bf a}} 
\def\bb{{\bf b}} 
\def\bc{{\bf c}} 
\def\bd{{\bf d}} 
\def\be{{\bf e}} 
\def\bbf{{\bf f}} 
\def\bg{{\bf g}}
\def\bh{{\bf h}}
\def\bi{{\bf i}} 
\def\bm{{\bf m}} 
\def\bo{{\bf o}} 
\def\bp{{\bf p}} 
\def\bs{{\bf s}} 
\def\bu{{\bf u}} 
\def\bt{{\bf t}} 
\def\bv{{\bf v}} 
\def\bx{{\bf x}}
\def\by{{\bf y}} 
\def\bw{{\bf w}} 
\def\bz{{\bf z}}
\def\ga{{\mathfrak a}} 
\def\oal{{\ov\al}} 
\def\obeta{{\ov\beta}}
\def\ogm{{\ov\gm}} 
\def\oep{{\ov\varepsilon}}
\def\oeta{{\ov\eta}} 
\def\oth{{\ov\th}} 
\def\ovm{{\ov\mu}}
\def\ozero{{\ov0}}
  
  
\def\CCSP{\hbox{\rm c-CSP}} 
\def\CSP{{\rm CSP}} 
\def\NCSP{{\rm \#CSP}} 
\def\mCSP{{\rm MCSP}} 
\def\FP{{\rm FP}} 
\def\PTIME{{\bf PTIME}} 
\def\GS{\hbox{($*$)}} 
\def\ry{\hbox{\rm r+y}}
\def\rb{\hbox{\rm r+b}} 
\def\Gr#1{{\mathrm{Gr}(#1)}}
\def\Grp#1{{\mathrm{Gr'}(#1)}} 
\def\Grpr#1{{\mathrm{Gr''}(#1)}}
\def\Scc#1{{\mathrm{Scc}(#1)}} 
\def\rel{R} 
\def\relo{Q}
\def\rela{S} 
\def\dep{\mathsf{dep}}
\def\Filt{\mathrm{Ft}}
\def\Filts{\mathrm{Fts}} 
\def\Agr{$\mathbb{A}$}
\def\Al{\mathrm{Alg}}
\def\Sig{\mathrm{Sig}}
\def\strat{\mathsf{strat}}
\def\relmax{\mathsf{relmax}}
\def\srelmax{\mathsf{srelmax}}
\def\Meet{\mathsf{Meet}}
\def\amax{\mathsf{amax}}
\def\as{\mathsf{as}}
\def\star{\hbox{$(*)$}}
\def\bmal{{\mathbf m}}
\def\Af{\mathsf{Af}}
\let\sqq=\sqsubseteq

  
\let\sse=\subseteq 
\def\ang#1{\langle #1 \rangle}
\def\angg#1{\left\langle #1 \right\rangle}
\def\dang#1{\ang{\ang{#1}}} 
\def\vc#1#2{#1 _1\zd #1 _{#2}}
\def\zd{,\ldots,} 
\let\bks=\backslash 
\def\red#1{\vrule height7pt depth3pt width.4pt
\lower3pt\hbox{$\scriptstyle #1$}}
\def\fac#1{/\lower2pt\hbox{$\scriptstyle #1$}}
\def\me{\stackrel{\mu}{\eq}} 
\def\nme{\stackrel{\mu}{\not\eq}}
\def\eqc#1{\stackrel{#1}{\eq}} 
\def\cl#1#2{\arraycolsep0pt
\left(\begin{array}{c} #1\\ #2 \end{array}\right)}
\def\cll#1#2#3{\arraycolsep0pt \left(\begin{array}{c} #1\\ #2\\
#3 \end{array}\right)} 
\def\clll#1#2#3#4{\arraycolsep0pt
\left(\begin{array}{c} #1\\ #2\\ #3\\ #4 \end{array}\right)}
\def\cllll#1#2#3#4#5#6{ \left(\begin{array}{c} #1\\ #2\\ #3\\
#4\\ #5\\ #6 \end{array}\right)} 
\def\pr{{\rm pr}}
\let\upr=\uparrow 
\def\ua#1{\hskip-1.7mm\uparrow^{#1}}
\def\sua#1{\hskip-0.2mm\scriptsize\uparrow^{#1}} 
\def\lcm{{\rm lcm}} 
\def\perm#1#2#3{\left(\begin{array}{ccc} 1&2&3\\ #1&#2&#3
\end{array}\right)} 
\def\w{$\wedge$} 
\let\ex=\exists
\def\NS{{\sc (No-G-Set)}} 
\def\lev{{\sf lev}}
\let\rle=\sqsubseteq 
\def\ryle{\le_{ry}} 
\def\ryprec{\le_{ry}}
\def\os{\mbox{[}} 
\def\zs{\mbox{]}}
\def\link{{\sf link}}
\def\solv{\stackrel{s}{\sim}} 
\def\mal{\mathbf{m}}
\def\precs{\prec_{as}}

  
\def\lb{$\linebreak$}  
  
\def\ar{\hbox{ar}} 
\def\Im{{\sf Im}\;} 
\def\deg{{\sf deg}}
\def\id{{\rm id}}
  
\let\al=\alpha 
\let\gm=\gamma 
\let\dl=\delta 
\let\ve=\varepsilon
\let\ld=\lambda 
\let\om=\omega 
\let\vf=\varphi 
\let\vr=\varrho
\let\th=\theta 
\let\sg=\sigma 
\let\Gm=\Gamma 
\let\Dl=\Delta
  
  
\font\tengoth=eufm10 scaled 1200 
\font\sixgoth=eufm6
\def\goth{\fam12} 
\textfont12=\tengoth 
\scriptfont12=\sixgoth
\scriptscriptfont12=\sixgoth 
\font\tenbur=msbm10
\font\eightbur=msbm8 
\def\bur{\fam13} 
\textfont11=\tenbur
\scriptfont11=\eightbur 
\scriptscriptfont11=\eightbur
\font\twelvebur=msbm10 scaled 1200 
\textfont13=\twelvebur
\scriptfont13=\tenbur 
\scriptscriptfont13=\eightbur
\mathchardef\nat="0B4E 
\mathchardef\eps="0D3F

\newcommand{\infin}[1]{}

\newcommand{\Mnote}[1]{\todo[size=footnotesize,bordercolor=black,color=blue!40]{#1}}
\newcommand{\snote}[1]{\textcolor{blue!40}{(Standa: #1)}}
\newcommand{\anote}[1]{\textcolor{red}{(Andrei: #1)}}

\def\bZ{{\bf Z}}
\newcommand{\ignore}[1]{}

\title{Satisfiability of commutative vs. non-commutative CSPs\thanks{An extended abstract of part of this work appeared in Proceedings of ICALP\,(A) 2025. This work was supported by UKRI EP/X024431/1 and NSERC Discovery grant. For the purpose of Open Access, the authors have applied a CC BY public copyright licence to any Author Accepted Manuscript version arising from this submission. All data is provided in full in the results section of this paper.}}
\author{Andrei A.\ Bulatov\\SFU \and Stanislav \v{Z}ivn\'y\\University of Oxford} 
\date{} 
\maketitle

\begin{abstract}
The Mermin-Peres magic square is a celebrated example of a system of Boolean
  linear equations that is not (classically) satisfiable but is satisfiable via
  linear operators on a Hilbert space of dimension four. A natural question is
  then, for what kind of problems such a phenomenon occurs? Atserias, Kolaitis,
  and Severini answered this question for all Boolean Constraint Satisfaction
  Problems (CSPs): For \textsc{0-Valid-SAT}, \textsc{1-Valid-SAT},
  \textsc{2-SAT}, \textsc{Horn-SAT}, and \textsc{Dual Horn-SAT}, classical
  satisfiability and operator satisfiability is the same  and thus there is no
  gap; for all other Boolean CSPs, these notions differ as there are gaps, i.e.,
  there are unsatisfiable instances that are  satisfiable via operators on Hilbert spaces. 

We generalize their result to CSPs on \emph{arbitrary finite} domains and give
  an almost complete classification: First, we show that NP-hard CSPs admit a
  separation between classical satisfiability and satisfiability via operators
  on finite- and infinite-dimensional Hilbert spaces. Second, we show that
  tractable CSPs of bounded width have no satisfiability gaps of any kind.
  Finally, we show that tractable CSPs of unbounded width can simulate, in a
  satisfiability-gap-preserving fashion, linear equations over an Abelian group
  of prime order $p$; for such CSPs, we obtain a separation of classical
  satisfiability and satisfiability via operators on infinite-dimensional
  Hilbert spaces. Furthermore, if $p=2$, such CSPs also have gaps separating
  classical satisfiability and satisfiability via operators on finite- and infinite-dimensional Hilbert spaces. 
\end{abstract}

\section{Introduction}

Symmetry leads to efficient computation. This phenomenon has
manifested itself in several research areas that have one aspect in common, namely a
model of computation with local constraints that restrict the solution space of
the problem of interest. An elegant way to describe such problems is in the
framework of Constraint Satisfaction Problems (CSPs).
CSPs have driven some of the most influential
developments in theoretical computer science, from NP-completeness to the PCP
theorem to semidefinite programming algorithms to the Unique Games Conjecture.
The mathematical structure of tractable decision
CSPs~\cite{Bulatov17:focs,Zhuk20:jacm}, infinite-domain CSPs~\cite{Bodirsky2021complexity,Bodirsky23:jacm},  optimization CSPs~\cite{tz16:jacm}, as well as approximable
CSPs~\cite{Raghavendra08:everycsp,Brown-CohenRaghavendra15}, is now known to be linked to certain
forms of higher-order symmetries of the solution spaces.
A recently emerging research direction links 
CSPs with foundational
topics in physics and quantum
computation~\cite{Cleve14:icalp,Cleve17:jmp-perfect,AKS19:jcss,Paddock25:ijm,Mancinska20:focs}.

\paragraph{Constraint Satisfaction Problems}
CSPs capture some of the most fundamental
computational problems, including graph and hypergraph colorings, linear
equations, and variants and generalizations of satisfiability. Informally, one
is given a set of variables and a set of constraints, each depending only on
constantly many variables. Given a CSP instance, the goal is to find an assignment
of values to all the variables so that all constraints are satisfied. For
example, if the domain is $\{r,g,b\}$ and the constraints are of the form
$R(x,y)$, where 
\[R=\{(r,g),(g,r),(g,b),(b,g),(r,b),(b,r)\}\]
is the binary disequality relation on $\{r,g,b\}$, we obtain
the classic graph \textsc{3-Colorability}
problem. If the domain is $\{r,g,b,o\}$ and the
constraints are of the form $R(x,y)$, where
\[R=\{(r,g),(g,r),(g,b),(b,g),(b,o),(o,b),(o,r),(r,o)\},\]
we obtain a variant of
the graph \textsc{4-Colorability} problem in which adjacent vertices must be
assigned different colors and, additionally, red and blue vertices must not be adjacent
and green and orange vertices must not be adjacent. 

Back in 1978, Schaefer famously classified all Boolean CSPs as solvable in
polynomial time or NP-hard~\cite{Schaefer78:stoc}.\footnote{We call a CSP
\emph{Boolean} if the domain of all variables is of size two. Some papers call
such CSPs binary. We use the term \emph{binary} for a relation of arity two and
for CSPs whose constraints involve binary relations.} The
tractable cases are the standard textbook problems, namely 
\textsc{2-SAT}, \textsc{Horn-SAT}, \textsc{dual
Horn-SAT}, and system of linear equations on a two-element set.\footnote{There
are also two trivial, uninteresting cases called 0-valid and 1-valid.} 
Hell and Ne\v{s}et\v{r}il studied a special case of CSPs known
as graph homomorphisms~\cite{hell2004graphs}. These are CSPs in which all constraints
involve the same binary symmetric relation, i.e., a graph. The above-mentioned 
\textsc{3-Colorability} problem is the homomorphism problem to $K_3$,
the undirected clique on three vertices, say $\{r,g,b\}$. The above-mentioned
variant of \textsc{4-Colorability} is the homomorphism problem to $C_4$, the
undirected cycle on four vertices, say $\{r,g,b,o\}$. Generalizing greatly
the classic result of Karp that \textsc{$k$-Colorability} is solvable in polynomial time for $k\leq 2$ and NP-hard
for $k\geq 3$ and other concrete problems such as the (tractable) variant of
\textsc{4-Colorability} above, Hell and Ne\v{s}et\v{r}il obtained in
1990 a complete classification of such CSPs~\cite{HellN90}.
Motivated by these two results and the quest to identify the largest subclass of
NP that could exhibit a dichotomy and thus avoid NP-intermediate cases, Feder
and Vardi famously conjectured that \emph{all} CSPs on finite domains admit a
dichotomy; i.e., are either solvable in polynomial time or are
NP-hard~\cite{Feder98:monotone}.

Following the so-called algebraic approach to CSPs, pioneered by Jeavons, Cohen
and Gyssens~\cite{Jeavons97:closure} and Bulatov, Jeavons, and
Krokhin~\cite{Bulatov05:classifying}, the conjecture was resolved in the
affirmative in 2017 by Bulatov~\cite{Bulatov17:focs} and, independently,
by Zhuk~\cite{Zhuk17:focs,Zhuk20:jacm}. The algebraic approach allows for a very clean and
precise characterization of what makes certain CSPs computationally tractable
--- this is captured by the notion of polymorphisms, which can be thought of as
multivariate symmetries of solutions spaces of CSPs.
Along the way to the resolution of the Feder-Vardi dichotomy conjecture, the
algebraic approach has been successfully used to
establish other results about CSPs, 
e.g., characterizing the power of local consistency
algorithms for CSPs~\cite{Bulatov09:width,Barto14:local}, characterizing robustly solvable
CSPs~\cite{Guruswami12:tght,Dalmau13:toct,Barto16:sicomp}, classifying the
complexity of exact optimization CSPs~\cite{tz16:jacm,Kolmogorov17:sicomp}, the
tremendous progress on classifying the complexity of CSPs on infinite
domains~\cite{Bodirsky2021complexity}, and very recently using SDPs for robustly
solving certain promise CSPs~\cite{BGS23:stoc}.

\paragraph{Operator Constraint Satisfaction Problems}

Consider the following instance of a Boolean CSP, consisting in nine variables $x_1,\ldots,x_9$ over the Boolean domain $\{-1,+1\}$ and the following six
constraints:

\begin{equation}\label{eq:magic}
  \begin{split}
    x_1 x_2 x_3 = +1, \qquad x_1 x_4 x_7 = +1,\\
    x_4 x_5 x_6 = +1, \qquad x_2 x_5 x_8 = +1,\\
    x_7 x_8 x_9 = +1, \qquad x_3 x_6 x_9 = -1.
  \end{split}
\end{equation}

Graphically, this system of equations can be represented by a square, where each
equation on the left of~(\ref{eq:magic}) comes from a row, and each equation on the right
of~(\ref{eq:magic}) comes from a column.

\begin{center}
{\large
\begin{tabular}{|c|c|c|}
 \hline
  $x_1$ & $x_2$ & $x_3$ \\\hline 
  $x_4$ & $x_5$ & $x_6$ \\\hline 
  $x_7$ & $x_8$ & $x_9$ \\\hline
\end{tabular}
}
\begin{tabular}{c}
    $+1$\\
    $+1$\\
    $+1$
\end{tabular}
\\[-0.7em]
\begin{tabular}{ccc}
  $+1$ & $+1$ & $-1$
\end{tabular}
\phantom{\begin{tabular}{c} $+1$\\ $+1$\\ $+1$ \end{tabular}}
\end{center}
The system of equations~(\ref{eq:magic}) has no solution in the Boolean domain
$\{-1,+1\}$: By multiplying the left-hand sides of all equations we get $+1$
because every variable occurs twice in the system and $x_i^2=+1$ for every
$1\leq i\leq 9$. However, by multiplying the right-hand sides of all equations,
we get $-1$. Note that this argument used implicitly the assumption that the
variables commute pairwise even if they do not appear in the same equation,
which is true over $\{-1,+1\}$.
Thus, this argument does not hold if one assumes that 
only variables occurring
in the same equation commute pairwise. In fact, Mermin famously established that
the system~(\ref{eq:magic}) has a solution consisting of linear operators on a
Hilbert space of dimension four~\cite{Mermin1990simple,Mermin1993hidden} and the
construction is now know as the Mermin-Peres magic
square~\cite{Peres1990incompatible}. This construction proves the
Bell-Kochen-Specker theorem on the impossibility to explain quantum
mechanics via hidden variables~\cite{Bell1966problem,Kochen67}. 

Any Boolean CSP instance, just like the one above, can be associated with a certain non-local game with two
players, say Alice and Bob, who are unable to communicate while the game is in
progress. Alice is given a constraint at
random and must return a satisfying assignment to the variables in the
constraint. Bob is given a variable from the constraint and must return an
assignment to the variable. The two players win if they assign the same value to
the chosen variable. With shared randomness, Alice and Bob can play the game
perfectly if and only if the instance is satisfiable~\cite{Cleve14:icalp}.
The Mermin-Peres
construction~\cite{Mermin1990simple,Peres1990incompatible}
was the first example of an instance where the players can play perfectly by
sharing an entangled quantum state although the instance is not satisfiable. We
note that~\cite{Mermin1990simple,Peres1990incompatible} were looking at quantum contextuality scenarios rather than
non-local games and it was Aravind who reformulated the construction in the above-described
game setting~\cite{aravind2002bell}, cf. also~\cite{Cleve04:ccc}. The game was studied systematically for Boolean
CSP by Cleve and Mittal~\cite{Cleve14:icalp}
and further studied by Cleve, Liu, and Slofstra~\cite{Cleve17:jmp-perfect}.

Every Boolean relation can be identified with its characteristic function, which
has a unique representation as a multilinear polynomial via the Fourier
transform. The multilinear polynomial representation of Boolean relations (and
thus also Boolean CSPs) makes it possible to consider \emph{relaxations} of
satisfiability in which the variables take values in a suitable space,
rather than in $\{-1,+1\}$. Such relaxations of satisfiability have been
considered in the  foundations of physics long ago, playing an important role in
our understanding of the differences between classical and quantum theories.
In detail, given a Boolean CSP instance, a classical assignment assigns to every
variable a value from $\{-1,+1\}$. An operator assignment assigns to every
variable a linear operator $A$ on a Hilbert space so that $A^2=I$
and each $A$ is self-adjoint, i.e., $A=A^*$ and thus in particular $A$ is
normal, meaning that $AA^*=A^*A$.\footnote{For finite-dimensional Hilbert
spaces, a linear operator is just, up to a choice of basis, a matrix with
complex entries, and the adjoint $A^*$ of $A$ is the conjugate transpose of $A$,
also denoted by $A^*$. For
infinite-dimensional Hilbert spaces, the notions are more involved, cf.~\Cref{sec:preliminaries} for all details.}
Furthermore, it is required that operators
assigned to variables from the scope of some constraint should pairwise commute. 

Ji showed that for Boolean CSPs corresponding to \textsc{2-SAT} there is no difference between (classical)
satisfiability and satisfiability via operators~\cite{Ji13:arxiv-binary}.
Later, Atserias, Kolaitis, and Severini established a complete classification for all Boolean CSPs parameterized by the set of
allowed constraint relations. In particular, using the
substitution method~\cite{Cleve14:icalp} they showed that
that only CSPs whose
relations are 0-valid, 1-valid, or come from \textsc{2-SAT}, \textsc{Horn-SAT}, or \textsc{Dual
Horn-SAT} have ``no satisfiability gap'' in the sense that (classic)
satisfiability is equivalent to operator satisfiability (over both finite- and
infinite-dimensional Hilbert spaces)~\cite{AKS19:jcss}. 
For all other Boolean CSPs, Atserias et al.~\cite{AKS19:jcss} showed that there are
satisfiability gaps in the
following sense: There are instances that are not (classically) satisfiable but are
satisfiable via operators on finite-dimensional Hilbert spaces, relying on the Mermin-Peres magic
square --- gaps of the \emph{first kind}.  This immediately implies that
there are instances that are not (classically) satisfiable but are satisfiable
via operators on infinite-dimensional Hilbert spaces --- gaps
of the \emph{second kind}. Finally, there are instances that are not satisfiable via
operators on finite-dimensional Hilbert spaces but are satisfiable via
operators on infinite-dimensional Hilbert spaces, relying on the 
breakthrough result of Slofstra who proved this for linear
equations~\cite{Slofstra20:jams} --- gaps of the \emph{third kind}.
The result is
established in~\cite{AKS19:jcss} by showing that reductions between CSPs based on primitive positive
formulas, which preserve complexity and were used to establish Schaefer's
classification of Boolean CSPs~\cite{Schaefer78:stoc}, preserve satisfiability gaps.

\paragraph{Contributions}

As our main contribution, we generalize the result of Atserias et
al.~\cite{AKS19:jcss} from Boolean CSPs to CSPs on \emph{arbitrary finite}
domains.
As has been done in, e.g,~\cite{Culf24:focs,Goemans04:jcss,Qassim20:jpa}, we represent a
finite domain of size $d$ by the $d$-th roots of unity, and require that each
operator $A$ in an operator assignment should be normal (i.e. $AA^*=A^*A$) and
should satisfy $A^d=I$. The representation of non-Boolean CSPs relies on
multi-dimensional Fourier transform. Our main result is a partial classification
of satisfiability gaps for CSPs on arbitrary finite domains, with the remaining
cases being related to a well-known open problem in the area, which is the
existence of gap instances of the first kind for linear equations over
$\mathbb{Z}_p$ with $p>2$. 

In detail, as our first result we prove that 
\emph{NP-hard} CSPs do have gaps of all three kinds. Second, we show that CSPs of \emph{bounded width} (on arbitrary finite domains) do not have a
satisfiability gap (of any kind), meaning that classical satisfiability is
equivalent to satisfiability via operators on finite- and infinite-dimensional
Hilbert spaces (Theorem~\ref{the:no-gap}). Third, all other CSPs (i.e.,
tractable CSPs of unbounded width) can simulate, in a
precise technical sense, linear equations over an Abelian group of prime order
$p$. We prove that these simulations preserve
satisfiability gaps (Theorem~\ref{the:hsp-gap}).
Thus, we obtain a satisfiability
gap of the second kind by the result by Slofstra and
Zhang~\cite{SZ24:arxiv}.
Moreover, for $p=2$, we obtain a satisfiability
gap of the first kind (and thus also of the second kind) from the Mermin-Peres magic
square~\cite{Mermin1990simple,Mermin1993hidden,Peres1990incompatible}, and a 
satisfiability gap of the third kind from Slofstra's
result~\cite{Slofstra20:jams} (cf. also~\cite{kim2018synchronous} for a
different proof of Slofstra's result). Finally, the reductions will also establish the
first claim, gaps of all three kinds for NP-hard CSPs
via~\cite{Mermin1990simple,Mermin1993hidden,Peres1990incompatible,Slofstra20:jams}.

\begin{theorem}[Main result, informal statement]\label{thm:main-informal}
  Let $\Gm$ be an arbitrary finite set of relations on a finite domain. 
  \begin{enumerate}
  \item If $\CSP(\Gm)$ is NP-hard then $\CSP(\Gm)$ has gaps of all three kinds.
  \item Otherwise, $\CSP(\Gm)$ is solvable in polynomial time and
  \begin{enumerate}
  \item either $\Gm$ has bounded width and $\CSP(\Gm)$ has no gaps of any kind,
  \item or $\CSP(\Gm)$ does not have bounded width and $\CSP(\Gm)$ can simulate linear equations over an Abelian
    group of prime order $p$ in a satisfiability-gap-preserving fashion. In this
    case, $\CSP(\Gm)$ has a gap of the second kind. Furthermore, if $p=2$ then
    $\CSP(\Gm)$ has gaps of all three kinds. 
  \end{enumerate} 
  \end{enumerate}
\end{theorem}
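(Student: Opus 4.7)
The plan is to split on the algebraic complexity of $\Gm$ and in each case invoke the appropriate structural theorem of the paper together with a known gap-witnessing construction. Concretely, the Bulatov--Zhuk dichotomy~\cite{Bulatov17:focs,Zhuk20:jacm} tells us that $\CSP(\Gm)$ is either NP-hard or tractable; and in the tractable case, the Barto--Kozik characterization of bounded width decides whether part (2a) or part (2b) applies. All the real content is to connect each algebraic regime with a satisfiability gap of the right kind through gap-preserving reductions.

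For part (1), the strategy is to reduce from a Boolean CSP already known to exhibit all three kinds of gap by Atserias--Kolaitis--Severini~\cite{AKS19:jcss}. Since every NP-hard $\CSP(\Gm)$ admits a pp-construction of an NP-hard Boolean CSP, and the paper's earlier sections verify that pp-interpretations (and the more general gadget reductions used in the algebraic approach to finite-domain CSPs) preserve satisfiability gaps in both the classical and operator settings, a gap instance on the Boolean side lifts to a gap instance over $\Gm$ of the same kind: first-kind from Mermin--Peres~\cite{Mermin1990simple,Mermin1993hidden,Peres1990incompatible}, second-kind by the inclusion of finite- into infinite-dimensional Hilbert spaces, third-kind from Slofstra~\cite{Slofstra20:jams}.

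For part (2a), bounded width, Theorem~\ref{the:no-gap} is directly applicable and yields no gaps of any kind. For part (2b), tractable but of unbounded width, we apply Theorem~\ref{the:hsp-gap}: the algebraic characterization of bounded width (failure forces an affine-type subfactor in the polymorphism algebra) guarantees that $\CSP(\Gm)$ can simulate, in the gap-preserving sense made precise by that theorem, a system of linear equations over $\zZ_p$ for some prime $p$. The upcoming Slofstra--Zhang theorem~\cite{SZ24:personal} supplies a second-kind gap for $\zZ_p$-linear equations for every prime $p$, and pulling this gap back through the simulation yields a second-kind gap for $\CSP(\Gm)$. When $p=2$, the Mermin--Peres square gives a first-kind gap and Slofstra~\cite{Slofstra20:jams} a third-kind gap for $\zZ_2$-linear equations, and again the simulation transfers both.

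The step I expect to require the most care is the application of Theorem~\ref{the:hsp-gap} in part~(2b): verifying that the specific algebraic object simulated (a module over $\zZ_p$ obtained from the unbounded-width structure) is one to which the theorem applies, rather than a more complicated G-set or subdirect power that might not admit a clean operator lift. The classical satisfiability side of the simulation is standard pp-machinery, but pushing operator solutions through the gadgets without breaking normality or the within-constraint commutation relations is the heart of the matter, and is exactly what the statement of Theorem~\ref{the:hsp-gap} is engineered to provide; the proof of the main theorem is then the relatively short bookkeeping exercise of lining up the three named results with the three branches of the case analysis.
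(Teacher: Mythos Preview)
Your proposal is correct and follows essentially the same structure as the paper's own argument: case-split via the dichotomy, invoke Theorem~\ref{the:no-gap} for bounded width, and invoke Theorem~\ref{the:hsp-gap} together with the named gap constructions (Mermin--Peres, Slofstra, Slofstra--Zhang) for unbounded width. The only cosmetic difference is in part~(1): rather than routing through the Atserias--Kolaitis--Severini Boolean classification, the paper uses Proposition~\ref{pro:abelian} directly, which already asserts that an NP-hard $\Gm$ simulates $\zZ_2$-linear equations, so the NP-hard case is absorbed into the same machinery as case~(2b) with $p=2$; your detour via~\cite{AKS19:jcss} lands in the same place since the Boolean gaps there are themselves built from Mermin--Peres and Slofstra.
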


We note that Theorem~\ref{thm:main-informal} implies a classification of Boolean
CSPs, thus recovering the result by Atserias et al.~\cite{AKS19:jcss}. Indeed,
a Boolean language $\Gm$ that is 0-valid, 1-valid, or comes from
\textsc{2-SAT}, \textsc{Horn-SAT}, or \textsc{Dual Horn-SAT} has bounded
width; also, for a Boolean $\Gm$ one gets $p=2$ in
Theorem~\ref{thm:main-informal}\,(2b).

Theorem~\ref{thm:main-informal} also implies
a classification for CSPs on graphs (such CSPs are also known as
\textsc{$H$-Coloring} problems): If $\Gm=\{\rel\}$, where $\rel$ is a binary
symmetric irreflexive relation, then either $R$ is bipartite or not. In the former case,
$\CSP(\Gm)$ has bounded width~\cite{Bulatov18:survey} and thus has no gaps of any kind. In the latter
case, $\CSP(\Gm)$ is NP-hard~\cite{HellN90,Bulatov05:tcs} and thus has gaps of
all three kinds.

\begin{corollary}\label{cor:graphs}
Let $G=(V,E)$ be a graph. If $G$ is bipartite, $\CSP(\{E\})$ has no satisfiability gap of any kind. Otherwise it has satisfiability gaps of all three kinds.
\end{corollary}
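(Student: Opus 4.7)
The plan is to derive the corollary as a direct instance of the main theorem (Theorem~\ref{thm:main-informal}), using two classical facts about graph CSPs: the Hell--Ne\v{s}et\v{r}il dichotomy and the characterization of bounded-width graph homomorphism problems.

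First I would split on whether $G$ is bipartite. If $G$ is bipartite, then the relation $E$ is preserved by a majority polymorphism (indeed, graph CSPs are known to have bounded width exactly when the core of $G$ is $K_1$, $K_2$, or has no edges, which is equivalent to $G$ being bipartite, see~\cite{Bulatov18:survey}). Hence $\CSP(\{E\})$ has bounded width, and part~(2a) of Theorem~\ref{thm:main-informal} applies: no satisfiability gaps of any kind.

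If $G$ is not bipartite, then by the Hell--Ne\v{s}et\v{r}il theorem~\cite{HellN90} (see also~\cite{Bulatov05:tcs} for an algebraic proof), $\CSP(\{E\})$ is NP-hard. Part~(1) of Theorem~\ref{thm:main-informal} then yields satisfiability gaps of all three kinds.

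The only thing to take care of is that the two cases are exhaustive and mutually exclusive, which follows because bipartiteness is a decidable yes/no property of $G$; the analysis has no further obstacle since both directions reduce to already-established classification results cited in the excerpt. Thus the corollary follows immediately from the main theorem without any additional argument specific to graphs.
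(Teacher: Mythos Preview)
Your proof is correct and follows essentially the same route as the paper: bipartite $\Rightarrow$ bounded width (via~\cite{Bulatov18:survey}) $\Rightarrow$ no gaps by Theorem~\ref{thm:main-informal}\,(2a); non-bipartite $\Rightarrow$ NP-hard (via~\cite{HellN90,Bulatov05:tcs}) $\Rightarrow$ gaps of all three kinds by Theorem~\ref{thm:main-informal}\,(1). The only cosmetic remark is that your claim ``$E$ is preserved by a majority polymorphism'' is literally true only for the core of $G$, but your parenthetical already handles this correctly and bounded width is invariant under passing to the core.
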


We note that the remaining cases not covered by Theorem~\ref{thm:main-informal}
(i.e., $\CSP(\Gm)$ that are tractable but have unbounded width and can simulate
linear equations over an Abelian group of prime order $p>2$ but not of order
$2$) relate to the known difficulties of establishing satisfiability gaps of the
first kind for linear equations over $\mathbb{Z}_p$ with $p>2$, e.g., there are
results showing that such an instance cannot be obtained from generalized Pauli
matrices~\cite{Qassim20:jpa} and beyond~\cite{Frembs22:arxiv,chung2024simplicial}.\footnote{We
remark that our results show that one can obtain gap instances from homomorphic
preimages of already established gap instance (cf. Step~4 in~\Cref{sec:gap}).}

\medskip
The proof of Theorem~\ref{thm:main-informal} relies on several ingredients. Firstly, we observe that
results establishing that primitive positive definability preserve
satisfiability gaps~\cite{AKS19:jcss} can be lifted from Boolean to arbitrary finite
domains.\footnote{\cite[Theorem~6.4]{Mastel24:stoc} considers subdivisions, a
special case of conjunctive definitions, which in turn are a special case of
primitive positive definitions (without existential quantifiers). Our result
shows that~\cite[Theorem~6.4]{Mastel24:stoc} extends to conjunctive
definitions.}
Secondly, for CSPs of bounded width we show that there is no difference
between classical and operator satisfiability by simulating the inference by the
so-called Singleton Linear Arc Consistency (SLAC) algorithm in polynomial
equations. We note that while there are several (seemingly stronger) algorithms
for CSPs of bounded width, our proof relies crucially on the special structure of SLAC and
the breakthrough result of Kozik that SLAC solves all CSPs of bounded
width~\cite{Kozik21:sicomp}. Thirdly, to prove that NP-hard CSPs and certain CSPs of unbounded width have
satisfiability gaps we use the algebraic approach to CSPs, namely, we show that not
only primitive positive definitions but also other reductions,
namely going to the core, adding constants, and restrictions to subalgebras and
factors, preserve satisfiability gaps.

We note that there is a significant hurdle to go from Boolean CSPs to CSPs
over arbitrary finite domains. While any non-Boolean CSP can be Booleanized via
indicator variables and extra constraints, such constructions do not
immediately imply classifications of non-Boolean CSPs as the ``encoding
constraints'' are intended to be used in only a particular way. Indeed, while
the complexity of Boolean CSPs was established by Schaefer in
1978~\cite{Schaefer78:stoc} and the complexity of CSPs on three-element domains
was established by Bulatov in 2002~\cite{Bulatov02:focs,Bulatov06:jacm}, the
dichotomy for all finite domains was only established in
2017~\cite{Bulatov17:focs,Zhuk17:focs,Zhuk20:jacm}.
Similarly for other variants of CSPs and different notions of tractability,
results on Boolean domains, including the work of Atserias et
al.~\cite{AKS19:jcss}, rely crucially on the explicit knowledge of the structure
of relations on Boolean domains (established by Post~\cite{Post41}), which is not known for non-Boolean CSPs.
Indeed, on the tractability side, the structure of relations in tractable Boolean CSPs is simple and very well understood; on the intractability side, reductions based only on primitive positive definitions suffice for a complete classification of Boolean CSPs. Neither of these two facts is true for non-Boolean CSPs.

We find it fascinating that bounded width might be the borderline for
satisfiability gaps for CSPs, thus linking a notion coming from a
natural combinatorial algorithm for CSPs with a foundational topic in quantum
computation. 
Indeed, if satisfiability gaps of the first kind exist for linear equations over
$\mathbb{Z}_p$ with $p>2$ then our main result implies that all CSPs of
unbounded width admit satisfiability gaps of all three kinds.
This is yet another result indicating the fundamental nature of bounded
width, which captures not only the power of the local consistency
algorithm~\cite{LaroseZadori07:au,Maroti2008existence,Bulatov09:width,Barto14:local}
as conjectured in~\cite{Feder98:monotone} with links to Datalog, pebble games,
and logic~\cite{Feder98:monotone,Kolaitis00:jcss}, but also robust solvability
of CSPs~\cite{Barto16:sicomp}, exact solvability of valued CSPs by
LP~\cite{tz17:sicomp} and SDP~\cite{tz18} relaxations, and now also
possibly satisfiability via operators.

\paragraph{Related work}
Paddock and Slofstra~\cite{Paddock25:ijm} 
streamlined the results from~\cite{AKS19:jcss}, and also gave 
an overview of other notions of
satisfiability and their relationship, including the celebrated MIP$^*$=RE
result of Ji et al.~\cite{Ji20:arxiv,Ji22:cacm}.

There is some very recent work on approximability of operator CSPs, initiated by
 Culf, Mousavi, and Spirig~\cite{Culf24:focs}. Mousavi and Spirig studied a quantum analog of the
unique games conjecture~\cite{Mousavi25:itcs}, and Man\v{c}inska, Spaas, Spirig and Vernooij studied gap-preserving
approximation reductions~\cite{Mancinska25:focs}

While we used the magic square to construct a gap of the first kind, other
systems based on linear equations could be used instead, e.g. the pentagram and
the BCS in~\cite{Slofstra20:jams}, cf. also~\cite{Zhang24:nature}.
Finally, we note that the notion of quantum homomorphism from the work of Man\v{c}inska and
Roberson~\cite{Mancinska16:jctb}, as well as the recent work of
Ciardo~\cite{Ciardo24:lics}, is different from ours except when $d=2$. In the
case of $d=2$, as considered in~\cite{AKS19:jcss}, operator assignments have to
be self-adjoint and can be transformed to quantum homomorphisms. When $d\ge3$,
the straightforward translation through Fourier transform in cyclic groups
result in projectors that are not self-adjoint, and therefore do not give rise
to a quantum homomorphism. 
However, the overall results are somewhat similar, e.g., Ciardo shows~\cite{Ciardo24:lics} that CSPs of
bounded width have no quantum advantage (over finite-dimensional spaces) by establishing a so-called minion
homomorphism --- relying crucially on idempotency --- to the minion capturing the basic SDP relaxation, which is known
to solve CSPs of bounded width by the work of Barto and Kozik~\cite{Barto16:sicomp}. The methods used
in~\cite{Ciardo24:lics} rely, unlike methods and results in this paper, on being in the finite-dimensional case.

\paragraph{Paper structure}

After introducing all necessary notation and definitions
in~\Cref{sec:preliminaries}, we will in~\Cref{sec:operator-CSP} define operator
CSPs formally. Then, in~\Cref{sec:overview} we will present our results, explain
the intuition behind the proofs by sketching some of the main ideas. The rest of
the paper will then give full details of all proofs. The reader is encouraged to
read~\Cref{sec:overview} before diving into the details in~\Cref{sec:no-gap},
dealing with CSPs of bounded width, \Cref{sec:operator-pp}, proving properties
of primitive positive definitions, and finally \Cref{sec:gap}, dealing with CSPs
of unbounded width.

\section{Preliminaries}\label{sec:preliminaries}


\paragraph{CSPs}
An instance of the \emph{constraint satisfaction problem} (CSP) is a triple
$\cP=(V,D,\cC)$, where $V$ is a set of variables, $D$ is a set of domain values,
and $\cC$ is a set of constraints. Every constraint in $\cC$ is a pair
$\ang{\bs,\rel}$, where $\bs\in V^r$ is the constraint scope and $\rel\subseteq
D^r$ is the constraint relation of arity $r=ar(R)$. Given a CSP instance $\cP$, the
task is to determine whether there is an assignment $s:V\to D$ that assigns to
every variable from $V$ a value from $D$ in such a way that
all the constraints are satisfied; i.e.,
$(s(v_1),\ldots,s(v_r))\in\rel$ for every constraint
$\ang{(v_1,\ldots,v_r),\rel}\in\cC$. An assignment satisfying all the
constraints is also called a \emph{solution}.

Let $D$ be a fixed finite set. A finite set $\Gm$ of relations over $D$ is
called a \emph{constraint language} over $D$. We denote by $\CSP(\Gm)$ the
class of CSP instances in which all constraint relations belong to $\Gm$. A mapping $\vr:D\to D$ is an \emph{endomorphism} or \emph{unary polymorphism} of $\Gm$ if, for any $\rel\in\Gm$ (say, $r$-ary) and any $(\vc ar)\in\rel$, the tuple $(\vr(a_1)\zd\vr(a_r))$ belongs to $\rel$.

\paragraph{Bounded width}

Intuitively, CSPs of bounded width are those CSPs  for which unsatisfiable
instances can be refuted via local propagation. An obvious obstruction to bounded width, in
addition to NP-hard CSPs, is CSPs encoding systems of linear
equations~\cite{Feder98:monotone}. A celebrated result of Barto and Kozik
established that CSPs of bounded width are precisely those CSPs that cannot
simulate, in a precise sense, linear equations~\cite{Barto14:local}. 
While bounded width has several
characterizations~\cite{LaroseZadori07:au,Maroti2008existence,Bulatov09:width,Barto14:local,Kozik15:au},\footnote{One characterization implies that checking whether a constraint language has bounded
width is decidable~\cite{Kozik15:au}.} we will rely on the result of
Kozik~\cite{Kozik21:sicomp} that established that every CSP of bounded width can
be solved through constraint propagation of a very restricted type, so-called
\emph{Singleton Linear Arc-Consistency} (SLAC). 

In order to explain SLAC, we need to start with
\emph{Arc-Consistency} (AC). AC is one of the basic levels of local consistency
notions. It is a property of a CSP and also an algorithm
turning an instance $\cP\in\CSP(\Gm)$ into an equivalent subinstance
$\cP'\in\CSP(\Gm)$ that satisfies the AC property. Intuitively, given an instance $\cP=(V,D,\cC)\in\CSP(\Gm)$,
the AC algorithm starts with setting the domain $D_v=D$ for every variable $v\in
V$. Then, it prunes the sets $\{D_v\}_{v\in V}$ in an iterative fashion,
terminating (in polynomial time in the size of $\cP$) with a maximal subinstance
of $\cP$ that satisfies the AC condition; namely, for every variable $v\in V$, every value $a\in
D_v$, and every constraint $\ang{\bs,\rel}\in\cC$ such that $\bs[i]=v$ for some $i$, there is a tuple $\ba\in\rel$ with
$\ba[i]=a$. 
The resulting subinstance $\cP'$ is \emph{equivalent} to $\cP$
in the sense that $\cP$ has a solution if and only if $\cP'$ has a solution. We
say that AC \emph{solves} an instance $\cP$ if $\cP$ has a solution whenever
$\cP'$ is consistent; i.e., none of the sets $D_v$ in $\cP'$ is empty.
AC is not strong enough to solve all CSPs of bounded width (e.g., \textsc{2-SAT}) but its
full power is understood~\cite{Feder98:monotone,Dalmau99}.

Equivalently, Arc-Consistency can be described in terms of a Datalog
program~\cite{Kolaitis95:jcss}. In general, a Datalog program derives facts about an
instance $\cP\in\CSP(\Gm)$
using a fixed set of rules that depend on the constraint language $\Gm$. The
rules are defined using relations from $\Gm$ called
\emph{extensional databases} (EDBs) as well as a number of auxiliary relations
called \emph{intensional databases} (IDBs). Each rule consists of a \emph{head},
which is a single IDB, and the \emph{body}, which is a sequence of IDBs and
EDBs. The execution of the program updates the head IDB whenever the body of the
rule is satisfied, that is, every EDB and IDB in the body is satisfied. The computation of the program ends when no relation can be
updated, or when the goal predicate is reached. If we require that a Datalog
program should only include unary IDBs and that every rule should have at most one EDB
then the power of the program for CSPs amounts to AC. 
In detail, the AC Datalog program has a unary relation (IDB) $T_S(v)$ for each 
subset $S\sse D$. Then for every $\ang{(\vc vr),\rel}\in\cC$ and for any
IDBs $T_{S_1}\zd T_{S_m}, T_S$ the program contains the rule
\[
T_S(v_i):-\rel(\vc vr),T_{S_1}(v_{i_1})\zd T_{S_m}(v_{i_m})
\]
if for any $\ba\in\rel$ such that $\ba[i_j]\in S_j$ we have $\ba[i]\in S$.
The Arc-Consistency algorithm is \emph{Linear} if $m=1$ for every rule in the
corresponding Datalog program. 

The Singleton Arc-Consistency (SAC) algorithm is a modification of
the AC algorithm~\cite{DB97}. SAC updates the sets $\{D_v\}_{v\in V}$
as follows: it removes $a$ from $D_v$ if the current
instance augmented with the constraint fixing the value $a$ to the variable $v$ is
found inconsistent by the AC algorithm. 
Finally, the Singleton Linear Arc-Consistency algorithm is a modification of SAC
(due to Kozik~\cite{Kozik21:sicomp} and Zhuk~\cite{Zhuk20:jacm}) that uses the
Linear AC algorithm rather than the AC algorithm. Kozik has shown that SLAC
solves all CSPs of bounded width~\cite{Kozik21:sicomp}.
As with AC, SLAC is not only an algorithm but also a condition (of the instance
$\cP'$ produced by the algorithm). We say that an instance
$\cP$ is SLAC-consistent if the SLAC algorithm, given in Figure~\ref{fig:slac},
does not change the instance.

\begin{figure}
\begin{itemize}\setlength\itemsep{-3pt}
\item[Input:]
A CSP instance $\cP=(V,D,\cC)$.
\item[Output:]
A SLAC-consistent instance $\cP'$ equivalent to $\cP$
\item[1.]
{\bf for each} $v\in V$ {\bf set} $D_v=D$
\item[2.]
$\cP'=\cP+\sum_{v\in V}\ang{v,D_v}$
\item[3.]
{\bf until} the process stabilizes
\begin{itemize}
\item[3.1]
{\bf pick} a variable $v\in V$
\item[3.2]
{\bf for each} $a\in D_v$ {\bf do}
\begin{itemize}
\item[3.2.1]
{\bf run} Linear Arc-Consistency on $\cP'+\ang{v,\{a\}}$
\item[3.2.2]
{\bf if} the problem is inconsistent, {\bf set} $D_v=D_v-\{a\}$
\end{itemize}
\item[]
{\bf endfor}
\end{itemize}
\item[]
{\bf enduntil}
\item[4.]
{\bf return} $\cP'$
\end{itemize}
\caption{SLAC.}\label{fig:slac}
\end{figure}

\paragraph{Multi-dimensional Fourier transform}\label{sec:multi-FT}

Let $U_d$ be the set of $d$-th roots of unity, that is,
$U_d=\{\ld_k=e^{\frac{2\pi i}d k}\mid 0\le k<d\}$.
The \emph{Fourier transform} (FT) of a function $f:U_d^n\to U_d$ is defined, for 
$\ba\in U_d^n$, as
$
FT(f,\ba)=\sum_{\bb\in U_d^n}f(\bb)\ld_1^{\ba\cdot\bb}$.
Then it is not hard to see that 
$
f(\ba)=\sum_{\bb\in U_d^n}FT(f,\bb)\ld_1^{\ba\cdot\bb}$,
which gives rise to a representation of $f$ by a polynomial
\[
f(\ov x)=\sum_{\bb\in U_d^n}FT(f,\bb)\ov x^{\bb'},
\]
where $\bb'=(\vc kn)$ and $\bb[j]=\ld_{k_j}$. This representation is
unique~\cite{ODonnell14:book}.

\paragraph{Linear operators and Hilbert spaces}
Let $V$ be a complex vector space. A \emph{linear operator} on $V$ is a linear
map from $V$ to $V$. The identity linear operator on $V$ is denoted by $I$. The
linear operator that is identically $0$ is denoted by $0$. 
Let $A$ and $B$ be two linear operators. Their pointwise addition is denoted by $A+B$, their
composition is denoted by $AB$, and the pointwise scaling of $A$ by a scalar
$c\in\zC$ is denoted by $cA$. All of these are linear operators and thus we can
plug linear operators in a polynomial $P$. 
We say that operators $A$ and $B$ \emph{commute} if
$AB=BA$. A collection of linear operators $A_1,\ldots,A_n$ \emph{pairwise commute} if $A_iA_j=A_jA_i$ for every $i,j\in [n]$. 
If $A_1,\ldots,A_n$ pairwise commute then $P(A_1,\ldots,A_n)$ is well defined.
$\zC[x_1,\ldots,x_n]$ denotes the ring of polynomials with complex coefficients and commuting variables $x_1,\ldots,x_n$. 
A linear operator is \emph{bounded} if it maps bounded subsets to bounded
subsets. Let $A$ be a densely defined linear operator, i.e., $A$ is defined
almost everywhere. We denote by $A^*$ its \emph{adjoint}~\cite{Folland94}
and call $A$ \emph{normal} if $A$ commutes with its adjoint, i.e., $AA^*=A^*A$.
A linear operator $U$ is called \emph{unitary} if $UU^*=U^*U=I$, the identity
operator.

A \emph{Hilbert space} is a complex vector space with an inner product whose
norm induces a complete metric.
All Hilbert spaces of finite dimension $d$ are
isomorphic to $\zC^d$ with the standard complex inner product. Thus, 
after the choice of basis, linear operators on a $d$-dimensional Hilbert space
can be identified with matrices in $\zC^{d\times d}$, and operator composition
becomes matrix multiplication. 
Thus, for Hilbert spaces of finite dimension we will
freely switch between the operator and matrix terminology.
A \emph{diagonal} matrix has all off-diagonal
entries equal to $0$. 
For a matrix $A$, the adjoint operator $A^*$ is the conjugate transpose.
Recall that $(AB)^*=B^*A^*$.
We will use the following form of the so-called \emph{Strong Spectral
Theorem}.
\begin{theorem}[\cite{Halmos2017introduction}]
\label{the:SST}
Let $\vc Ar$ be normal matrices. If $\vc Ar$ pairwise
commute then there exists a unitary matrix $U$ and diagonal matrices 
$\vc Er$ such that $A_i=U^{-1}E_iU$ for every $i\in[r]$.
\end{theorem}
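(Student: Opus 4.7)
The plan is to prove the theorem by induction on $r$, the number of operators. The base case $r=1$ is just the standard spectral theorem for a single normal matrix: every normal matrix on $\zC^n$ is unitarily diagonalizable, which follows for instance from Schur's triangularization together with normality forcing the triangular form to in fact be diagonal.

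For the inductive step, fix $r\ge 2$ and assume the result for any collection of $r-1$ pairwise commuting normal matrices. Given pairwise commuting normal matrices $\vc Ar$, first apply the base case to $A_1$ to obtain its eigenspace decomposition $\zC^n=\bigoplus_\lambda V_\lambda$, where $V_\lambda$ is the $\lambda$-eigenspace of $A_1$. The key observation is that each $V_\lambda$ is invariant under every $A_i$: if $A_1 v=\lambda v$, then $A_1(A_i v)=A_i A_1 v=\lambda A_i v$, so $A_i v\in V_\lambda$.

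To apply the inductive hypothesis on each $V_\lambda$, I need the restrictions $A_i|_{V_\lambda}$ to remain normal. This requires that $A_i^*$ also preserves $V_\lambda$, for which I use the fact that on finite-dimensional spaces the adjoint of a normal matrix can be expressed as a polynomial in the matrix itself, via Lagrange interpolation on the spectral projections $P_\mu$ of $A_i$ (so that $A_i^*=\sum_\mu\bar\mu P_\mu$ becomes a polynomial in $A_i$). Since $A_1$ commutes with $A_i$, it commutes with every polynomial in $A_i$, hence with $A_i^*$; therefore $V_\lambda$ is $A_i^*$-invariant as well. Consequently, each restriction $A_i|_{V_\lambda}$ is a normal operator on $V_\lambda$, and the collection $\{A_i|_{V_\lambda}\}_{i=2}^{r}$ is still pairwise commuting. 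By the inductive hypothesis, there is a unitary $U_\lambda$ on $V_\lambda$ and diagonal matrices $E_2^{(\lambda)},\ldots,E_r^{(\lambda)}$ with $A_i|_{V_\lambda}=U_\lambda^{-1}E_i^{(\lambda)}U_\lambda$ for every $i\ge 2$.

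Finally, assemble the global unitary $U$ by combining the $U_\lambda$'s in a block-diagonal fashion with respect to the orthogonal decomposition $\zC^n=\bigoplus_\lambda V_\lambda$. On each $V_\lambda$, $A_1$ acts as the scalar $\lambda I$, hence is already diagonal in any orthonormal basis of $V_\lambda$; the diagonal matrices $E_1,\ldots,E_r$ are then obtained by concatenating the blocks $\lambda I_{V_\lambda}$ and $E_i^{(\lambda)}$ across all eigenvalues $\lambda$ of $A_1$. The resulting $U$ simultaneously diagonalizes all of $\vc Ar$, completing the induction. The main subtlety is verifying that the restrictions stay normal, which hinges on the adjoint-as-polynomial observation --- essentially the finite-dimensional incarnation of the Fuglede--Putnam theorem --- and is the only nontrivial algebraic step in the argument.
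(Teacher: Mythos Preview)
Your argument is correct and is the standard inductive proof of simultaneous unitary diagonalization of commuting normal matrices. Note, however, that the paper does not give its own proof of this statement: it is quoted as a classical result from the literature (Halmos), so there is no ``paper's proof'' to compare against. Your proof would serve perfectly well as a self-contained justification; one minor simplification is that you do not strictly need the adjoint-as-polynomial trick to see that $A_i^*$ preserves each $V_\lambda$ --- since the eigenspaces of the normal operator $A_1$ form an orthogonal decomposition of $\zC^n$ and $A_i$ preserves each block, $A_i$ is block-diagonal with respect to this decomposition, and hence so is $A_i^*$. Either route works.
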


In the infinite-dimensional case, the Strong Spectral Theorem is more
complicated. Firstly, it involves general $L^2$- and $L^\infty$-spaces. Let
$(\Omega,\cM,\mu)$ be a measure space, i.e., $\Omega$ is a set, $\cM$ is a
$\sigma$-algebra on the set $\Omega$ (i.e. $\cM$ is a nonempty collection of
subsets of $\Omega$ containing $\Omega$ and closed under complements, countable intersections, and
countable unions), and $\mu$ is a measure on $(\Omega,\cM)$ (i.e. $\mu$ is
nonnegative and countably additive). Then $L^2(\Omega,\mu)$ denotes
the collection of square integrable measurable functions, up to almost
everywhere equality, and $L^\infty(\Omega,\mu)$ denotes the
collection of essentially bounded measurable functions, up to almost everywhere equality; all measure-theoretic terms in
these definitions refer to $\mu$, cf.~\cite{Folland94} for details. Instead of diagonal matrices, the General Strong Spectral Theorem uses \emph{multiplication operators} on a $L^2(\Omega,\mu)$-space. Let $V$ be a complex vector space of functions mapping an index set $X$ to $\zC$. A multiplication operator of $V$ is a linear operator whose value at a function $f:X\to\zC$ in $V$ is given by pointwise multiplication by a fixed function $a:X\to\zC$. In other words, the multiplication operator given by $a$ is
\[
(T_a(f))(x)=a(x)f(x)\qquad\text{for each $x\in X$}.
\]

\begin{theorem}[General Strong Spectral Theorem~\cite{Folland94}]\label{the:general-sst}
Let $\vc Ar$ be normal bounded linear operators on a Hilbert space $\cH$. If $\vc Ar$ pairwise commute then there exist measure space $(\Omega,\cM,\mu)$, a unitary map $U:\cH\to L^2(\Omega,\mu)$, and functions $\vc ar\in L^\infty(\Omega,\mu)$ such that $A_i=U^{-1}T_{a_i}U$ for every $i\in[r]$.
\end{theorem}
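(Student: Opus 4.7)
The plan is to deduce the joint spectral theorem from Gelfand--Naimark duality for commutative unital $C^*$-algebras combined with a cyclic decomposition of $\cH$. Consider the norm-closed unital $*$-subalgebra $\cA\sse B(\cH)$ generated by $\{A_i,A_i^*\}_{i=1}^r$. In order to apply Gelfand--Naimark I need $\cA$ to be commutative, and the nontrivial requirement is $A_iA_j^*=A_j^*A_i$ for $i\ne j$. This is exactly the content of the Fuglede--Putnam theorem: since $A_j$ is normal and $A_iA_j=A_jA_i$, the commutation passes to $A_j^*$. Hence $\cA$ is a commutative unital $C^*$-algebra, and Gelfand--Naimark produces a compact Hausdorff space $K$ (the character space of $\cA$) together with an isometric $*$-isomorphism $\Phi:\cA\to C(K)$. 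Set $a_i:=\Phi(A_i)\in C(K)$; by isometry, $\|a_i\|_\infty=\|A_i\|<\infty$.

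Next I would perform a cyclic decomposition. Using Zorn's lemma, pick a maximal family $\{\xi_\alpha\}_{\alpha\in J}$ of unit vectors whose cyclic subspaces $\cH_\alpha:=\ov{\cA\xi_\alpha}$ are pairwise orthogonal; maximality forces $\cH=\bigoplus_{\alpha\in J}\cH_\alpha$. On each cyclic piece, the positive linear functional $f\mapsto\ang{\Phi^{-1}(f)\xi_\alpha,\xi_\alpha}$ on $C(K)$ is represented, via Riesz--Markov--Kakutani, by a finite positive Borel measure $\mu_\alpha$ on $K$. The assignment $\Phi^{-1}(f)\xi_\alpha\mapsto f$ extends by density to a unitary $U_\alpha:\cH_\alpha\to L^2(K,\mu_\alpha)$, which intertwines the action of $\cA$ with multiplication by elements of $C(K)$; in particular, $U_\alpha A_i U_\alpha^{-1}=T_{a_i}$ for every $i\in[r]$.

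Finally, I would assemble the pieces. Let $\Omega:=\bigsqcup_{\alpha\in J}K$ with the disjoint-union $\sigma$-algebra $\cM$ and measure $\mu$ whose restriction to the $\alpha$-th copy of $K$ is $\mu_\alpha$. Then $L^2(\Omega,\mu)\cong\bigoplus_\alpha L^2(K,\mu_\alpha)$ canonically, and $U:=\bigoplus_\alpha U_\alpha$ is a unitary map $\cH\to L^2(\Omega,\mu)$. Defining $a_i\in L^\infty(\Omega,\mu)$ to agree with $\Phi(A_i)$ on every copy of $K$ yields $A_i=U^{-1}T_{a_i}U$ for every $i\in[r]$, as required.

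The main obstacle is the commutativity of $\cA$: the hypothesis supplies only pairwise commutation of the $A_i$, but one needs the adjoints to be absorbed into the commutative algebra as well, and this demands the Fuglede--Putnam theorem rather than anything elementary. A secondary technical point is that when $\cH$ is nonseparable the index set $J$, and hence $\Omega$, can be very large, so I would be careful not to slip in an implicit separability assumption when invoking cyclic decomposition or Riesz--Markov--Kakutani; but the disjoint-union construction is a legitimate measure space and the argument goes through as stated.
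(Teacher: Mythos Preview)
The paper does not prove this theorem; it is quoted as a black-box result from Folland and used as a tool throughout Sections~\ref{sec:operator-CSP}--\ref{sec:gap}. So there is no ``paper's own proof'' to compare against.

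That said, your outline is a correct and standard route to the result. You have identified the genuine crux: the hypothesis only gives $A_iA_j=A_jA_i$, and to make the generated $C^*$-algebra commutative you must pull the adjoints in, which is precisely Fuglede--Putnam. Once $\cA$ is commutative, Gelfand--Naimark plus cyclic decomposition plus Riesz--Markov--Kakutani is the textbook path (and is essentially how Folland does it, though he organizes the bookkeeping slightly differently via spectral measures). Your remark about nonseparable $\cH$ is also apt: the disjoint-union measure space is the correct way to handle an uncountable cyclic family, and nothing in the argument requires $\sigma$-finiteness of $\mu$ or separability of $L^2(\Omega,\mu)$.
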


\section{Operator CSP}\label{sec:operator-CSP}

In order to relax the notion of satisfiability, we first consider CSPs on $U_d$
for some $d$ and represent CSPs via polynomials. Let $\Gm$ be a constraint 
language over $U_d$. Every constraint $\ang{\bs,\rel}$ of an instance 
$\cP=(V,U_d,\cC)$ of $\CSP(\Gm)$ is represented by a polynomial 
$P_\rel(\bs)$ that represents the characteristic function $f_\rel$ of
$\rel$:
\[
f_\rel(\bs)=\left\{\begin{array}{ll}
  \ld_0, & \text{if $\rel(\bs)$ is true,}\\
  \ld_1, & \text{otherwise}.
\end{array}\right.
\]
We note that our choice of the polynomial representation is somewhat arbitrary
but other choices lead to the same results. (For instance, \cite{AKS19:jcss}
studied the $d=2$ case and used $\ld_1$ to represent true.) 

An operator $A$ on a Hilbert space $\cH$ is a \emph{normal operator of order
$d$} if $A$ is normal and $A^d=I$. Operator assignment to the instance $\cP$ on a Hilbert space $\cH$ is a mapping 
that assigns to every variable from $V$ an operator $A_v$ on $\cH$ such that
\begin{itemize}
\item[(a)]
$A_v$ is a normal operator of order $d$ for every $v\in V$;
\item[(b)]
the operators $A_{v_1}\zd A_{v_r}$ pairwise commute
for every constraint  $\ang{(\vc vr),\rel}\in\cC$.
\end{itemize}

\noindent
We call an operator assignment $\{A_v\}$ \emph{satisfying for $\cP$} if 
$P_\rel(A_{v_1}\zd A_{v_r})=I$ for every constraint $\ang{(\vc vr),\rel}\in\cC$.
Let $\cP$ be a CSP instance. 
Following the terminology of Atserias et al.~\cite{AKS19:jcss}, we say that
$\cP$ has a \emph{satisfiability gap of the first kind} if $\cP$ is not
satisfiable over $U_d$ but is satisfiable by an operator assignment on 
a finite-dimensional Hilbert space. Similarly, we say that 
$\cP$ has a \emph{satisfiability gap of the second kind} if $\cP$ is not
satisfiable over $U_d$ but is satisfiable by an operator assignemnt on an
infinite-dimensional Hilbert space. Finally, we say that 
$\cP$ has a \emph{satisfiability gap of the third kind} if $\cP$
is not satisfiable on finite-dimensional Hilbert spaces but is satisfiable by an
operator assignment on 
an infinite-dimensional Hilbert space.
We say that $\CSP(\Gm)$ has a satisfiablity gap of the $i$-th
kind, $i=1,2,3$, if there is at least one instance $\cP\in\CSP(\Gm)$ witnessing
such a gap.
By definition, a gap of the first kind implies a gap of the second kind. Also, a
gap of the third kind implies a gap of the second kind. Put differently, if
$\Gm$ has no gap of the second kind then it has no gap of the first or third
kind either.

We shall repeatedly use the following simple lemma.

\begin{lemma}\label{lem:lemma-3}
Let $\vc xr$ be variables, let $\vc Qm,Q$ be polynomials in $\zC[\vc xr]$,
and let $\cH$ be a Hilbert space. If every assignment over $U_d$ that 
satisfies the equations $Q_1=\dots=Q_m=0$ also satisfies the equation
$Q=0$, then every fully commuting operator assignment on $\cH$ that 
satisfies the equations $Q_1=\dots=Q_m=0$ also satisfies the equation
$Q=0$.
\end{lemma}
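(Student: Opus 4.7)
The plan is to reduce the statement to its classical counterpart via simultaneous diagonalization. Concretely, I would apply the Strong Spectral Theorem to the fully commuting operator assignment, so that polynomial equations on the operators become \emph{pointwise} equations in the eigenvalues (or in the values of the multiplication coefficients), which are themselves classical assignments over $U_d$; the classical hypothesis of the lemma can then be invoked coordinate by coordinate.

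First, let $A_i$ denote the operator assigned to $x_i$. Since the $A_i$ pairwise commute and each is normal with $A_i^d=I$ (which forces $\sigma(A_i)\subseteq U_d$ and hence $A_i$ bounded), Theorem~\ref{the:SST} in finite dimensions, or Theorem~\ref{the:general-sst} in general, supplies a unitary $U$ together with either diagonal matrices $E_1,\ldots,E_r$ or multiplication operators $T_{a_1},\ldots,T_{a_r}$ on some $L^2(\Omega,\mu)$, with $A_i=U^{-1}E_iU$, respectively $A_i=U^{-1}T_{a_i}U$. I would then observe that conjugation by $U$ is a ring homomorphism on the commuting algebra generated by the $A_i$'s, so that for any polynomial $P\in\zC[x_1,\ldots,x_r]$
\[
P(A_1,\ldots,A_r)=U^{-1}P(E_1,\ldots,E_r)U,
\]
and that evaluation is then pointwise: $P(E_1,\ldots,E_r)$ is diagonal with $k$-th entry $P(\lambda_1^{(k)},\ldots,\lambda_r^{(k)})$, where $\lambda_i^{(k)}$ is the $k$-th diagonal entry of $E_i$; analogously, $P(T_{a_1},\ldots,T_{a_r})=T_{P(a_1,\ldots,a_r)}$.

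Next, I would use $A_i^d=I$ to conclude that $\lambda_i^{(k)}\in U_d$ for every $k$ (respectively, $a_i(\omega)\in U_d$ for $\mu$-a.e.\ $\omega$), so that at each diagonal coordinate (resp.\ $\mu$-a.e.\ point) the tuple $(\lambda_1^{(k)},\ldots,\lambda_r^{(k)})$ (resp.\ $(a_1(\omega),\ldots,a_r(\omega))$) is a classical assignment over $U_d$. The hypothesis $Q_j(A_1,\ldots,A_r)=0$ for each $j=1,\ldots,m$ then says that every such classical assignment satisfies all the $Q_j$, so by the assumption of the lemma it also satisfies $Q=0$. Hence $Q(E_1,\ldots,E_r)$ is the zero diagonal matrix (resp.\ $T_{Q(a_1,\ldots,a_r)}=T_0=0$), and conjugating back yields $Q(A_1,\ldots,A_r)=0$.

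The main obstacle is purely measure-theoretic and arises only in the infinite-dimensional case: I would need to take the intersection of the finitely many $\mu$-conull sets on which $A_i^d=I$ and each $Q_j(A_1,\ldots,A_r)=0$ hold pointwise, and invoke the standard fact that a multiplication operator $T_b$ on $L^2(\Omega,\mu)$ is zero iff $b=0$ $\mu$-a.e. Since $m$ and $r$ are finite, this bookkeeping is routine, and the finite-dimensional case falls out as a clean special case of the same argument.
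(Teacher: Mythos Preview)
Your proposal is correct and follows essentially the same route as the paper: simultaneous diagonalization via the (General) Strong Spectral Theorem, reduction to pointwise classical assignments in $U_d$ using $A_i^d=I$, application of the classical hypothesis coordinatewise (resp.\ $\mu$-a.e.), and conjugation back. The paper's proof is organized identically into a finite-dimensional and an infinite-dimensional case, with the same measure-theoretic bookkeeping you anticipate.
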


The proof of this lemma is very similar to that of the analogous claim
in~\cite[Lemma~3]{AKS19:jcss}, where it was established for $d=2$.
The main difference is to 
use $A^d=I$ rather than $A^2=I$. For the sake of completeness we give the proof here.

\noindent\begin{proof}
{\it Finite-dimensional case.}
Suppose that the conditions of the lemma hold and $\vc Ar$ are pairwise
  commuting operators such that the equations $Q_1=\dots=Q_m=0$ are
  true when these matrices are assigned to $\vc xr$. Then, since $\vc Ar$ are
  normal and commute, by Theorem~\ref{the:SST} there is a unitary matrix $U$
  such that $E_i=UA_iU^{-1}$ is a diagonal matrix. Then, $E_i^d=I$, because $A_i^d=I$. Therefore, every diagonal entry $E_i(jj)$ belongs to $U_d$. For every equation $Q_\ell$ we have $Q_\ell(\vc Ar)=0$ implying
  $Q_\ell(\vc Er)=UQ_\ell(\vc Ar)U^{-1}=0$. Since every $E_i$ is diagonal, for every $j$ it also holds $Q_\ell(E_1(jj)\zd E_r(jj))=0$. By the conditions of the lemma we also have $Q((E_1(jj)\zd E_r(jj))=0$, and $Q(\vc Ar)=U^{-1}Q(\vc Er)U=0$.

\smallskip

{\it Infinite-dimensional case.}
Suppose that the conditions of the lemma hold and $\vc Ar$ are pairwise commuting  operators such that the equations $Q_1=\dots=Q_m=0$ are true when these operators are assigned to $\vc Xr$. Then, since $\vc Ar$ are normal and commute, by  Theorem~\ref{the:general-sst} there exist a measure space $(\Omega,\cM,\mu)$, a unitary map $U:\cH\to L^2(\Omega,\mu)$, and functions $\vc ar\in L^\infty(\Omega,\mu)$ such that, for the multiplication operators $E_i=T_{a_i}$ of $L^2(\Omega,\mu)$, the equalities $A_i=U^{-1}E_iU$ hold for $i\in[r]$. This implies $UA_iU^{-1}=E_i$. As $A_i^d=I$ we also have $E_i^d=I$. Therefore $a_i(\omega)^d=1$ for almost all $\omega\in\Omega$, or more formally $\mu(\{\omega\in\Omega\mid (a_i(\om)^d\ne1\})=0$. Hence $a_i(\omega)\in U_d$ for almost all $\omega\in\Omega$. For every $\ell\in[m]$ we have $Q_\ell(\vc Er)=UQ_\ell(\vc Ar)U^{-1}$, implying by the conditions of the lemma that $Q_\ell(\vc Er)=0$. Since $E_i$ is the multiplication operator given by the function $a_i$, it holds that $Q_\ell(a_1(\om)\zd a_r(\om))=0$ for almost all $\om\in\Omega$. As for almost all $\om\in\Omega$, it holds that the value $a_i(\om)\in U_d$ for each $i\in[r]$, we have $Q(a_1(\om)\zd a_r(\om))=0$ for almost all $\om\in\Omega$. Therefore $Q(\vc Er)=0$ and this implies $Q(\vc Ar)=0$ as before.
\end{proof}

\section{Overview of results and techniques}\label{sec:overview}

In this section, we give an overview of how our main result is proved. All
definitions and details are provided in the main part of the paper comprizing of~\Cref{sec:no-gap}--\Cref{sec:gap}.

\paragraph{Bounded width}

One part of our main result is the following.

\begin{theorem}\label{the:no-gap-o}
Let $\Gm$ be a constraint language over $U_d$. If $\CSP(\Gm)$ has bounded
width then it has no satisfiability gap of any kind.
\end{theorem}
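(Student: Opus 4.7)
The plan is to combine Kozik's theorem (SLAC solves every bounded-width CSP) with a direct spectral-projector analysis of the hypothesised operator assignment. Suppose $\cP=(V,U_d,\cC)\in\CSP(\Gm)$ has no classical solution and, for contradiction, that it admits a satisfying operator assignment $\{A_u\}_{u\in V}$ on a Hilbert space $\cH$. Since each $A_u$ is normal with $A_u^d=I$, Theorem~\ref{the:SST} (or Theorem~\ref{the:general-sst} in the infinite-dimensional case) yields mutually orthogonal spectral projectors $\{E_b^u\}_{b\in U_d}$ with $\sum_b E_b^u=I$ and $A_u=\sum_b bE_b^u$; each $E_b^u$ is a polynomial in $A_u$ via Fourier inversion. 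Within any single constraint scope $(v_1,\ldots,v_r)$ the $A_{v_i}$ pairwise commute, hence so do all the $E_{a_i}^{v_i}$, and a joint-diagonalisation calculation shows that the satisfaction equation $P_R(A_{v_1},\ldots,A_{v_r})=I$ is equivalent to the orthogonality relations $E_{a_1}^{v_1}\cdots E_{a_r}^{v_r}=0$ for every $(a_1,\ldots,a_r)\notin R$.

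The heart of the proof is an outer induction on the sequence of SLAC removals: by Kozik's theorem, SLAC eventually empties some $D_v$, and I would show that every removal is operator-valid in the sense that, whenever SLAC deletes $a$ from $D_v$, the projector $E_a^v$ is the zero operator on $\cH$. Granting this, emptying $D_v$ forces $I=\sum_{b\in U_d} E_b^v=0$, the required contradiction.

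For the inductive step, SLAC removes $a$ from $D_v$ because Linear AC applied to $\cP+\ang{v,\{a\}}$ derives $T_\emptyset(w)$ for some variable $w$. I plan to run a secondary induction on this Linear AC derivation with the invariant that every derived predicate $T_S(u)$ corresponds to the identity $E_b^u E_a^v=0$ for all $b\notin S$. The base cases are the added singleton $T_{\{a\}}(v)$ (orthogonality of the projectors $E_{\cdot}^v$) and the prior SLAC removals (by the outer hypothesis $E_b^u=0$ already). For an inference $T_S(v_i) :\!\!- R(v_1,\ldots,v_r),\,T_{S_1}(v_{i_1}),\ldots,T_{S_m}(v_{i_m})$ and any $c\notin S$, one expands
\[
E_c^{v_i}E_a^v=\sum_{\ba\in U_d^r,\,a_i=c} E_{a_1}^{v_1}\cdots E_{a_r}^{v_r}\,E_a^v
\]
using scope-commutativity of the $E_{a_k}^{v_k}$. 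Each summand vanishes: either $(a_1,\ldots,a_r)\notin R$ (the scope-product is already zero by the satisfaction characterisation), or, by soundness of the AC rule, some $a_{i_j}\notin S_j$, in which case the inner inductive hypothesis gives $E_{a_{i_j}}^{v_{i_j}}E_a^v=0$ and scope-commutativity lets one pull the factor $E_{a_{i_j}}^{v_{i_j}}$ adjacent to $E_a^v$, collapsing the product to zero. Reaching $T_\emptyset(w)$ and summing $E_b^w E_a^v=0$ over $b\in U_d$ yields $E_a^v=0$, closing both inductions.

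The main technical obstacle is the absence of global commutativity: $E_a^v$ need not commute with projectors associated with variables outside the scope of any constraint containing $v$. The argument navigates this because Linear AC uses a single constraint per inference step (so the commutativity one needs is always available within the relevant scope), and because the propagated invariant is the two-projector identity $E_b^u E_a^v=0$, which is symmetric in its factors (for any projectors $P,Q$ one has $PQ=0$ iff $QP=0$, via adjoints) and therefore insensitive to the non-commutativity of the overall system. The infinite-dimensional case requires only substituting Theorem~\ref{the:general-sst} for Theorem~\ref{the:SST}: the $E_b^u$ then appear as multiplication by the characteristic function of $\{a_u=b\}$ in the corresponding $L^2$-representation, and the same identities hold verbatim.
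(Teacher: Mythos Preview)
Your proof is correct and takes a genuinely different route from the paper. The paper encodes the statement ``$v\in S$'' through the polynomial $Dom_S(x)=\prod_{k\in S}(\lambda_k-x)+1$ and tracks the two-operator identity $(Dom_{\overline{S_1}}(A_{v_1})-I)(Dom_{S_i}(A_{v_i})-I)=0$ along a Linear-AC chain; the induction step relies on the observation that $Dom_S(x)-Dom_{\overline S}(x)$ is invertible modulo $x^d-1$, which lets one cancel the intermediate variable. A separate reverse induction at the end collapses $\prod_{j\neq k}(A_v-\lambda_jI)=0$ (for all $k$) down to $I=0$. You instead pass to spectral projectors and track the finer family of identities $E^u_bE^v_a=0$ for each $b\notin S$; your induction step uses only scope-commutativity to slide the killing factor next to $E^v_a$, together with the self-adjointness fact $PQ=0\Leftrightarrow QP=0$ for orthogonal projectors. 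The two invariants are actually equivalent (multiply the paper's identity on the right by a single $E^{v_i}_b$), but the mechanics are quite different: your argument bypasses the invertibility-modulo-$x^d-1$ trick entirely, the final contradiction $\sum_b E^v_b=I=0$ is immediate, and---since nothing in your inductive step uses that $m=1$---it in fact goes through for full Singleton Arc Consistency, not only the linear fragment. One cosmetic point: you display a rule with $m$ IDBs in the body while referring to Linear AC (where $m=1$ by definition); this is harmless because your argument handles arbitrary $m$, but it would be cleaner to say so explicitly.
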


The main idea behind the proof of Theorem~\ref{the:no-gap-o} 
is to simulate the inference provided by SLAC (see Fig.~\ref{fig:slac}) 
by inference in polynomial equations. Let $\cS$ be a SLAC-program solving $\CSP(\Gm)$.
In order to prove Theorem~\ref{the:no-gap-o} we take an instance $\cP=(V,U_d,\cC)$ of $\CSP(\Gm)$ that has no solution, and therefore  is not SLAC-consistent, as $\CSP(\Gm)$ has bounded width, and prove that it also has no operator solution. We will prove it by contradiction, assuming $\cP$ has an operator solution $\{A_v\}$ and then using the rules of a SLAC-program solving $\CSP(\Gm)$ to infer stronger and stronger conditions on $\{A_v\}$ that eventually lead to a contradiction. 

Recall that every rule of a SLAC-program has the form $(x\in S)\meet\rel(x,y,\vc zr))\to (y\in S')$ for some variables $x,y\in V$, a constraint $\ang{(x,y,\vc zr),\rel}$, and sets $S,S'\sse U_d$. Therefore, we need to show how to encode unary relations and rules of a SLAC-program through polynomials. For any $S\sse U_d$, the unary constraint restricting the domain of a variable
$x$ to the set $S$ is represented by the polynomial
\[
Dom_S(x)=\prod_{k\in S}(\ld_k-x)+1.\footnote{This is not the representation of $S$ as in the beginning of~\Cref{sec:operator-CSP}, as $Dom_S(a)$ is not necessarily equal to $\ld_1$ for $A\not\in S$. However, it suffices for our purposes, because we only need the property that $Dom_S(a)=1$ if and only if $a\in S$.}
\]
Similarly, the rule $(x\in S)\meet\rel(x,y,\vc zr))\to (y\in S')$ of the SLAC 
program is represented by
\[
Rule_{S,\rel,S'}(x,y,\vc zr)=(Dom_{\ov S}(x)-1)(P_\rel(x,y,\vc zr)-\ld_1)
(Dom_{S'}(y)-1).
\]
To give an idea of how Theorem~\ref{the:no-gap-o} is proved, we sketch the proof of the following.
\begin{lemma}\label{lem:transitive-poly-o}
Let $(v_1\in S_1)\to\dots\to(v_\ell\in S_\ell)$ be a derivation in the SLAC-program $\cS$ and $\{A_v\}$ an operator assignment for $\cP$. 
Then for each $i=2\zd\ell$
\[
(Dom_{\ov S_1}(A_{v_1})-I)(Dom_{S_i}(A_{v_i})-I)=0.
\]
\end{lemma}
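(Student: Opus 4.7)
The plan is to proceed by induction on $i$, applying Lemma~\ref{lem:lemma-3} at each step.

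For the \emph{base case} $i=2$, the derivation is a single Linear AC rule $(v_1\in S_1)\wedge R(v_1,v_2,z_1,\ldots,z_r)\to(v_2\in S_2)$ attached to some constraint of $\cC$. The polynomial $Rule_{S_1,R,S_2}(v_1,v_2,z_1,\ldots,z_r)$ vanishes on every tuple in $U_d^{r+2}$, since by validity of the rule at least one of its three factors is zero on each such tuple. The operators in the scope of $R$ pairwise commute, so Lemma~\ref{lem:lemma-3} transfers the vanishing to operators. Since $\{A_v\}$ satisfies $R$, the middle factor $P_R(A_{v_1},A_{v_2},A_{z_1},\ldots,A_{z_r}) - \lambda_1 I$ equals the invertible scalar $(1-\lambda_1)I$; dividing by it gives the base case.

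For the \emph{inductive step}, abbreviate $X = Dom_{\ov S_1}(A_{v_1}) - I$, $Y = Dom_{S_i}(A_{v_i}) - I$, $\ov Y = Dom_{\ov S_i}(A_{v_i}) - I$, and $Z = Dom_{S_{i+1}}(A_{v_{i+1}}) - I$. The induction hypothesis gives $XY = 0$, and applying the base-case argument to the last step $(v_i\in S_i)\to(v_{i+1}\in S_{i+1})$ yields $\ov Y Z = 0$. The goal is $XZ = 0$. The main obstacle is that $A_{v_1}$ and $A_{v_{i+1}}$ need not commute (they need not share a constraint scope), so Lemma~\ref{lem:lemma-3} cannot be invoked on a joint three-variable polynomial that would chain the two facts together directly.

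The idea is to insert between $X$ and $Z$ the spectral decomposition of the intermediate operator $A_{v_i}$, whose spectrum lies in $U_d$. By Lagrange interpolation on $U_d$, choose polynomials $p,\ov p,q,q'$ in one variable such that on $U_d$
\[
p + \ov p = 1,\qquad p\cdot\ov p = 0,\qquad \ov p = q\cdot(Dom_{S_i} - 1),\qquad p = q'\cdot(Dom_{\ov S_i} - 1),
\]
where $p,\ov p$ are the Lagrange indicators of $S_i$ and $\ov S_i$. Applying Lemma~\ref{lem:lemma-3} in the single-variable case (commutativity is trivial) promotes these to operator identities for $E := p(A_{v_i})$ and $\ov E := \ov p(A_{v_i})$: namely $E + \ov E = I$, $E\ov E = 0$, $\ov E = q(A_{v_i})\,Y$, and $E = q'(A_{v_i})\,\ov Y$.

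Since $q(A_{v_i})$ and $Y$ are both polynomials in the same operator $A_{v_i}$ they commute, so the induction hypothesis gives $X\ov E = X\,q(A_{v_i})\,Y = X\,Y\,q(A_{v_i}) = 0$; likewise $EZ = q'(A_{v_i})\cdot\ov Y Z = 0$ follows directly from $\ov Y Z = 0$ by associativity. Inserting $I = E + \ov E$ between $X$ and $Z$ then yields
\[
XZ \;=\; X(E + \ov E)Z \;=\; X(EZ) + (X\ov E)Z \;=\; 0,
\]
completing the induction.
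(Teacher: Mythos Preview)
Your proof is correct. The base case matches the paper's argument (this is precisely Lemma~\ref{lem:rule-poly} combined with Lemma~\ref{lem:derivation-poly}). Note that you implicitly assume $\{A_v\}$ is a \emph{satisfying} operator assignment when you write ``Since $\{A_v\}$ satisfies $R$''; the lemma as stated in the overview section omits this word, but the full statement (Lemma~\ref{lem:transitive-poly}) includes it, and the claim is false without it, so your reading is the intended one.

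The inductive step, however, takes a genuinely different route from the paper. The paper multiplies $XY=0$ on the right by $q(A_{v_i})Z$ and $\ov Y Z=0$ on the left by $Xq(A_{v_i})$, then subtracts to obtain $X\,q(A_{v_i})\bigl(Dom_{S_i}(A_{v_i})-Dom_{\ov S_i}(A_{v_i})\bigr)Z=0$; the polynomial $p(x)=Dom_{S_i}(x)-Dom_{\ov S_i}(x)$ has no root in $U_d$, so it is coprime to $x^d-1$ and one can choose $q$ with $p(x)q(x)\equiv c\pmod{x^d-1}$ for a nonzero constant $c$, collapsing the middle to $cXZ=0$. Your approach instead inserts the identity $I=E+\ov E$ given by the Lagrange idempotents for $S_i$ and $\ov S_i$, and kills each summand separately by factoring $\ov E$ through $Y$ and $E$ through $\ov Y$. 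Both arguments exploit exactly the same structural fact---all the intermediate objects are polynomials in the single operator $A_{v_i}$ and therefore commute among themselves---but your decomposition avoids the B\'ezout/coprimality step and is arguably a bit more transparent, while the paper's version packages everything into one subtraction. Neither approach gains any real generality over the other.
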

\begin{proof}[Proof Sketch]
First, one shows via Lemma~\ref{lem:lemma-3} that any operator assignment is a zero of $Rule_{S,\rel,S'}$
(cf. Lemma~\ref{lem:rule-poly}). This can be used to establish the claim for
  $i=2$ (cf. Lemma~\ref{lem:derivation-poly}).
The rest of the proof is done by induction on $i$. 
In the inductive case we have equations
\begin{equation}\label{equ:equation1-o}
(Dom_{\ov S_1}(A_{v_1})-I)(Dom_{S_i}(A_{v_i})-I)=0,
\end{equation}
and 
\begin{equation}\label{equ:equation2-o}
(Dom_{\ov S_i}(A_{v_i})-I)(Dom_{S_{i+1}}(A_{v_{i+1}})-I)=0.
\end{equation}
The idea is to multiply (\ref{equ:equation1-o}) by 
$(Dom_{S_{i+1}}(A_{v_{i+1}})-I)$ on the right, multiply (\ref{equ:equation2-o})
by $(Dom_{\ov S_1}(A_{v_1})-I)$ on the left and subtract. The problem
is, however, that 
\[
Dom_{S_i}(A_{v_i})-Dom_{\ov S_i}(A_{v_i})
\]
is not a constant polynomial. So, we also need to prove that any polynomial
of the form 
\[
Dom_S(x)-Dom_{\ov S}(x)
\]
is invertible modulo $x^d-1$. The polynomial has the form
\[
p(x)=\prod_{k\in S}(x-\ld_k)-\prod_{k\not\in S}(x-\ld_k).
\]
As is easily seen, if $S\ne U_d$ and $S\ne\eps$, then $\ld_k$ is not a
root of $p(x)$ for any $\ld_k\in U_d$. Therefore the greatest common divisor of $p(x)$ and $x^d-1$ has degree 0, and hence there exists $q(x)$ such that 
\[
p(x)q(x)=c+r(x)(x^d-1).
\]

Thus before subtracting equations (\ref{equ:equation1-o}) and 
(\ref{equ:equation2-o}) we also multiply them by $q(A_{v_i})$. Then we get
\begin{eqnarray*}
(Dom_{\ov S_1}(A_{v_1})-I)(Dom_{S_i}(A_{v_i})q(A_{v_i})
-q(A_{v_i})Dom_{\ov S_i}(A_{v_i}))(Dom_{S_{i+1}}(A_{v_{i+1}})-I) &=& 0\\
(Dom_{\ov S_1}(A_{v_1})-I)q(A_{v_i})(Dom_{S_i}(A_{v_i})
-Dom_{\ov S_i}(A_{v_i}))(Dom_{S_{i+1}}(A_{v_{i+1}})-I) &=& 0\\
c(Dom_{\ov S_1}(A_{v_1})-I)(Dom_{S_{i+1}}(A_{v_{i+1}})-I) &=& 0.
\end{eqnarray*}
The first transformation uses the fact that $A_{v_i}$ commutes with itself, while the 
second one uses the property $A_{v_i}^d=I$.
The result follows.
\end{proof}

To complete the proof of Theorem~\ref{the:no-gap-o} note that the lack of
SLAC-consistency means that for some $v\in V$ the statement 
$(v=\ld_k)\to(v\ne\ld_k)$ can be derived from $\cP$ for every $\ld_k\in U_d$.
By Lemma~\ref{lem:transitive-poly-o}, for any operator assignment $\{A_w\}$
and any $\ld_k\in U_d$ the operator $A_v$ satisfies the equation
$\prod_{j\ne k}(A_v-\ld_jI)=0$.
By reverse induction on the size of $S$, one can show that for any $S\sse U_d$ these equations imply 
$\prod_{j\in S}(A_v-\ld_jI)=0$.
Then for $S=\eps$ we get $I=0$, witnessing that $\cP$ has no satisfying 
operator assignment. 

\paragraph{NP-hard CSPs and unbounded width}

The second part of our main result boils down to the following.

\begin{theorem}\label{the:hsp-gap-o}
Let $\Gm$ be a constraint language over $U_d$. 
If $\CSP(\Gm)$ does not have bounded width then $\CSP(\Gm)$ can simulate linear
  equations over an Abelian group of prime order $p$ in a
  satisfiability-gap-preserving fashion. Moreover, if $\CSP(\Gm)$ is NP-hard
  then one can take $p=2$.
\end{theorem}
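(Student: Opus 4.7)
The plan is to mimic the classical algebraic proof that tractable CSPs of unbounded width simulate linear equations over some $\mathbb{Z}_p$, while verifying at every step that the simulation is satisfiability-gap-preserving. Concretely, I would first extend the Atserias--Kolaitis--Severini argument that primitive positive definability preserves gaps from the Boolean setting to arbitrary finite domains: given a pp-definition of $R'$ from $\Gm$, an operator solution to any instance of $\CSP(\Gm\cup\{R'\})$ is translated to one of the corresponding $\CSP(\Gm)$ instance by assigning commuting operator witnesses to each existentially quantified variable. The order-$d$, normality, and scope-commutativity conditions all transfer because the new scopes are contained in scopes of the original constraints.

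I would then verify that the remaining algebraic reductions also preserve gaps. Going to the core is handled by composing any operator solution with the retraction endomorphism. Adding constants is straightforward once we are in the core, since constants become pp-definable via the retraction. Restricting to a subalgebra $B\sse D$ reduces to adding the unary relation $B$. The delicate case is quotienting by a congruence $\theta$ on $B$, because this changes both the domain size and the order of the operators involved. Here I would lift an operator assignment on the quotient $B/\theta$ to one on $B$ by selecting, for each $\theta$-block, an operator representative whose spectrum lies in the corresponding preimage in $U_d$; simultaneous diagonalizability within any constraint scope is ensured by the general strong spectral theorem (Theorem~\ref{the:general-sst}), and the polynomial identities transfer because the quotient polynomial divides a lifted polynomial modulo the operator order.

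Once every reduction is shown to be gap-preserving, the two conclusions follow. For $\CSP(\Gm)$ tractable but of unbounded width, the Barto--Kozik characterization of bounded width~\cite{Barto14:local} combined with the dichotomy theorem~\cite{Bulatov17:focs,Zhuk20:jacm} supplies a chain of the above reductions from $\Gm$ to a constraint language whose polymorphism algebra is polynomially equivalent to an Abelian group of prime order $p$, delivering the desired simulation. For $\CSP(\Gm)$ NP-hard, the dichotomy theorem provides pp-interpretations of arbitrary finite-domain CSPs, including linear equations over $\mathbb{Z}_2$, yielding $p=2$. The main obstacle is the factor reduction: lifting operator solutions across a congruence quotient must preserve both the order/normality conditions and the commutativity within each constraint scope, and the natural block-diagonal construction must be harmonized with the polynomial representation of satisfaction fixed in Section~\ref{sec:operator-CSP}.
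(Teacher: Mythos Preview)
Your plan matches the paper's approach: verify gap-preservation for pp-definitions, passage to the core, adding constants, restriction to subalgebras, and congruence quotients, then invoke the algebraic characterisation of unbounded width (Proposition~\ref{pro:abelian-o}) to obtain the chain of reductions down to linear equations over $\mathbb{Z}_p$.

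One step, however, is not as you describe. Constants are \emph{not} pp-definable in a general core: $K_3$ is a core, yet no singleton is pp-definable there, since pp-definable unary relations must be automorphism-invariant. The paper's Step~2 (Section~\ref{sec:gap}) instead exploits the pp-definable relation $\rel_\Gm=\{(\vr(\ld_0),\dots,\vr(\ld_{d-1}))\mid\vr\text{ an endomorphism of }\Gm\}$. Given a gap instance of $\CSP(\Gm^*)$ one introduces fresh variables $v_a$ for $a\in U_d$, a single constraint $\rel_\Gm(v_{\ld_0},\dots,v_{\ld_{d-1}})$, and equality constraints $v=v_a$ replacing each constant constraint $C_a(v)$; the crux is that any operator $A_v$ satisfying $C_a$ must be the scalar $aI$ (seen by diagonalising), so the extension $v_a\mapsto aI$ is a satisfying operator assignment for the resulting $\CSP(\Gm)$ instance. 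You also omit a caveat the paper tracks: pp-reductions preserve gaps of the third kind only after adjoining the full binary relation $\rel_T$ to $\Gm$ (Theorem~\ref{the:pp}(2)), since the converse direction of the gadget construction requires a commutativity gadget to force operators in the scope of a replaced constraint to commute.
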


Assume that $\CSP(\Gm)$ does not have bounded width. The ``implementation'' of
linear equations in $\CSP(\Gm)$ that preserves gaps will be achieved
in several steps via a 
chain of reductions that lies at the heart of the algebraic approach to CSPs~\cite{Bulatov05:classifying}. 
The reductions are shown in Figure~\ref{fig:reductions-o}. They are
known to preserve satisfiability; we show that they also preserve satisfiability
gaps. 

The most basic reduction (used in the chain in several places) is that of
primitive positive definitions. 
Let $\Gm$ be a constraint language over $U_d$, let $r$ be an integer, and
let $x_1,\ldots,x_r$ be variables ranging over the domain $U_d$. A primitive
positive formula (\emph{pp-formula}) over $\Gm$ is a formula of the form
\begin{equation}\label{eq:pp-formula-o}
  \phi(x_1,\ldots,x_r)=\exists y_1\cdots\exists y_s(\rel_1(\bz_1)\wedge\cdots\wedge
  \rel_m(\bz_m)),
\end{equation}
where $\rel_i$ is either the binary equality relation on $U_d$ or $\rel_i\in\Gm$ is a relation over $U_d$ of arity $r_i$, and each $\bz_i$ is an $r_i$-tuple of variables from
$\{x_1,\ldots,x_r\}\cup\{y_1,\ldots,y_s\}$.
A relation $\rel\subseteq U_d^r$ is primitive positive definable
(\emph{pp-definable}) from $\Gm$ if there
exists a pp-formula $\phi(x_1,\ldots,x_r)$ over $\Gm$ such that $\rel$ is equal to the set of
models of $\phi$, that is, the set of $r$-tuples $(a_1,\ldots,a_r)\in U_d^r$
that make the formula $\phi$ true over $U_d$ if $a_i$ is
substituted for $x_i$ in $\phi$ for every $i\in [r]$.

\begin{figure}[t!]
\[
\CSP(\Gm)\ \leftrightarrow\ \CSP(\core(\Gm)) \leftrightarrow\ \CSP(\core(\Gm)^*)\ \leftarrow\ \CSP(\core(\Gm)^*\red B) \ \leftarrow\ \CSP(\core(\Gm)^*\red B\fac\th)
\]
\caption{Reductions between CSPs corresponding to derivative languages.}
\label{fig:reductions-o}
\end{figure}

\begin{theorem}\label{the:pp-o}
  Let $\Gm$ be a constraint language over $U_d$ and let $R$ be pp-definable from
  $\Gm$. Then, if $\CSP(\Gm\cup\{R\})$ has a satisfiability gap of the first (second, third) kind then so does
  $\CSP(\Gm)$.
\end{theorem}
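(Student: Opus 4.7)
The plan is to apply the standard pp-reduction: given a gap instance $\cP = (V, U_d, \cC) \in \CSP(\Gm\cup\{R\})$, construct $\cP' \in \CSP(\Gm)$ by replacing each $R$-constraint $\ang{\bs, R}$ with a fresh copy of the pp-definition~\eqref{eq:pp-formula-o} of $R$, introducing fresh existentially-quantified variables $y_1^{(c)}\zd y_s^{(c)}$ for each $R$-constraint $c$ and eliminating binary equality atoms by variable identification. It then suffices to verify that each of the three properties defining a gap transfers from $\cP$ to $\cP'$ in the appropriate direction.

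The classical direction is immediate: any solution of $\cP'$ restricted to $V$ satisfies $\cP$ (since the pp-expansion of $R$ implies $R$), so $\cP$ classically unsat implies $\cP'$ classically unsat. The key step is the forward operator direction. Given a satisfying operator assignment $\{A_v\}$ for $\cP$ on a Hilbert space $\cH$, I extend it to $\cP'$ as follows. For each $R$-constraint $c = \ang{(s_1\zd s_r), R}$, the operators $A_{s_1}\zd A_{s_r}$ are normal of order $d$ and pairwise commute, so Theorem~\ref{the:SST} (finite-dim case) or Theorem~\ref{the:general-sst} (infinite-dim case) yields a unitary $U$ with $A_{s_i} = U^{-1} E_i U$, where the $E_i$ are diagonal matrices with entries in $U_d$ (respectively, multiplication operators $T_{a_i}$ with $a_i(\omega) \in U_d$ almost everywhere). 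Since $P_R(A_{s_1}\zd A_{s_r}) = I$, every joint spectral tuple lies in $R$, so by the pp-definition there exists a witness $\bb \in U_d^s$ for each such tuple making all atomic constraints of the pp-formula true. Assembling the witnesses into diagonal matrices $F_k$ (respectively, multiplication operators $T_{b_k}$ where $b_k$ is defined on the finite measurable partition of $\Omega$ indexed by the finite set $R$), and setting $B_{y_k^{(c)}} = U^{-1} F_k U$, I obtain operators that are normal of order $d$, diagonalized by the same $U$ as $A_{s_1}\zd A_{s_r}$, and hence pairwise commute and jointly satisfy every atomic constraint of the pp-expansion. Fresh $y$-variables per constraint $c$ keep these extensions independent, and the Hilbert space $\cH$ is unchanged; this establishes gap preservation of the first and second kinds, as well as the infinite-dimensional half of the third.

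The main obstacle is the reverse operator direction required for gaps of the third kind: from ``$\cP$ has no finite-dimensional operator assignment'' one must conclude ``$\cP'$ has no finite-dimensional operator assignment'', equivalently, any finite-dimensional operator assignment to $\cP'$ must yield one to $\cP$. The difficulty is that the pp-expansion enforces only pairwise commutation of operators within each atomic constraint $R_i(\bz_i^{(c)})$, whereas the $R$-constraint in $\cP$ requires pairwise commutation of all $A_{s_1}\zd A_{s_r}$, which may appear in different atomic constraints of the pp-formula. I plan to adapt the argument of~\cite{AKS19:jcss} to arbitrary $d$: in the finite-dimensional setting, Theorem~\ref{the:SST} can be applied inductively through the atomic constraints of the pp-formula, propagating simultaneous diagonalization through the shared-variable structure, and the polynomial equations from the $R_i$-constraints then force the required joint commutation of the original operators on a common eigenbasis of the extended system. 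Making this propagation precise---and ensuring the resulting commuting assignment lives on a finite-dimensional space---will be the most technically delicate step.
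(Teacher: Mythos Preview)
Your treatment of gaps of the first and second kind matches the paper's approach exactly (Lemma~\ref{lem:lift-o}/Lemma~\ref{lem:lift}): extend a satisfying operator assignment for $\cP$ to $\cP'$ by simultaneously diagonalizing the operators in each $R$-constraint scope via the (General) Strong Spectral Theorem and choosing witness operators diagonally in the same basis.

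Your plan for gaps of the third kind, however, has a genuine gap. You correctly identify the obstacle: to transfer ``no finite-dimensional operator solution'' from $\cP$ to $\cP'$, you need that any finite-dimensional satisfying assignment for $\cP'$ restricts to one for $\cP$, and in particular that $A_{s_1}\zd A_{s_r}$ pairwise commute even though they may occur in different atomic conjuncts of the pp-definition. Your proposed remedy---``apply Theorem~\ref{the:SST} inductively through the atomic constraints, propagating simultaneous diagonalization through the shared-variable structure''---does not work: simultaneous diagonalization cannot be propagated to operators that are not already known to commute, and nothing in the constraints of $\cP'$ forces $A_{s_i}$ and $A_{s_j}$ to commute when they share no atomic conjunct. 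The paper says this explicitly after Lemma~\ref{lem:lift}: ``We do not know whether the converse of Lemma~\ref{lem:lift} holds; this is not known even in the case of $d=2$.'' There is no such argument in~\cite{AKS19:jcss} to adapt.

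The paper's actual fix is different and weaker than the overview statement suggests. It adds the full binary relation $\rel_T=U_d^2$ to the language and augments $\cP'$ to $\cP''$ by inserting $\ang{(s_i,s_j),\rel_T}$ for every pair in each $R$-scope (Lemma~\ref{lem:proj}). These trivially-satisfied constraints force the missing commutation by definition of operator assignment, after which the restriction argument goes through via Lemma~\ref{lem:lemma-3}. Consequently, the precise statement the paper proves (Theorem~\ref{the:pp}) is that a gap of the third kind transfers only to $\CSP(\Gm\cup\{\rel_T\})$, not to $\CSP(\Gm)$. So Theorem~\ref{the:pp-o} as literally phrased is an overview simplification; you should not expect to prove the third-kind case without the $\rel_T$ gadget (or some other commutativity gadget).
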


Weaker forms of Theorem~\ref{the:pp-o} have appeared in the literature. This includes for example subdivisions from Theorem~6.4 of~\cite{Mastel24:stoc}.

Let $\rel\subseteq U_d^r$ be a pp-definable formula
over $\Gm$ via the pp-formula $\phi(x_1,\ldots,x_r)$ as in~(\ref{eq:pp-formula-o}).
Given an instance
$\cP\in\CSP(\Gm\cup\{\rel\})$, one can turn it into an instance
$\cP'\in\CSP(\Gm)$ that is equivalent to $\cP$. 
Intuitively, each constraint $\ang{\bu,\rel}$ of $\cP$ is replaced with
constraints 
from its pp-definition in~(\ref{eq:pp-formula-o})
over fresh new variables (and similarly for the binary equality relation, cf.~\Cref{sec:operator-pp} for details).
This construction is known as the \emph{gadget construction} in the CSP
literature and it is known that $\cP$ has a solution over $U_d$
if and only if $\cP'$ has a solution over $U_d$~\cite{Bulatov05:classifying,BKW17}.
Thus, to prove Theorem~\ref{the:pp-o}, it suffices to show the following;
the proof is a simple generalization of the $d=2$ case
proved in~\cite{AKS19:jcss}.

\begin{lemma}\label{lem:lift-o}
  Let $\Gm$ be a constraint language over $U_d$ and let $\rel$ be pp-definable
  from $\Gm$. Furthermore, let $\cP\in\CSP(\Gm\cup\{\rel\})$ and let
  $\cP'\in\CSP(\Gm)$ be the gadget construction replacing constraints involving
  $\rel$ in $\cP$. If there is a (finite or infinite dimensional) satisfying operator assignment for $\cP$ then there is a (respectively, finite or infinite dimensional) satisfying operator assignment for $\cP'$. 
\end{lemma}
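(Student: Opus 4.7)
The plan is to lift a satisfying operator assignment $\{A_v\}$ for $\cP$ to one for $\cP'$ by using the strong spectral theorem to simultaneously diagonalize the operators within each $\rel$-constraint scope, picking classical witnesses for the existential variables ``pointwise'', and finally defining operators for the fresh gadget variables via the inverse unitary transformation. Fix a constraint $C=\ang{(u_1,\ldots,u_r),\rel}$ of $\cP$, and let $\phi(x_1,\ldots,x_r)=\exists y_1\cdots\exists y_s(\rel_1(\bz_1)\wedge\cdots\wedge\rel_m(\bz_m))$ be the pp-definition of $\rel$. The gadget construction replaces $C$ by constraints $\rel_i(\bz_i)$ over $u_1,\ldots,u_r$ together with a fresh copy $y_1^{(C)},\ldots,y_s^{(C)}$ of the existential variables; my task is to define operators $A_{y_k^{(C)}}$ so that within each small constraint scope the operators pairwise commute and the corresponding polynomial equation holds.

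In the finite-dimensional case I apply Theorem~\ref{the:SST} to the pairwise commuting normal operators $A_{u_1},\ldots,A_{u_r}$, obtaining a unitary $U$ and diagonal matrices $E_1,\ldots,E_r$ with $A_{u_i}=U^{-1}E_iU$; since $A_{u_i}^d=I$ every diagonal entry of $E_i$ lies in $U_d$. Conjugating by $U$, the equation $P_\rel(A_{u_1},\ldots,A_{u_r})=I$ becomes $P_\rel(E_1,\ldots,E_r)=I$, which read entrywise says that for every diagonal index $j$ the tuple $(E_1[j,j],\ldots,E_r[j,j])\in U_d^r$ lies in $\rel$. The pp-definition then supplies, for each $j$, values $b_1^{(j)},\ldots,b_s^{(j)}\in U_d$ witnessing the existentials. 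Letting $F_k$ be the diagonal matrix with $j$-th entry $b_k^{(j)}$, I set $A_{y_k^{(C)}}=U^{-1}F_kU$. These operators are normal with $d$-th power $I$; they commute with $A_{u_1},\ldots,A_{u_r}$ by simultaneous diagonalizability; and each small constraint $\rel_i(\bz_i)$ is satisfied because the relevant polynomial identity holds entrywise in the common eigenbasis. Equated variables in $\phi$ cause no trouble: a classical witness assigns identical values to equated positions, hence the resulting operators coincide.

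The infinite-dimensional case is structurally identical, with Theorem~\ref{the:general-sst} replacing Theorem~\ref{the:SST}: I obtain a measure space $(\Omega,\cM,\mu)$, a unitary $U$, and functions $a_1,\ldots,a_r\in L^\infty(\Omega,\mu)$ with $A_{u_i}=U^{-1}T_{a_i}U$, and the constraint translates to $(a_1(\omega),\ldots,a_r(\omega))\in\rel$ for almost every $\omega$. The only potential obstacle is a measurable selection of existential witnesses; this is where the finiteness of $\rel$ saves the argument. Partition $\Omega$ (up to a null set) into the finitely many measurable sets $\Omega_{\bc}=\{\omega:(a_1(\omega),\ldots,a_r(\omega))=\bc\}$ indexed by $\bc\in\rel$, pick for each $\bc$ an arbitrary classical witness $(d_1^{\bc},\ldots,d_s^{\bc})\in U_d^s$, and define the simple function $b_k(\omega)=d_k^{\bc}$ on $\Omega_{\bc}$. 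Then $A_{y_k^{(C)}}:=U^{-1}T_{b_k}U$ is a normal operator of order $d$, it commutes with all $A_{u_i}$, and each small constraint is satisfied by pointwise verification. Performing this construction independently for every $\rel$-constraint of $\cP$ (and leaving untouched the operators on the original variables) yields the required operator assignment for $\cP'$. The whole argument hinges on the elementary observation that $\rel\subseteq U_d^r$ is finite, so no genuine difficulty arises beyond an honest bookkeeping of the simultaneous diagonalization.
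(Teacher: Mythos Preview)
Your proof is correct and follows essentially the same approach as the paper: simultaneous diagonalization via the (general) strong spectral theorem, pointwise selection of classical witnesses for the existential variables, and transport back via the unitary. The only cosmetic difference is in the infinite-dimensional measurable selection: the paper picks the lexicographically smallest witness tuple and then argues measurability of the resulting functions, whereas you partition $\Omega$ into the finitely many measurable fibers $\Omega_{\bc}$ and build simple functions directly---both arguments exploit that $\rel$ is finite and are equivalent.
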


If $\CSP(\Gm\cup\{\rel\})$ has a satisfiability gap of the first kind then there is
an unsatisfiable instance $\cP\in\CSP(\Gm\cup\{\rel\})$ with a satisfying
operator assignment. By~\cite{Bulatov05:classifying} (cf.
also~\cite{BKW17}), $\cP'$ is unsatisfiable. By~Lemma~\ref{lem:lift-o}, $\cP'$ has a satisfying operator assignment. 
Hence $\cP'$ establishes that $\CSP(\Gm)$ has a satisfiability gap, and
Theorem~\ref{the:pp-o} is proved. This argument also extends to gaps of the second and third kind.

\smallskip

Pp-definitions are the starting point of the algebraic approach to
CSPs~\cite{Bulatov05:classifying} and suffice for dealing with Boolean CSPs,
not only in~\cite{AKS19:jcss} but also in all papers on Boolean (variants of) CSPs. For
CSPs over larger domains, more tools are needed.

A constraint language $\Gm$ is a \emph{core} language if all its endomorphisms
are permutations; that is, $\Gm$ has no endomorphisms that are not
automorphisms. There always exists an endomorphism $\vr$ of $\Gm$ such that $\vr(\Gm)$ is
core and $\vr\circ\vr=\vr$~\cite{Bulatov05:classifying}. We will denote this core language by $\core(\Gm)$, as it (up to an isomorphism) does not depend on the choice of $\vr$.

A constraint language $\Gm$ is called \emph{idempotent} if it contains all the
\emph{constant} relations, that is, relations of the form $C_a=\{(a)\}$, $a\in
U_d$. For an arbitrary language $\Gm$ over $U_d$ we use $\Gm^*=\Gm\cup\{C_a\mid
a\in U_d\}$.  A unary relation (a set) $B\sse U_d$ pp-definable in $\Gm$ is
called a \emph{subalgebra} of $\Gm$. For a subalgebra $B$ we introduce the
\emph{restriction} $\Gm\red B$ of $\Gm$ to $B$ defined as $\Gm\red B=\{\rel\cap
B^{ar(\rel)}\mid \rel\in\Gm\}$.

An equivalence relation $\th$ pp-definable in $\Gm$ is said to be a \emph{congruence} of $\Gm$. The equivalence class of $\th$ containing $a\in U_d$ will be denoted by $a\fac\th$, and the set of all equivalence classes, the \emph{factor-set}, by $U_d\fac\th$. Congruences of a constraint language allow one to define a \emph{factor-language} as follows. For a congruence $\th$ of the language $\Gm$ the factor language $\Gm\fac\th$ is the language over $U_d\fac\th$ given by $\Gm\fac\th=\{\rel\fac\th\mid \rel\in\Gm\}$, where $\rel\fac\th=\{(a_1\fac\th\zd a_n\fac\th)\mid  (\vc an)\in\rel\}$.

All the languages above are related to each other 
by the reducibility of the corresponding CSPs, as Figure~\ref{fig:reductions-o}
indicates (cf. Proposition~\ref{pro:reductions}).
In~\Cref{sec:gap}, we show that all arrows in
Figure~\ref{fig:reductions-o} preserve satisfiability gaps.
To relate these reductions with bounded width and magic squares we use the
following result.

\begin{prop}[\cite{Bulatov09:width,Barto14:local,BKW17}]\label{pro:abelian-o}
For a constraint language $\Gm$ over $U_d$, $\CSP(\Gm)$ does not have bounded
  width if and only there exists a language $\Dl$ pp-definable in $\Gm$, a
  subalgebra $B$ of $\core(\Dl)^*$, a congruence $\th$ of $\core(\Dl)^*\red B$,
  and an Abelian group $\zA$ of prime order $p$ such that $\Gm'=\core(\Dl)^*\red B\fac\th$ contains relations $\rel_{3,a},\rel_{p+2}$ for every $a\in\zA$ given by $\rel_{3,a}=\{(x,y,z)\mid x+y+z=a\}$ and $\rel_{p+2}=\{(\vc a{p+2})\mid a_1+\dots+a_{p+2}=0\}$. If $\CSP(\Gm)$ is NP-hard, then $\Dl$ can be chosen to contain $\rel_{3,a},\rel_{p+2}$ for $p=2$.\footnote{The relations $\rel_{3,a},\rel_{p+2}$ are chosen here because they are needed for our purpose. In fact, they can be replaced with any relations expressible by linear equations over $\zA$.}
\end{prop}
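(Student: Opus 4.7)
The plan is to derive Proposition~\ref{pro:abelian-o} from the algebraic characterization of bounded width established in~\cite{Bulatov09:width,Barto14:local,BKW17}. The heart of the argument is a bridge between the abstract subquotient condition ``there is an affine quotient of a subalgebra'' and the explicit list of linear-equation relations that must become pp-definable in the corresponding factor language.

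For the forward direction, suppose $\CSP(\Gm)$ does not have bounded width. First I would reduce to the idempotent-core setting: by~\cite{BKW17}, bounded width is preserved under passing from $\Gm$ to $\core(\Gm)$ and then to $\core(\Gm)^*$ by adjoining all constant relations, so the problem can be treated inside $\core(\Dl)^*$ for some pp-expansion $\Dl$ of $\Gm$ (initially one may take $\Dl=\Gm$). Next I would invoke the Barto--Kozik theorem~\cite{Barto14:local}: an idempotent core language has bounded width if and only if no subalgebra, after quotienting by a suitable congruence, is polynomially equivalent to an affine module over an Abelian group of prime order. Thus failure of bounded width yields a subalgebra $B$ of $\core(\Dl)^*$ and a congruence $\th$ of $\core(\Dl)^*\red B$ such that $\core(\Dl)^*\red B\fac\th$ is polynomially equivalent to an affine $\zA$-module for some prime $p$ and Abelian group $\zA$ of order $p$.

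The next step is to extract the explicit relations $\rel_{3,a}$ and $\rel_{p+2}$ from this affine factor. In an affine $\zA$-module the Mal'tsev operation $m(x,y,z)=x-y+z$ is a term operation, and every system of linear equations over $\zA$ cuts out a coset of a subgroup, hence is preserved by $m$ and by all translations by constants. By the Galois correspondence between pp-definability and preservation under polymorphisms on a finite domain, every such solution set is pp-definable in $\core(\Dl)^*\red B\fac\th$; in particular, $\rel_{3,a}$ (for each $a\in\zA$) and $\rel_{p+2}$ are pp-definable in the factor language, and one may enlarge $\Dl$ by their pp-witnesses without harm, since pp-expansions preserve bounded width. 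For the converse, if $\Gm'=\core(\Dl)^*\red B\fac\th$ contains $\rel_{3,a}$ and $\rel_{p+2}$, then $\Gm'$ pp-defines the CSP of linear equations over $\zA$, which is the classical obstruction to bounded width~\cite{Feder98:monotone}; since the reductions in Figure~\ref{fig:reductions-o} preserve bounded width, $\CSP(\Gm)$ cannot have bounded width either.

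For the NP-hard strengthening, the plan is to invoke the algebraic dichotomy~\cite{BKW17,Bulatov17:focs,Zhuk20:jacm}: NP-hardness of $\CSP(\Gm)$ forces $\core(\Dl)^*$ (for some pp-expansion $\Dl$) to have a two-element subalgebra-factor on which every polymorphism is essentially unary; this factor pp-defines every Boolean relation, in particular the Boolean linear-equation relations $\rel_{3,a}$ and $\rel_{p+2}$ with $p=2$ and $\zA=\zZ_2$. The main technical obstacle I anticipate is lining up the abstract ``affine subquotient'' delivered by Barto--Kozik with the concrete requirement that the specific relations $\rel_{3,a}$ and $\rel_{p+2}$ appear in $\Gm'$: one must carefully justify that pp-definability descends from $\Dl$ through taking cores, adding constants, restricting to $B$, and quotienting by $\th$, rather than merely existing abstractly in the factor-algebra. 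This is handled in~\cite{BKW17}, but verifying that the chain in Figure~\ref{fig:reductions-o} delivers these particular relations --- and not just some pp-equivalent encoding of linear equations --- is the delicate bookkeeping step.
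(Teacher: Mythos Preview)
The paper does not give its own proof of this proposition: it is stated as a citation of \cite{Bulatov09:width,Barto14:local,BKW17} and used as a black box (the same statement reappears verbatim as Proposition~\ref{pro:abelian} in Section~\ref{sec:gap}, again without proof). So there is nothing in the paper to compare your argument against.

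That said, your sketch is the standard way one extracts this statement from the cited sources: pass to the idempotent core, invoke the Barto--Kozik characterization to obtain an affine $\zZ_p$-subquotient, and then use the Pol--Inv Galois connection on that factor (whose polymorphism clone is generated by $x-y+z$ and constants, hence preserves every affine relation) to get pp-definability of the specific relations $\rel_{3,a}$ and $\rel_{p+2}$. Your observation that these relations must literally \emph{appear} in $\Gm'$ rather than merely be pp-definable there, and that this is handled by enlarging $\Dl$ with the pulled-back pp-witnesses, is exactly the bookkeeping that the phrasing of the proposition hides; it is correct because restriction to a subalgebra and quotienting by a congruence both commute with pp-definability in the appropriate sense. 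The NP-hard clause via a two-element trivial factor is likewise the standard route through the dichotomy theorem.
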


To prove Theorem~\ref{the:hsp-gap-o}, suppose that $\CSP(\Gm)$ does not have
bounded width. 
By Proposition~\ref{pro:abelian-o} there is 
a language $\Dl$ pp-definable in $\Gm$, a subalgebra $B$ of $\core(\Dl)^*$, a congruence
$\th$ of $\core(\Dl)^*\red B$, and an Abelian group $\zA$ of prime order $p$
such that $\Gm'=\core(\Dl)^*\red B\fac\th$ contains relations $\rel_{3,a},\rel_{p+2}$
for every $a\in\zA$.
Our goal is to show that if $\CSP(\Gm')$ has a satisfiability gap
then so does $\CSP(\Gm)$.

To give an idea of the gap-preservation proofs, we sketch how a satisfiability
gap is preserved.
Let $\Gm$ be a constraint language over the set $U_d$ and let $B$ be its subalgebra.
We show that if $\CSP(\Gm\red B)$ has a satisfiability gap then so does
$\CSP(\Gm)$. We show this for a gap of the first kind (over finite-dimensional Hilbert spaces); the infinite-dimensional case is more complicated, cf.~\Cref{sec:gap} for details.
Let $\Dl=\Gm\red B$. Then by Theorem~\ref{the:pp-o} we may assume $\Dl\sse\Gm$ and $B\in\Gm$. Let $e=|B|$ and $\pi:U_e\to U_d$ a bijection between $U_e$ and $B$. 
Let $\cP=(V,U_e,\cC)$ be a gap instance of $\CSP(\pi^{-1}(\Dl))$ and the instance $\cP^\pi=(V,U_d,\cC^\pi)$ constructed as follows: For every $\ang{\bs,\rel}\in\cC$ the instance $\cP^\pi$ contains $\ang{\bs,\pi(\rel)}$. As is easily seen, $\cP^\pi$ has no classic solution. Therefore, it suffices to show that for any satisfying operator assignment $\{A_v\mid v\in V\}$ for $\cP$, the assignment $C_v=\pi(A_v)$ is a satisfying operator assignment for $\cP^\pi$.

By a technical lemma that shows that injective maps on finite sets that are interpolated
by polynomials preserve normal operators that pairwise commute (cf.
Lemma~\ref{lem:mapping}), 
the $C_v$'s are normal, satisfy the condition $C_v^d=I$, and locally commute. For $\ang{\bs,\rel}\in\cC$, $\bs=(\vc vk)$, let $f^\pi_\rel(\vc xk)=f_\rel(\pi^{-1}(x_1)\zd\pi^{-1}(x_k))$. 
It can be shown that $\pi^{-1}(C_v)=A_v$, and therefore $f^\pi_\rel(C_{v_1}\zd C_{v_k})=I$. For any $\vc ak\in U_d$, if $(\vc ak)\in\pi(\rel)$ then $\vc ak\in B$. Therefore, $f^\pi_\rel(\vc ak)=1$ then $f_{\pi(\rel)}(\vc ak)=1$. By Lemma~\ref{lem:lemma-3} this implies $f_{\pi(\rel)}(C_{v_1}\zd C_{v_k})=I$.

The other reductions use similar ideas, carefully relying on the spectral
theorem given in Theorem~\ref{the:SST} to simultaneously diagonalize the
restriction of an operator assignment to the scope of a constraint,
Lemma~\ref{lem:lemma-3} that relates polynomial equations over $U_d$ and
operators, and the above mentioned result on
preservation of operator assignments by certain polynomials (Lemma~\ref{lem:mapping}).
The infinite-dimensional case is more delicate, relying on the General Strong
Spectral Theorem (Theorem~\ref{the:general-sst}).

\smallskip
To finish the proof of Theorem~\ref{thm:main-informal}, assume that
$\CSP(\Gm)$ does not have bounded width and $\Gm'$ from
Proposition~\ref{pro:abelian-o} is over a group of prime order $p$. As reductions preserve satisfiability gaps, it suffices that
$\CSP(\Dl_p)$, where $\Dl_p=\{\rel_{3,a}\mid a\in\zA\}\cup\{\rel_{p+2}\}$, has a
satisfiability gap. The result of
Slofstra and Zhang~\cite{SZ24:arxiv} provides a gap instance of $\CSP(\{\rel_{3,1},\rel_{3,-1}\})$ of the 
second kind for any prime $p$, and thus the same holds for $\CSP(\Gm)$.
For $p=2$ the Mermin-Peres magic square from~\cite{Mermin1990simple}
provides a gap instance of $\CSP(\{\rel_{3,1},\rel_{3,-1}\})$ of the first kind,
and same holds for $\CSP(\Gm)$.
Finally, the result by Slofstra~\cite{Slofstra20:jams} provides a gap instance of
$\CSP(\{\rel_{3,1},\rel_{3,-1}\})$ of the third kind, and we get the same for
$\CSP(\Gm\cup\{\rel_T\})$, where $\rel_T$ is the full binary relation on $U_d$ (cf.
Theorem~\ref{the:pp} in~\Cref{sec:operator-pp} and the discussion
therein). Finally, if $\CSP(\Gm)$ is NP-hard then $\Gm$ can simulate any
constraint language~\cite{Bulatov05:classifying}, and thus in particular $\Delta_2$.

\section{Bounded width and no gaps}\label{sec:no-gap}

In this section we prove the first part of our main result.

\begin{theorem}\label{the:no-gap}
Let $\Gm$ be a constraint language over $U_d$. If $\CSP(\Gm)$ has bounded
  width then it has no satisfiability gap of any kind.
\end{theorem}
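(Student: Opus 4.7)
The plan is to simulate the Singleton Linear Arc-Consistency (SLAC) procedure on instances of $\CSP(\Gm)$ by polynomial identities that any prospective operator assignment must satisfy. The strategy is by contradiction: take an unsatisfiable $\cP \in \CSP(\Gm)$; since $\CSP(\Gm)$ has bounded width, Kozik's theorem~\cite{Kozik21:sicomp} guarantees that SLAC refutes $\cP$. Assuming a satisfying operator assignment $\{A_v\}$ on some Hilbert space $\cH$, I would mirror each SLAC inference step by an operator identity and eventually derive $I = 0$, the desired contradiction. Because Lemma~\ref{lem:lemma-3} is stated uniformly for both finite- and infinite-dimensional $\cH$ (via Theorems~\ref{the:SST} and~\ref{the:general-sst}), the very same argument rules out satisfiability gaps of all three kinds at once.

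The first concrete step is to encode SLAC's state and rules algebraically. For $S \sse U_d$, set $Dom_S(x) = \prod_{k \in S}(\ld_k - x) + 1$, so that $Dom_S(a) = 1$ exactly when $a \in S$. For a SLAC rule $(x \in S) \meet \rel(x,y,\vc zr) \to (y \in S')$, set $Rule_{S,\rel,S'}(x,y,\vc zr) = (Dom_{\ov S}(x) - 1)(P_\rel(x,y,\vc zr) - \ld_1)(Dom_{S'}(y) - 1)$. Since the validity of the rule forces $Rule_{S,\rel,S'}$ to vanish on every classical satisfying assignment, Lemma~\ref{lem:lemma-3} lifts this to the operator identity $Rule_{S,\rel,S'}(A_x, A_y, A_{z_1}, \ldots, A_{z_r}) = 0$ for the pairwise commuting operators in the scope of the constraint.

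Next I would establish the operator analogue of Lemma~\ref{lem:transitive-poly-o}: for any SLAC derivation $(v_1 \in S_1) \to \cdots \to (v_\ell \in S_\ell)$ and every $i \le \ell$, the identity $(Dom_{\ov S_1}(A_{v_1}) - I)(Dom_{S_i}(A_{v_i}) - I) = 0$ holds. The base case $i = 2$ follows from the rule identity together with $P_\rel = I$ on the assigned operators. For the inductive step, I would multiply the $i$-th identity on the right by $(Dom_{S_{i+1}}(A_{v_{i+1}}) - I)$, and the fresh rule identity on the left by $(Dom_{\ov S_1}(A_{v_1}) - I)$, and subtract. The hard part will be handling the cross-term $Dom_{S_i}(A_{v_i}) - Dom_{\ov S_i}(A_{v_i})$, which is not a scalar. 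This is the main obstacle, and the idea is to show that $p(x) := Dom_S(x) - Dom_{\ov S}(x) = \prod_{k \in S}(x - \ld_k) - \prod_{k \notin S}(x - \ld_k)$ is invertible modulo $x^d - 1$ whenever $\eps \ne S \ne U_d$. This reduces to verifying $p(\ld_k) \ne 0$ for every $\ld_k \in U_d$, which holds since exactly one of the two products contains the factor $(\ld_k - \ld_k)$; hence $\gcd(p(x), x^d - 1) = 1$, yielding $q(x)$ with $p(x)q(x) \equiv c \pmod{x^d - 1}$ for a nonzero constant $c$. Multiplying each of the two derivation equations by $q(A_{v_i})$ before subtracting, and using $A_{v_i}^d = I$ to kill the $x^d - 1$ contribution, leaves a nonzero scalar multiple of the desired identity.

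Finally, because SLAC refutes $\cP$, there is a variable $v$ for which every singleton $\{\ld_k\}$ is derivably forbidden, so by the previous step the operator $A_v$ satisfies $\prod_{j \ne k}(A_v - \ld_j I) = 0$ for each $k \in \{0, \ldots, d-1\}$. A reverse induction on $|S|$, combining these elementary identities with the same $\gcd$-based invertibility trick together with $A_v^d = I$, then forces $\prod_{j \in S}(A_v - \ld_j I) = 0$ for every $S \sse U_d$; specializing to $S = \eps$ gives the empty product $I = 0$, contradicting $\{A_v\}$ being an operator assignment on a nonzero Hilbert space. The same derivation goes through verbatim in the infinite-dimensional setting, so no gap of any kind can exist.
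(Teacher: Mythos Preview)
Your proposal is correct and follows essentially the same approach as the paper: encode SLAC states via $Dom_S$, obtain the operator rule identity from Lemma~\ref{lem:lemma-3}, propagate it along a linear derivation chain using the invertibility of $Dom_S - Dom_{\ov S}$ modulo $x^d-1$, and finish with a reverse induction on $|S|$ down to $I=0$. The one minor deviation is in the last step: the paper's reverse induction needs neither the $\gcd$ trick nor $A_v^d=I$ --- subtracting the identities for $S\cup\{\ld_r\}$ and $S\cup\{\ld_{r+1}\}$ already produces the nonzero scalar factor $(\ld_{r+1}-\ld_r)I$ in front of $\prod_{j\in S}(A_v-\ld_jI)$.
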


The main idea behind the proof
of Theorem~\ref{the:no-gap} is to simulate the inference provided by SLAC 
by inference in polynomial equations. Let $\cS$ be a SLAC-program solving $\CSP(\Gm)$.
In order to prove Theorem~\ref{the:no-gap} we take an instance $\cP=(V,U_d,\cC)$
of $\CSP(\Gm)$ that has no solution, and therefore is not SLAC-consistent, as
$\CSP(\Gm)$ has bounded width, and prove that it also has no satisfying operator
assignment. This rules out a gap of the second kind, and thus also a gap of any
kind (cf. the discussion in~\Cref{sec:operator-CSP}). We will prove it by contradiction, assuming $\cP$ has a satisfying operator assignment $\{A_v\}$ and then using the rules of a SLAC-program solving $\CSP(\Gm)$ to infer stronger and stronger conditions on $\{A_v\}$ that eventually lead to a contradiction. We start with a series of lemmas that will help to express the restrictions on $\{A_v\}$.

The following lemma introduces a restriction that is satisfied by any operator assignment. 

\begin{lemma}\label{lem:whole-domain-poly}
Let $\cP=(V,U_d,\cC)\in\CSP(\Gm)$. For any operator assignment $\{A_v\}$
for $\cP$ we have
\[
\prod_{k=0}^{d-1}(\ld_k I-A_v)=0.
\]
\end{lemma}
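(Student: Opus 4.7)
The plan is to observe that the claim reduces to a single-variable identity and then exploit the defining condition $A_v^d = I$ of a normal operator of order $d$.

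First, I would note the standard factorization over $\zC$:
\[
x^d - 1 = \prod_{k=0}^{d-1}(x - \ld_k),
\]
since the $d$-th roots of unity $\ld_0,\ldots,\ld_{d-1}$ are precisely the roots of $x^d - 1$. Multiplying each factor by $-1$ gives
\[
\prod_{k=0}^{d-1}(\ld_k - x) = (-1)^d(x^d - 1).
\]
This is an identity in the polynomial ring $\zC[x]$, and since a single operator always commutes with itself, it remains valid when we substitute $A_v$ for $x$:
\[
\prod_{k=0}^{d-1}(\ld_k I - A_v) = (-1)^d(A_v^d - I).
\]
By condition (a) in the definition of an operator assignment, $A_v$ is a normal operator of order $d$, hence $A_v^d = I$, and the right-hand side vanishes.

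I expect no real obstacle here: the result is a direct consequence of the factorization of $x^d - 1$ and the definition of operator assignment, and does not even require normality or the commutativity across a constraint scope. The only thing to double-check is that the polynomial identity holds over arbitrary linear operators on a (possibly infinite-dimensional) Hilbert space, which is automatic because we are substituting a single operator that commutes with itself, so both sides are literal equalities of operators.

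As a sanity check, one could alternatively derive the same conclusion from Lemma~\ref{lem:lemma-3} with $r=1$, $m=0$, and $Q(x)=\prod_{k=0}^{d-1}(\ld_k - x)$: the polynomial $Q$ vanishes identically on $U_d$, so Lemma~\ref{lem:lemma-3} (which imposes no constraint equations to satisfy when $m=0$) immediately yields $Q(A_v)=0$ for any operator assignment, whether the underlying Hilbert space is finite- or infinite-dimensional.
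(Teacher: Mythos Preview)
Your proof is correct. Interestingly, your primary argument is more elementary than the paper's: you directly use the factorization $x^d-1=\prod_{k=0}^{d-1}(x-\ld_k)$ in $\zC[x]$ and then invoke only the condition $A_v^d=I$ from the definition of operator assignment, bypassing any appeal to spectral theory or to Lemma~\ref{lem:lemma-3}. The paper instead takes exactly the route you offer as a sanity check: it observes that $\prod_{k=0}^{d-1}(\ld_k-x)=0$ holds for all $x\in U_d$ (i.e.\ follows vacuously from the empty set of equations) and applies Lemma~\ref{lem:lemma-3} with $m=0$, noting that a single-variable assignment is automatically fully commuting. Your direct argument is slightly cleaner for this particular lemma since it avoids the detour through Lemma~\ref{lem:lemma-3}, while the paper's approach has the virtue of fitting into the same template used for the subsequent lemmas (such as Lemma~\ref{lem:rule-poly}), where Lemma~\ref{lem:lemma-3} is genuinely needed.
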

\begin{proof}
Note that the equation $\prod_{k=0}^{d-1}(\ld_k-x)=0$ is true for any
$x\in U_d$, that is, it follows from the empty set of equations. By
Lemma~\ref{lem:lemma-3}, it also holds for any fully commuting operator 
assignment. However, as the equation contains only one variable any 
operator assignment is fully commuting, and the result follows.
\end{proof}

Recall that every rule of a SLAC-program has the form $(x\in S)\meet\rel(x,y,\vc zr))\to (y\in S')$ for some variables $x,y\in V$, a constraint $\ang{(x,y,\vc zr),\rel}$, and sets $S,S'\sse U_d$. Therefore, we need to show how to encode unary relations and rules of a SLAC-program through polynomials. For any $S\sse U_d$, we represent the unary constraint restricting the domain of a variable
$x$ to the set $S$ by the polynomial
\[
Dom_S(x)=\prod_{k\in S}(\ld_k-x)+1.\footnote{This is not the representation of $S$ as in the beginning of~\Cref{sec:operator-CSP}, as $Dom_S(a)$ is not necessarily equal to $\ld_1$ for $a\not\in S$. However, it suffices for our purposes, because we only need the property that $Dom_S(a)=1$ if and only if $a\in S$.}
\]
Similarly, the rule $(x\in S)\meet\rel(x,y,\vc zr))\to (y\in S')$ of the SLAC 
program is represented by
\[
Rule_{S,\rel,S'}(x,y,\vc zr)=(Dom_{\ov S}(x)-1)(P_\rel(x,y,\vc zr)-\ld_1)
(Dom_{S'}(y)-1).
\]
As the next lemma shows, any operator assignment is a zero of $Rule_{S,\rel,S'}$.

\begin{lemma}\label{lem:rule-poly}
Let $\cP=(V,U_d,\cC)\in\CSP(\Gm)$. For any operator assignment $\{A_v\}$
for $\cP$ and any rule $(x\in S)\meet\rel(x,y,\vc zr))\to (y\in S')$ of the SLAC program for $\CSP(\Gm)$ we have
\begin{eqnarray*}
\lefteqn{Rule_{S,\rel,S'}(A_x,A_y,A_{z_1}\zd A_{z_r}) }\\
&=& (Dom_{\ov S}(A_x)-I)
(P_\rel(A_x,A_y,A_{z_1}\zd A_{z_r})-\ld_1I)(Dom_{S'}(A_y)-I)=0.
\end{eqnarray*}
\end{lemma}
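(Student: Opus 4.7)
The plan is to reduce the operator identity to a classical identity on $U_d$ and then invoke Lemma~\ref{lem:lemma-3}. First I would show that the polynomial $Rule_{S,\rel,S'}(x,y,\vc zr)$ vanishes on every classical tuple $(a,b,\vc cr)\in U_d^{r+2}$, via a three-way case analysis based on whether $(a,b,\vc cr)\in\rel$ and whether $a\in S$. If $(a,b,\vc cr)\notin\rel$, then by the definition of $f_\rel$ we have $P_\rel(a,b,\vc cr)=\ld_1$, so the middle factor $P_\rel-\ld_1$ vanishes. If $(a,b,\vc cr)\in\rel$ but $a\notin S$, then $a\in\ov S$, so one of the factors $\ld_k-a$ in the product defining $Dom_{\ov S}(a)$ is zero, giving $Dom_{\ov S}(a)=1$, and hence the first factor of $Rule_{S,\rel,S'}$ vanishes. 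Finally, if $(a,b,\vc cr)\in\rel$ and $a\in S$, then the soundness of the SLAC rule $(x\in S)\meet\rel(x,y,\vc zr)\to(y\in S')$ forces $b\in S'$, so $Dom_{S'}(b)=1$ and the third factor vanishes.

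Having established that $Rule_{S,\rel,S'}$ is identically zero on $U_d^{r+2}$, I would apply Lemma~\ref{lem:lemma-3} with $Q=Rule_{S,\rel,S'}$ and empty set of premise equations ($m=0$). The hypothesis that the operators in question pairwise commute is exactly condition~(b) in the definition of an operator assignment: the variables $x,y,\vc zr$ are precisely the scope of the constraint $\ang{(x,y,\vc zr),\rel}\in\cC$, so $A_x,A_y,A_{z_1}\zd A_{z_r}$ pairwise commute. Lemma~\ref{lem:lemma-3} then yields $Rule_{S,\rel,S'}(A_x,A_y,A_{z_1}\zd A_{z_r})=0$ on any Hilbert space, as required.

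I do not anticipate any deep obstacle, since all real work is already packaged in Lemma~\ref{lem:lemma-3}. The only point that requires some care is the convention for the characteristic polynomial $P_\rel$: because $f_\rel$ takes the value $\ld_0=1$ on tuples in $\rel$ and $\ld_1$ off $\rel$, the correct factor that must vanish off $\rel$ is $P_\rel-\ld_1$, which is why this particular form of $Rule_{S,\rel,S'}$ was chosen. Tracking this sign/index convention through the case analysis is the main thing to get right; beyond that, the argument is routine.
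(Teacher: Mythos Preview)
Your proposal is correct and follows essentially the same approach as the paper: establish that $Rule_{S,\rel,S'}$ vanishes identically on $U_d^{r+2}$, invoke Lemma~\ref{lem:lemma-3} with the empty set of premises, and use that $x,y,\vc zr$ lie in a common constraint scope to guarantee the required commutativity. The paper's proof merely asserts the classical vanishing without spelling out the three-way case analysis you give, so your write-up is in fact more detailed but otherwise identical in structure.
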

\begin{proof}
Note that the equation $Rule_{S,\rel,S'}(x,y,\vc zr)=0$ is true for any
$x,y,\vc zr\in U_d$, that is, it follows from the empty set of equations. By
Lemma~\ref{lem:lemma-3}, it also holds for any fully commuting operator 
assignment. However, as all the variables $x,y,\vc zr$ belong to the scope
of the same constraint, the operators $A_x,A_y,A_{z_1}\zd A_{z_r}$ 
pairwise commute.  The result follows.
\end{proof}

Now, assume $\cP=(V,U_d,\cC)$ is not SLAC-consistent and $D_v$ denote the domain of $v\in V$ obtained after establishing SLAC-consistency. This means that for 
some $v\in V$ there is a derivation of $D_v=\eps$ using only facts
$\rel(\bs)$ for $\ang{\bs,\rel}\in\cC$ and 
$T_B(x_i)\meet\rel(\vc xk)\to T_C(x_j)$ for the rules of the SLAC-program. Moreover, this derivation can be subdivided into sections
of the form $(v=a)\to(v\ne a)$, each of which is linear. The latter 
condition means that each such section looks like a chain
$(v=a)\to(v_1\in S_1)\to\dots\to(v_\ell\in S_\ell)\to(v\in D_v-\{a\})$,
where each step is by a rule of the form 
$((v_i\in S_i)\meet\rel_i(v_i,v_{i+1},\vc ur))\to(v_{i+1}\in S_{i+1})$.

\begin{lemma}\label{lem:derivation-poly}
For any satisfying operator assignment $\{A_v\}$ for $\cP$ and any rule $(x\in S)\meet\rel(x,y,\vc zr))\to (y\in S')$ of the SLAC program for $\CSP(\Gm)$ if 
$\rel(x,y,\vc zr)\in\cC$ then 
\[
(Dom_{\ov S}(A_x)-I)(Dom_{S'}(A_y)-I)=I.
\]
\end{lemma}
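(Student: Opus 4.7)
The plan is to derive the claim directly from Lemma~\ref{lem:rule-poly}, exploiting the fact that when the constraint $\ang{(x,y,\vc zr),\rel}$ actually belongs to $\cC$ and is satisfied by $\{A_v\}$, the middle factor of the triple product in Lemma~\ref{lem:rule-poly} collapses to a nonzero scalar multiple of the identity, which can then be cancelled. (I read the right-hand side of the displayed equation as $0$ rather than $I$, since that is what is required by the induction step in Lemma~\ref{lem:transitive-poly-o} and is what the argument below actually yields.)

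Concretely, I would first apply Lemma~\ref{lem:rule-poly} to the rule $(x\in S)\meet \rel(x,y,\vc zr)\to(y\in S')$ and the assignment $\{A_v\}$ to obtain
\[
(Dom_{\ov S}(A_x)-I)\bigl(P_\rel(A_x,A_y,A_{z_1}\zd A_{z_r})-\ld_1 I\bigr)(Dom_{S'}(A_y)-I) = 0.
\]
Next, since $\ang{(x,y,\vc zr),\rel}\in\cC$ and $\{A_v\}$ is a satisfying operator assignment for $\cP$, the very definition of satisfying operator assignment gives $P_\rel(A_x,A_y,A_{z_1}\zd A_{z_r}) = I$, so the middle factor equals $(1-\ld_1)\,I$. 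Scalar multiples of $I$ commute with every operator, so the displayed equation becomes
\[
(1-\ld_1)\,(Dom_{\ov S}(A_x)-I)(Dom_{S'}(A_y)-I) = 0.
\]
Since $\ld_1 = e^{2\pi i/d}$ is a primitive $d$-th root of unity (the case $d=1$ is trivial, as $U_d$ has a single element and no nontrivial rules arise), the scalar $1-\ld_1$ is nonzero and can be cancelled, yielding the claimed identity.

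There is essentially no obstacle here; the lemma is a one-step consequence of Lemma~\ref{lem:rule-poly} combined with the hypothesis that $\rel(x,y,\vc zr)$ is an actual satisfied constraint of $\cP$, which turns the ``soft'' factor $P_\rel - \ld_1 I$ into a cancellable scalar. The only minor check is the nonvanishing of $1-\ld_1$, which is immediate for $d\ge 2$.
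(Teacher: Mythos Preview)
Your proof is correct and follows essentially the same route as the paper: both invoke Lemma~\ref{lem:rule-poly}, use $P_\rel(A_x,A_y,A_{z_1}\zd A_{z_r})=I$ from the satisfying-assignment hypothesis to reduce the middle factor to the nonzero scalar $(1-\ld_1)I$, and cancel. The paper phrases the cancellation as a multiply-and-subtract rather than a direct substitution, but the content is the same; you are also right that the displayed right-hand side should be $0$, not $I$.
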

\begin{proof}
By Lemma~\ref{lem:rule-poly}, the equation 
\[
(Dom_{\ov S}(A_x)-I)(P_\rel(A_x,A_y,A_{z_1}\zd A_{z_r})-\ld_1I)
(Dom_{S'}(A_y)-I)=0
\]
holds as well as the equation 
\[
P_\rel(A_x,A_y,A_{z_1}\zd A_{z_r})-I=0.
\]
Multiplying the latter one by $(Dom_{\ov S}(A_x)-I)$ on the left, and 
by $(Dom_{S'}(A_y)-I)$ on the right and subtracting it from the first equation
we obtain
\[
-(Dom_{\ov S}(A_x)-I)(1-\ld_1)I(Dom_{S'}(A_y)-I)=0.
\]
The result follows.
\end{proof}

\begin{lemma}\label{lem:transitive-poly}
Let $(v_1\in S_1)\to\dots\to(v_\ell\in S_\ell)$ be a derivation in the SLAC-program $\cS$ and $\{A_v\}$ a satisfying operator assignment for $\cP$. 
Then for each $i=2\zd\ell$
\[
(Dom_{\ov S_1}(A_{v_1})-I)(Dom_{S_i}(A_{v_i})-I)=0.
\]
\end{lemma}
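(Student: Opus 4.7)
The plan is induction on $i$, along the lines of the sketch given for Lemma~\ref{lem:transitive-poly-o}. For the base case $i=2$, the very first step of the derivation comes from a SLAC rule $(v_1\in S_1)\meet\rel(v_1,v_2,\vc ur)\to(v_2\in S_2)$ whose relation is a constraint of $\cP$, and Lemma~\ref{lem:derivation-poly} yields the desired identity directly.

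For the inductive step I would combine the inductive hypothesis $(Dom_{\ov S_1}(A_{v_1})-I)(Dom_{S_i}(A_{v_i})-I)=0$ with the equation $(Dom_{\ov S_i}(A_{v_i})-I)(Dom_{S_{i+1}}(A_{v_{i+1}})-I)=0$ furnished by Lemma~\ref{lem:derivation-poly} applied to the SLAC rule producing $(v_{i+1}\in S_{i+1})$. Naively, multiplying the hypothesis on the right by $(Dom_{S_{i+1}}(A_{v_{i+1}})-I)$ and the new equation on the left by $(Dom_{\ov S_1}(A_{v_1})-I)$ and subtracting almost works, but leaves the non-scalar factor
\[
p(A_{v_i})\;=\;Dom_{S_i}(A_{v_i})-Dom_{\ov S_i}(A_{v_i})\;=\;\prod_{k\in S_i}(\ld_k I-A_{v_i})-\prod_{k\notin S_i}(\ld_k I-A_{v_i})
\]
sandwiched in the middle.

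The main obstacle, and the key technical step, is to invert $p(A_{v_i})$ modulo the relation $A_{v_i}^d=I$. For any $\ld_j\in U_d$, exactly one of the two products defining $p(x)$ vanishes at $x=\ld_j$ while the other does not, so $p(x)$ has no root in $U_d$. Hence $\gcd(p(x),x^d-1)=1$ in $\zC[x]$, and B\'ezout yields $q(x),r(x)\in\zC[x]$ and a nonzero constant $c$ with $p(x)q(x)+r(x)(x^d-1)=c$; using $A_{v_i}^d=I$ this gives $p(A_{v_i})q(A_{v_i})=cI$. I therefore insert $q(A_{v_i})$ into the naive argument: multiply the inductive hypothesis on the right by $q(A_{v_i})(Dom_{S_{i+1}}(A_{v_{i+1}})-I)$ and the new equation on the left by $(Dom_{\ov S_1}(A_{v_1})-I)q(A_{v_i})$, and subtract. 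Since $Dom_{S_i}(A_{v_i})$ and $q(A_{v_i})$ are both polynomials in $A_{v_i}$ and hence commute, the $-I$ contributions cancel and the middle factor collapses to $q(A_{v_i})p(A_{v_i})=cI$, producing $c\,(Dom_{\ov S_1}(A_{v_1})-I)(Dom_{S_{i+1}}(A_{v_{i+1}})-I)=0$, from which the induction follows by dividing through by $c$.

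One point I would emphasize is that throughout this manipulation the only commutativity invoked is that of $A_{v_i}$ with itself; no commutativity between $A_{v_1}$ and $A_{v_i}$ is used, which is crucial since $v_1$ and $v_i$ need not appear together in any constraint. The degenerate cases $S_i\in\{\eps,U_d\}$ either make one of the two assumed equations force its factor to vanish outright (so the conclusion follows without the B\'ezout trick) or correspond to a step at which the derivation has already closed into a contradiction, and can be dispatched by a brief case analysis.
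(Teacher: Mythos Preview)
Your proposal is correct and follows essentially the same argument as the paper's proof: induction on $i$ with base case from Lemma~\ref{lem:derivation-poly}, and in the inductive step a B\'ezout inversion of $p(x)=Dom_{S_i}(x)-Dom_{\ov S_i}(x)$ modulo $x^d-1$, exploiting only that polynomials in $A_{v_i}$ commute with one another. One minor simplification: the paper dispenses with your final case analysis by adopting the convention that an empty product equals~$1$, under which your own observation (``exactly one of the two products vanishes at $\ld_j$'') already covers $S_i\in\{\eps,U_d\}$ uniformly, so no separate treatment is needed.
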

\begin{proof}
We proceed by induction on $i$. For $i=2$ the equation holds by 
Lemma~\ref{lem:derivation-poly}.
In the inductive case we have equations
\begin{equation}\label{equ:equation1}
(Dom_{\ov S_1}(A_{v_1})-I)(Dom_{S_i}(A_{v_i})-I)=0,
\end{equation}
and 
\begin{equation}\label{equ:equation2}
(Dom_{\ov S_i}(A_{v_i})-I)(Dom_{S_{i+1}}(A_{v_{i+1}})-I)=0.
\end{equation}
The idea is to multiply (\ref{equ:equation1}) by 
$(Dom_{S_{i+1}}(A_{v_{i+1}})-I)$ on the right, multiply (\ref{equ:equation2})
by $(Dom_{\ov S_1}(A_{v_1})-I)$ on the left and subtract. The problem
is, however, that 
\[
Dom_{S_i}(A_{v_i})-Dom_{\ov S_i}(A_{v_i})
\]
is not a constant polynomial. So, we also need to prove that any polynomial
of the form 
\[
Dom_S(x)-Dom_{\ov S}(x)
\]
is invertible modulo $x^d-1$. The polynomial has the form
\[
p(x)=\prod_{k\in S}(x-\ld_k)-\prod_{k\not\in S}(x-\ld_k).
\]
As is easily seen, assuming that the product of an empty set of factors equals~1, $\ld_k$ is not a
root of $p(x)$ for any $\ld_k\in U_d$. Therefore the greatest common divisor of $p(x)$ and $x^d-1$ has degree 0, and hence there exists $q(x)$ such that 
\[
p(x)q(x)=c+r(x)(x^d-1).
\]

Thus before subtracting equations (\ref{equ:equation1}) and 
(\ref{equ:equation2}) we also multiply them by $q(A_{v_i})$. Then we get
\begin{eqnarray*}
(Dom_{\ov S_1}(A_{v_1})-I)(Dom_{S_i}(A_{v_i})q(A_{v_i})
-q(A_{v_i})Dom_{\ov S_i}(A_{v_i}))(Dom_{S_{i+1}}(A_{v_{i+1}})-I) &=& 0\\
(Dom_{\ov S_1}(A_{v_1})-I)q(A_{v_i})(Dom_{S_i}(A_{v_i})
-Dom_{\ov S_i}(A_{v_i}))(Dom_{S_{i+1}}(A_{v_{i+1}})-I) &=& 0\\
c(Dom_{\ov S_1}(A_{v_1})-I)(Dom_{S_{i+1}}(A_{v_{i+1}})-I) &=& 0.
\end{eqnarray*}
The first transformation uses the fact that $A_{v_i}$ commutes with itself, while the 
second one uses the property $A_{v_i}^d=I$.
The result follows.
\end{proof}

\begin{proof}[Proof of Theorem~\ref{the:no-gap}]
To complete the proof of Theorem~\ref{the:no-gap} note that the lack of
SLAC-consistency means that for some $v\in V$ the statement 
$(v=\ld_k)\to(v\ne\ld_k)$ can be derived from $\cP$ for every $\ld_k\in U_d$.
By Lemma~\ref{lem:transitive-poly}, for any operator assignment $\{A_w\}$
and any $\ld_k\in U_d$ the operator $A_v$ satisfies the equation
\[
\prod_{j\ne k}(A_v-\ld_jI)=0.
\]
We show that for any $S\sse U_d$ these equations imply 
\[
\prod_{j\in S}(A_v-\ld_jI)=0.
\]
Then for $S=\eps$ we get $I=0$, witnessing that $\cP$ has no satisfying 
operator assignment.

We proceed by (reverse) induction on the size of $S$. Suppose the statement is true 
for all sets of size $r$ and let $S\sse U_d$ be such that $|S|=r-1$. Without loss fo
generality, assume that
$S=\{\ld_0\zd\ld_{r-1}\}$. Let $S_1=S\cup\{\ld_r\}, S_2=S\cup\{\ld_{r+1}\}$.
Consider
\begin{equation}\label{equ:equation3}
\prod_{j\in S_1}(A_v-\ld_jI)=0
\end{equation}
and
\begin{equation}\label{equ:equation4}
\prod_{j\in S_2}(A_v-\ld_jI)=0.
\end{equation}
Subtracting (\ref{equ:equation4}) from (\ref{equ:equation3}) we obtain
\[
(A_v-\ld_rI-A_v+\ld_{r+1}I)\prod_{j=0}^{r-1}(A_v-\ld_jI)=
(\ld_{r+1}-\ld_r)I\prod_{j=0}^{r-1}(A_v-\ld_jI)=0,
\]
implying the equation for $S$.
\end{proof}

\section{Reductions through pp-definitions}\label{sec:operator-pp}

In this section we prove that the so-called primitive positive definitions, a key tool in the
algebraic approach to CSPs~\cite{Bulatov05:classifying}, not only give rise to
(polynomial-time) reductions that preserve satisfiability over $U_d$ but also
preserve satisfiability via operators. This was
established for the special case of Boolean domains (i.e., for $d=2$)
in~\cite{AKS19:jcss} and the same idea works for larger domains. We will need
this result later in~\Cref{sec:gap} to prove that certain CSPs of 
unbounded width admit a satisfiability gap.

Let $\Gm$ be a constraint language over $U_d$, let $r$ be an integer, and
let $x_1,\ldots,x_r$ be variables ranging over the domain $U_d$. A primitive
positive formula (\emph{pp-formula}) over $\Gm$ is a formula of the form
\begin{equation}\label{eq:pp-formula}
  \phi(x_1,\ldots,x_r)=\exists y_1\cdots\exists y_s(\rel_1(\bz_1)\wedge\cdots\wedge
  \rel_m(\bz_m)),
\end{equation}
where $\rel_i\in\Gm$ is either the binary equality relation on $U_d$ or a relation over $U_d$ of arity $r_i$ and each $\bz_i$ is an $r_i$-tuple of variables from
$\{x_1,\ldots,x_r\}\cup\{y_1,\ldots,y_s\}$.
A relation $\rel\subseteq U_d^r$ is primitive positive definable
(\emph{pp-definable}) from $\Gm$ if there
exists a pp-formula $\phi(x_1,\ldots,x_r)$ over $\Gm$ such that $\rel$ is equal to the set of
models of $\phi$, that is, the set of $r$-tuples $(a_1,\ldots,a_r)\in U_d^r$
that make the formula $\phi$ true over $U_d$ if $a_i$ is
substituted for $x_i$ in $\phi$ for every $i\in [r]$.

Let $\rel_T=U_d^2$ denote the full binary relation on $U_d$.
Our goal in this section is to prove the following result.

\begin{theorem}\label{the:pp}
  Let $\Gm$ be a constraint language over $U_d$ and let $R$ be pp-definable from
  $\Gm$.
  \begin{enumerate}
    \item If $\CSP(\Gm\cup\{R\})$ has a satisfiability gap of the first or second kind then so does $\CSP(\Gm)$.
    \item If $\CSP(\Gm\cup\{R\})$ has a satisfiability gap of the third kind then so does $\CSP(\Gm\cup\{R_T\})$.
  \end{enumerate}
\end{theorem}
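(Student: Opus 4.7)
The plan is to establish Theorem~\ref{the:pp} via the \emph{gadget construction} standard in the algebraic approach to CSPs. Fix a pp-definition
\[
\phi(x_1,\ldots,x_r) \;=\; \exists y_1 \cdots \exists y_s \, \bigl(\rel_1(\bz_1) \wedge \cdots \wedge \rel_m(\bz_m)\bigr)
\]
of $R$ over $\Gm$. Given $\cP = (V,U_d,\cC) \in \CSP(\Gm\cup\{R\})$, form $\cP' \in \CSP(\Gm)$ by replacing each constraint $\langle \bu,R\rangle \in \cC$ with disjoint fresh copies $y_1^c,\ldots,y_s^c$ of the existentials together with atoms $\rel_j(\bz_j^c)$ obtained by renaming according to $\bu$ and the fresh copies (equality atoms handled by identifying variables). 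For Part~2, form $\cP^* \in \CSP(\Gm\cup\{R_T\})$ by further adding $R_T(v_i,v_j)$ for each pair of variables in any original $R$-scope; since $R_T$ is the full binary relation these added constraints are classically vacuous, and standard correctness of the gadget construction~\cite{Bulatov05:classifying,BKW17} gives classical equivalence of both $\cP'$ and $\cP^*$ with $\cP$.

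Part~1 reduces to the lifting lemma (Lemma~\ref{lem:lift-o}): any satisfying operator assignment $\{A_v\}$ for $\cP$ on a Hilbert space $\cH$ lifts to one for $\cP'$ on the same $\cH$. Fix an $R$-constraint $c=\langle(v_1,\ldots,v_r),R\rangle$ and, once and for all, a classical choice $\bw : R \to U_d^s$ of witnesses for $\phi$. In the finite-dimensional case Theorem~\ref{the:SST} yields a unitary $U_c$ with $A_{v_i}=U_c^{-1} E_i^c U_c$ for diagonal $E_i^c$ with entries in $U_d$; the identity $P_R(A_{v_1},\ldots,A_{v_r})=I$ then forces entry-wise that $(E_1^c(j,j),\ldots,E_r^c(j,j)) \in R$ for every $j$. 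Setting $F_k^c(j,j) := w_k(E_1^c(j,j),\ldots,E_r^c(j,j))$ and $B_{y_k^c} := U_c^{-1} F_k^c U_c$ produces operators normal of order $d$ which, together with the $A_{v_i}$, are simultaneously diagonalized by $U_c$, so every gadget atom holds as an operator identity. In the infinite-dimensional case Theorem~\ref{the:general-sst} gives $A_{v_i}=U_c^{-1} T_{a_i^c} U_c$ for measurable $a_i^c : \Omega_c \to U_d$ almost everywhere, and $b_k^c(\omega) := w_k(a_1^c(\omega),\ldots,a_r^c(\omega))$ is measurable because $R$ is finite, so the gadget atoms hold a.e.\ and thus as operator identities.

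For Part~2, the forward direction (any infinite-dimensional satisfying assignment for $\cP$ extends to $\cP^*$) is immediate from Part~1 since the added $R_T$-constraints merely record commutations already present within each $R$-scope. For the converse, suppose $\{A'_v\}$ is a finite-dimensional satisfying assignment for $\cP^*$; for each original $R$-scope the $R_T$-constraints force $A'_{v_1},\ldots,A'_{v_r}$ to pairwise commute, and Theorem~\ref{the:SST} simultaneously diagonalizes them in a basis $U_c$. It suffices to show that their joint spectrum lies in $R$, for then $\{A'_v\}$ restricted to $V$ is a finite-dimensional satisfying assignment for $\cP$, contradicting the assumed finite-dimensional non-satisfiability of $\cP$. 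The main obstacle is precisely this last step: the auxiliary operators $A'_{y_k^c}$ need not commute with every $A'_{v_i}$ (only with those sharing a gadget atom), so no single basis simultaneously diagonalizes the whole gadget family. The plan is to stratify $\cH$ along the joint spectral projections of $A'_{v_1},\ldots,A'_{v_r}$ and, one gadget atom at a time, exploit that each $A'_{y_k^c}$ preserves the relevant \emph{partial} spectral subspaces to extract, via the operator identity $P_{\rel_j}=I$ and Lemma~\ref{lem:lemma-3}, classical witnesses for the existentials of $\phi$ on each joint eigenspace; these patch together to show that every joint eigenvalue of $(A'_{v_1},\ldots,A'_{v_r})$ lies in $R$.
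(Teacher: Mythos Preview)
Your Part~1 and the forward direction of Part~2 are correct and follow the paper's argument essentially verbatim: gadget replacement, simultaneous diagonalization of the scope operators via the spectral theorem, and pointwise choice of witnesses.

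The genuine gap is in the converse direction of Part~2. You correctly identify the obstacle --- the auxiliary operators $A'_{y_k^c}$ only commute with those $A'_{v_i}$ that share a gadget atom with $y_k^c$, so the block is not jointly diagonalizable --- but your proposed remedy does not work. Stratifying $\cH$ along the joint spectral projections of $A'_{v_1},\ldots,A'_{v_r}$ does not help, because the $A'_{y_k^c}$ need not preserve the joint eigenspaces $E_\lambda$ at all (they only preserve the coarser eigenspaces of the particular $A'_{v_i}$'s they commute with). Consequently, the ``classical witnesses'' you extract from one gadget atom live on a different subspace than those extracted from another atom involving the same auxiliary variable, and there is no mechanism to ``patch'' them into a single value of $y_k$ valid on $E_\lambda$. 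Concretely, for a chain $\phi(x_1,x_2)=\exists y_1\exists y_2\,R_1(x_1,y_1)\wedge R_2(y_1,y_2)\wedge R_3(y_2,x_2)$, the operator $A'_{y_1}$ need not preserve $E_{(\lambda_1,\lambda_2)}$, and the eigenvalues of $A'_{y_1}$ on the $\lambda_1$-eigenspace of $A'_{v_1}$ have no a~priori relation to the joint spectrum of $(A'_{y_1},A'_{y_2})$ restricted to any space on which $A'_{v_2}$ acts as $\lambda_2$.

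The paper avoids this entirely and simply applies Lemma~\ref{lem:lemma-3} to the full family $x_1,\ldots,x_r,y_1,\ldots,y_s$: classically, $P_{\rel_1}(\bz_1)=\cdots=P_{\rel_m}(\bz_m)=1$ implies $P_R(x_1,\ldots,x_r)=1$, so the same implication holds for any fully commuting operator assignment. For Lemma~\ref{lem:lemma-3} to apply one needs the whole block $A'_{v_1},\ldots,A'_{v_r},A'_{y_1^c},\ldots,A'_{y_s^c}$ to be pairwise commuting; the clean way to guarantee this is to add $R_T$ constraints between \emph{all} pairs of variables in the block, not only between the $v_i$'s. This does not harm the forward direction, since the lift constructed in Part~1 already produces block operators that are simultaneously diagonalizable and hence pairwise commute. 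With this adjustment to $\cP^*$, the converse is a one-line application of Lemma~\ref{lem:lemma-3}, and no spectral patching is needed.
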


Let $\rel\subseteq U_d^r$ be a pp-definable formula
over $\Gm$ via the pp-formula $\phi(x_1,\ldots,x_r)$ as
in~(\ref{eq:pp-formula}). Let $=_d$ denote the equality relation on $U_d$. 
Given an instance
$\cP\in\CSP(\Gm\cup\{\rel\})$ we describe a construction of an instance
$\cP'\in\CSP(\Gm\cup\{=_d\})$, and then transform $\cP'$ into an instance $\cP^=$ of $\CSP(\Gm)$ that is, in some sense, equivalent to $\cP$. 
We start with the instance $\cP$. For every constraint $\ang{\bu,\rel}$ of $\cP$ with
$\bu=(u_1,\ldots,u_r)$, we
introduce $s$ fresh new variables $t_1,\ldots,t_s$ for the quantified
variables in~(\ref{eq:pp-formula}); furthermore, we replace $\ang{\bu,\rel}$ by $m$ constraints
$\ang{\bw_i,\rel_i}$, $i\in [m]$, where $\bw_i$ is the tuple of variables obtained
from $\bz_i$ in~(\ref{eq:pp-formula}) by replacing $x_j$ by $u_j$, $j\in [r]$,
and by replacing $y_j$ by $t_j$, $j\in [s]$.
The collection of variables $u_1,\ldots,u_r,t_1,\ldots,t_s$ is called the
\emph{block} of the constraint $\ang{\bu,\rel}$ in $\cP'$.
This construction is known as the \emph{gadget construction} in the CSP
literature and it is known that $\cP$ has a solution over $U_d$
if and only if $\cP'$ has a solution over $U_d$~\cite{Bulatov05:classifying,BKW17}.

Next, suppose that $\cP'$ is an instance of $\CSP(\Gm\cup\{=_d\})$ with the set of variables $\{\vc xn\}$. The instance $\cP^=$ is obtained by identifying variables $x_i,x_j$ whenever there is the constraint $=_d(x_,x_j)$ in $\cP'$. More formally, let $\vr'$ denote the binary relation on $[n]$ given by $(i,j)\in\vr'$ if and only if $\cP'$ contains the constraint $=_d(x_i,x_j)$, and let $\vr$ be the symmetric-transitive closure of $\vr'$. Select a representative of every equivalence class of $\vr$; without loss of generality assume $[m]$ is the set of these representatives. Let $\vf:[n]\to[m]$ be the mapping that maps every $i\in[n]$ to the representative of its equivalence class. Then $\cP^=$ is the instance of $\CSP(\Gm)$ with the set of variables $\{\vc ym\}$ that, for every constraint $\ang{(x_{i_1}\zd x_{i_r}),\rel}$, where $\rel$ is not the equality relation, contains the constraint $\ang{(y_{\vf(i_1)}\zd y_{\vf(i_r)}),\rel}$.
Thus, in order to prove Theorem~\ref{the:pp}\,(1), it suffices to show the following lemma.

\begin{lemma}\label{lem:lift}
  Let $\Gm$ be a constraint language over $U_d$ and let $\rel$ be pp-definable
  from $\Gm$. Furthermore, let $\cP\in\CSP(\Gm\cup\{\rel\})$ and let
  $\cP'\in\CSP(\Gm\cup\{=_d\})$, $\cP^=\in\CSP(\Gm)$ be the gadget construction replacing constraints involving
  $\rel$ in $\cP$, and the instance obtained from $\cP'$ by identifying the variable related by the equality relation. If there is a satisfying operator assignment for $\cP$ then there is a satisfying operator assignment for $\cP^=$. 
\end{lemma}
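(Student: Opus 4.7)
The plan is to lift the satisfying operator assignment $\{A_v\}_{v\in V}$ of $\cP$ first to a satisfying operator assignment for $\cP'$ by producing operators $B_{t_j}$ for the fresh gadget variables $t_1,\ldots,t_s$ of each block, and then to descend to $\cP^=$ by showing that operators on variables related by $=_d$ must be equal. The main tools will be the Strong Spectral Theorem for commuting normal operators (Theorem~\ref{the:SST} in the finite-dimensional case and Theorem~\ref{the:general-sst} in general) together with Lemma~\ref{lem:lemma-3}.

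For the first half, I would fix a block coming from a constraint $\ang{\bu,\rel}\in\cC$ of $\cP$ with $\bu=(u_1,\ldots,u_r)$. Since $A_{u_1},\ldots,A_{u_r}$ are normal of order $d$ and pairwise commute, the Strong Spectral Theorem gives a measure space $(\Omega,\cM,\mu)$, a unitary $U:\cH\to L^2(\Omega,\mu)$, and functions $a_i\in L^\infty(\Omega,\mu)$ with $A_{u_i}=U^{-1}T_{a_i}U$. As in the proof of Lemma~\ref{lem:lemma-3}, the condition $A_{u_i}^d=I$ forces $a_i(\omega)\in U_d$ almost everywhere and $P_\rel(A_{u_1},\ldots,A_{u_r})=I$ forces $(a_1(\omega),\ldots,a_r(\omega))\in\rel$ almost everywhere.

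Next, using the pp-definition~(\ref{eq:pp-formula}) of $\rel$, I would pick once and for all a witness $\bc^{\ba}=(c^{\ba}_1,\ldots,c^{\ba}_s)\in U_d^s$ for each $\ba\in\rel$, making $\rel_1(\bz_1)\wedge\cdots\wedge\rel_m(\bz_m)$ true under $x_i=a_i$, $y_j=c^{\ba}_j$. For $j\in[s]$, define $b_j:\Omega\to U_d$ by $b_j(\omega)=c^{(a_1(\omega),\ldots,a_r(\omega))}_j$ and set $B_{t_j}=U^{-1}T_{b_j}U$. Because $\rel$ is finite, $b_j$ factors through the finite measurable partition of $\Omega$ given by the level sets of $(a_1,\ldots,a_r)$, hence is measurable and essentially bounded, so $B_{t_j}$ is normal of order $d$. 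All of $\{A_{u_1},\ldots,A_{u_r},B_{t_1},\ldots,B_{t_s}\}$ are unitary conjugates of multiplication operators on the same $L^2(\Omega,\mu)$, so they pairwise commute, and for every gadget constraint $\ang{\bw_i,\rel_i}$ the relation $\rel_i$ holds pointwise on $\Omega$ by the choice of witnesses; a second application of Lemma~\ref{lem:lemma-3} then gives $P_{\rel_i}=I$ on the chosen operators. Since the fresh variables of distinct blocks are disjoint, doing this construction block by block produces a satisfying operator assignment for $\cP'$.

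For the descent to $\cP^=$, I would use that an equality constraint $=_d(x_i,x_j)$ in $\cP'$ makes $A_{x_i}$ and $A_{x_j}$ pairwise commute, and the pointwise implication ``$x=y$ in $U_d$ implies $x-y=0$'' lifts via Lemma~\ref{lem:lemma-3} to $A_{x_i}-A_{x_j}=0$. Hence the $\cP'$-assignment is constant on every class of the equivalence $\vr$ generated by the equality constraints and descends to a well-defined assignment on the variables $y_1,\ldots,y_m$ of $\cP^=$; commutativity within each surviving constraint and the required polynomial equations are both inherited from $\cP'$. The only subtle point is the measurable-selection step in paragraph three: because $\rel$ is a finite set, the selection factors through a finite measurable partition of $\Omega$, so no nontrivial selection theorem is needed, and in the finite-dimensional case the construction collapses to simultaneous diagonalization and an entirely elementary choice of entries.
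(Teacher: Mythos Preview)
Your proposal is correct and follows essentially the same approach as the paper: simultaneously ``diagonalize'' the operators in each block via the (General) Strong Spectral Theorem, read off tuples in $\rel$ pointwise, choose witnesses to build the new operators, and then collapse equality constraints. Two small remarks: the step ``a second application of Lemma~\ref{lem:lemma-3} then gives $P_{\rel_i}=I$'' is really an appeal to the computation $P_{\rel_i}(U^{-1}T_{a_{j_1}}U,\ldots)=U^{-1}T_{P_{\rel_i}(a_{j_1},\ldots)}U$ (Lemma~\ref{lem:matrix-polys}(4)) rather than to the implication form of Lemma~\ref{lem:lemma-3}; and for the descent to $\cP^=$ you invoke Lemma~\ref{lem:lemma-3} with $Q_1=P_{=_d}(x,y)-1$, $Q=x-y$, which is a clean shortcut compared to the paper's explicit re-diagonalization, but yields the same conclusion.
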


Indeed, if $\CSP(\Gm\cup\{\rel\})$ has a satisfiability gap of the first or second kind then there is an unsatisfiable instance $\cP\in\CSP(\Gm\cup\{\rel\})$ that has a satisfying operator assignment. By the results in~\cite{Bulatov05:classifying} (cf.~also~\cite{BKW17}), $\cP',\cP^=$ are unsatisfiable. By~Lemma~\ref{lem:lift}, $\cP^=$ has a satisfying operator assignment. Hence $\cP^=$ establishes that $\CSP(\Gm)$ has a satisfiability gap of the same kind, as required to prove Theorem~\ref{the:pp}\,(1).

We will frequently use (below and also in~\Cref{sec:gap}) the following observations. We give a proof of them for the sake of completeness.

\begin{lemma}\label{lem:matrix-polys}
Let $f$ be a polynomial and $A,B$ operators on a Hilbert space.\\[2mm]
(1) If $A,B$ commute, then so do $f(A),f(B)$.\\
(2) If $A$ is normal, then so is $f(A)$.\\
(3) If $U$ is a unitary operator, then $Uf(A)U^{-1}=f(UAU^{-1})$.\\[1mm]
(4) Let $g(x_1,\dots,x_r)$ be a multi-variate polynomial and $A_1,\dots,A_r$
  operators on a Hilbert space. If $U$ is a unitary operator, then 
\[
Ug(A_1,\dots,A_r)U^{-1}=g(UA_1U^{-1},\dots,UA_rU^{-1}).
\]
\end{lemma}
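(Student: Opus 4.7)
The plan is to handle all four parts by the same elementary strategy, namely by reducing the statement about polynomials to a statement about monomials (which in turn reduces to a statement about single powers), and then exploiting linearity.

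First I would write any univariate polynomial as $f(x)=\sum_{i=0}^{N}c_i x^i$ with $c_i\in\zC$ and any multivariate polynomial as $g(\vc xr)=\sum_{\alpha}c_\alpha \bx^{\alpha}$, where $\alpha$ ranges over a finite set of multi-indices and $\bx^\alpha=x_1^{\alpha_1}\cdots x_r^{\alpha_r}$. Throughout, I will use that operator composition is $\zC$-bilinear, so applying a polynomial followed by a linear operation (multiplication on either side, adjoint, conjugation by $U$) passes through the sum.

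For (1), since $AB=BA$, a simple induction on $i+j$ gives $A^iB^j=B^jA^i$ for all nonnegative integers $i,j$. Hence $f(A)g(B)=\sum_{i,j}c_id_j A^iB^j=\sum_{i,j}c_id_jB^jA^i=g(B)f(A)$ for any polynomial $g$, proving the claim in particular for $g=f$. For (2), first observe that $f(A)^*=\bigl(\sum_i c_iA^i\bigr)^*=\sum_i\ov{c_i}(A^*)^i=\bar f(A^*)$, using $(cT)^*=\bar cT^*$ and $(T_1T_2)^*=T_2^*T_1^*$ together with self-adjointness of the transposition/adjoint. Normality of $A$ gives $AA^*=A^*A$, so by (1) applied to $A$ and $A^*$, the operators $f(A)$ and $\bar f(A^*)=f(A)^*$ commute, which is exactly the normality of $f(A)$.

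For (3), the key identity is $UA^iU^{-1}=(UAU^{-1})^i$ for every $i\ge0$; this is immediate for $i=0,1$ and follows in general by inserting $U^{-1}U=I$ between each pair of factors, e.g.\ $UA^2U^{-1}=UAU^{-1}UAU^{-1}=(UAU^{-1})^2$, and by induction. Then
\[
Uf(A)U^{-1}=\sum_{i}c_i\,UA^iU^{-1}=\sum_{i}c_i(UAU^{-1})^i=f(UAU^{-1}).
\]
For (4), the same telescoping gives $UA_1^{\alpha_1}\cdots A_r^{\alpha_r}U^{-1}=(UA_1U^{-1})^{\alpha_1}\cdots(UA_rU^{-1})^{\alpha_r}$ for any multi-index $\alpha$, and summing with coefficients $c_\alpha$ yields $Ug(A_1,\dots,A_r)U^{-1}=g(UA_1U^{-1},\dots,UA_rU^{-1})$. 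I would note for (4) that the expression $g(A_1,\dots,A_r)$ is well-defined precisely when the $A_i$ pairwise commute, as stated in Section~\ref{sec:preliminaries}, but the identity above does not require this assumption once $g$ is written in a fixed monomial order.

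There is no real obstacle here; the only point that requires a small amount of care is keeping track of the distinction between polynomial coefficients and operators when computing adjoints in (2) (so that conjugation of the coefficients does not get lost), and verifying that conjugation by $U$ respects monomials in the multivariate case (4). Both are handled by the telescoping $U^{-1}U=I$ trick.
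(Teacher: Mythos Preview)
Your proposal is correct and follows essentially the same approach as the paper: expand the polynomial, reduce to monomials, and use bilinearity/telescoping. Your treatment of part~(2) is in fact slightly more careful than the paper's, which writes $f(A)^*=f(A^*)$ without conjugating the coefficients; your version $f(A)^*=\bar f(A^*)$ is the correct identity, and the conclusion via part~(1) applied to $A$ and $A^*$ is the same.
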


\begin{proof}
Let $f(x)=\sum_{i=0}^k\al_ix^i$.\\[2mm]
(1) We have
\begin{align*}
f(A)f(B) &=\left(\sum_{i=0}^k\al_iA^i\right)\left(\sum_{i=0}^k\al_iB^i\right)=\sum_{i,j=0}^k\al_i\al_jA^iB^j\\
&=\sum_{i,j=0}^k\al_j\al_iB^jA^i=\left(\sum_{i=0}^k\al_iB^i\right)\left(\sum_{i=0}^k\al_iA^i\right)=f(B)f(A).
\end{align*}

\noindent
(2) The operator $A$ is normal if it commutes with its 
adjoint $A^*$. Since for any operators $B,C$ it holds that $(B+C)^*=B^*+C^*$ and $(B^k)^*=(B^*)^k$, we obtain $f(A^*)=f(A)^*$. The result then follows from item (1).

\noindent
(3) We have
\begin{align*}
Uf(A)U^{-1} &=U\left(\sum_{i=0}^k\al_iA^i\right)U^{-1} =\sum_{i=0}^kU\al_iA^iU^{-1} =\sum_{i=0}^k\al_i(UAU^{-1})(UAU^{-1})\dots(UAU^{-1})\\ 
&=\sum_{i=0}^k\al_i(UAU^{-1})^i=f(UAU^{-1}).
\end{align*}

\noindent
(4) In this case the proof is similar to that of (3). Let
\[
g(x_1,\dots,x_r)=\sum_{i=(i_1,\dots,i_r)\in\cal I}\al_ix_1^{i_1}\dots x_r^{i_r},
\]
where $\cI$ is a finite subset of $(\zN\cup\{0\})^r$. We have
\begin{align*}
Ug(A_1,\dots,A_r)U^{-1} &=U\left(\sum_{i=(i_1,\dots,i_r)\in\cal I}\al_iA_1^{i_1}\dots A_r^{i_r}\right)U^{-1}\\
&=\sum_{i=(i_1,\dots,i_r)\in\cal I}U\al_iA_1^{i_1}\dots A_r^{i_r}U^{-1}\\ 
&=\sum_{i=(i_1,\dots,i_r)\in\cal I}\al_i(UA_1^{i_1}U^{-1})\dots(UA_r^{i_r}U^{-1})\\ 
&=\sum_{i=(i_1,\dots,i_r)\in\cal I}\al_i(UA_1U^{-1})^{i_1}\dots(UA_rU^{-1})^{i_r}\\
&=g(UA_1U^{-1},\dots,UA_rU^{-1}).
\end{align*}
\end{proof}

\begin{proof}[Proof of Lemma~\ref{lem:lift}] 
We prove the lemma in two steps. First we show that $\cP$ has a satisfying assignment if and only $\cP'$ does and then prove the same connection for $\cP'$ and $\cP^=$. 

{\it Finite-dimensional case.}
Let $\cP=(V,U_d,\cC)$ and let $\{A_v\}_{v\in V}$ be an
  operator assignment that is satisfying for $\cP$ on a finite-dimensional
  Hilbert space $\cH$. We may assume that $\cH=\zC^p$ for some positive integer
  $p$
  and thus $\{A_v\}_{v\in V}$ are $p\times p$ matrices. We will
  construct an operator assignment that is satisfying for $\cP'$; it will be an
  operator assignment on the same space and an extension of the original
  assignment.
 
  Given a constraint $\ang{(u_1,\ldots,u_r),\rel}\in\cC$, the operators
  $\{A_{u_i}\}_{1\leq i\leq r}$ pairwise commute by assumption. Since they are
  also normal, by Theorem~\ref{the:SST} there is a unitary matrix $U$ such that
  $E_i=UA_{u_i}U^{-1}$ is a diagonal matrix for each $i\in [r]$. Since $A_{u_i}^d=I$,
  we have $E_{u_i}^d=I$. Thus every diagonal entry $E_{u_i}(jj)$ belongs to
  $U_d$. Since $P_\rel(A_{u_1},\ldots,A_{u_r})=I$, we have $P_\rel(E_{u_1},\ldots,E_{u_r})=UP_\rel(A_{u_1},\ldots,A_{u_r})U^{-1}=I$ by Lemma~\ref{lem:matrix-polys}. As
  $E_{u_i}$ is a diagonal matrix, we have that
  \[P_\rel(E_{u_1}(jj),\ldots,E_{u_r}(jj))=1=\lambda_0\]
  for every $j\in [p]$.
  Thus the tuple $(E_{u_1}(jj),\ldots,E_{u_r}(jj))\in\rel$ for every $j\in [p]$. 
  For each $j\in [p]$, let $b(j)=(b_1(j),\ldots,b_s(j))\in U_d^s$ be a tuple of
  witnesses to the existentially quantified variables $y_1,\ldots,y_s$ in the
  formula pp-defining $\rel$ in~(\ref{eq:pp-formula}) when $E_{u_i}(jj)$ is
  substituted for $x_i$. For $i\in [s]$, let $B_i$ be the diagonal matrix with
  $B_i(jj)=b_i(j)$ for every $j\in [d]$. Now let $C_i=U^{-1}B_iU$. Since $U$ is
  unitary we have $U^{-1}=U^*$ and thus each $C_i$ is 
  normal:
  $C_iC_i^*=U^*B_iU(U^*B_iU)^*=U^*B_iUU^*B_i^*U
  =U^*B_iB_i^*U
  =U^*B_i^*B_iU
  =U^*B_i^*UU^*B_iU
  =(U^*B_iU)^*U^*B_iU=C_i^*C_i$.
  Since $b_i(j)\in U_d$, we have $C_i^d=I$. Also
  $E_1,\ldots,E_r,B_1,\ldots,B_s$ pairwise commute since they are all diagonal
  matrices. Hence, $A_i,\ldots,A_r,C_1,\ldots,C_s$ also pairwise commute since
  they all are simultaneously similar via $U$.
  As each conjunct in~(\ref{eq:pp-formula}) is satisfied by the assignment
  sending $x_i$ to $E_{u_i}(jj)$ and $y_i$ to $b_i(j)$ for all $j\in [d]$, we
  can conclude that the matrices that are assigned to the variables in the
  conjuncts make the corresponding polynomial evaluate to $I$. But
  this means that the assignment to the variables in the block of the constraint
  $\ang{(u_1,\ldots,u_r),\rel}$ makes a satisfying operator assignment for the
  constraint of $\cP'$ that has come from the conjunct. As different constraints involving $\rel$ in $\cP$
  produce their own sets of fresh variables, the operator assignments do not
  affect each other.

\smallskip

{\it Infinite-dimensional case.}
Let $\cP=(V,U_d,\cC)$ and let $\{A_v\}_{v\in V}$ be a collection of bounded
  normal linear operators on a Hilbert space $\cH$ that is satisfying for $\cP$. We need to define bounded
normal linear operators for the new variables of $\cP'$ that were introduced in
  the gadget construction. We define these operators
simultaneously for all variables $t_1,\dots,t_s$ that come from the same constraint $\ang{(u_1,\ldots,u_r),\rel}\in\cC$ of $\cP$.

By the commutativity condition of satisfying operator assignments, the operators
  $A_1,\dots,A_r$ pairwise commute. Each $A_i$ is normal, and so the General Strong Spectral Theorem (cf. Theorem~\ref{the:general-sst}) applies.
Thus, there exist a measure space $(\Omega,\cM,\mu)$, a unitary map $U:\cH\to L^2(\Omega,\mu)$, and functions $a_1,\dots,a_r\in L^\infty(\Omega,\mu)$ such
that, for the multiplication operators $E_i = T_{a_i}$ of $L^2(\Omega,\mu)$, the relations $A_i = U^{-1} E_iU$ hold for each $i\in [r]$. Equivalently,
$U A_iU^{-1} = E_i$. From $A_i^d=I$ we conclude $E_i^d=I$. Hence, $a_i(\omega)^d = 1$ for almost all $\omega\in\Omega$; i.e., formally $\mu(\{\omega\in\Omega\mid a_i(\omega)^d\ne1\}) = 0$. Thus, $a_i(\omega)\in U_d$ for almost all $\omega\in\Omega$. The conditions of Lemma~\ref{lem:matrix-polys} apply, thus $P_R(E_1,\dots, E_r)=U^{-1}P_R (A_1,\dots, A_r)U=1$. Now, $E_i$ is the multiplication operator given by $a_i$, and $a_i(\omega)\in U_d$ for almost all $\omega\in\Omega$, so $P_R (a_1(\omega),\dots,a_r(\omega)) = 1$ for almost all $\omega\in\Omega$. Thus the tuple $a(\omega) = (a_1(\omega),\dots,a_r(\omega))$ belongs to the relation
$R$ for almost all $\omega\in\Omega$. Now we are ready to define the operators for the variables $t_1,\dots,t_s$.

For each $\omega\in\Omega$ for which the tuple $a(\omega)$ belongs to $R$, let $b(\omega) = (b_1(\omega),\dots, b_s(\omega))\in U_d^s$ be the lexicographically
smallest tuple of witnesses to the existentially quantified variables in $\phi(a_1(\omega),\dots, a_r(\omega))$; such a vector of witnesses
must exist since $\phi$ defines $R$, and the lexicographically smallest exists because $R$ is finite. For every other $\omega\in\Omega$, define
$b(\omega) = (b_1(\omega),\dots, b_s(\omega)) = (0,..., 0)$.

Note that each function $b_k:\Omega\to\zC$ is bounded since its range is in $U_d\cup\{0\}$. We claim that such functions of witnesses
$b_k$ are also measurable functions of $(\Omega,\cM,\mu)$. This will follow from the fact that $a_1,\dots,a_r$ are measurable functions themselves, the fact that $R$ is a finite relation, and the choice of a definite tuple of witnesses of each $\omega\in\Omega$: the lexicographically
smallest if $a(\omega)$ is in $R$, or the all-zero tuple otherwise. 

Since $R$ is finite, the event $Q = \{\omega\in\Omega\mid b_k(\omega) =\sigma\}$, for fixed $\sigma\in U_d\cup\{0\}$, can be expressed as a finite Boolean
combination of events of the form $Q_{i,\tau} = \{\omega\in\Omega\mid a_i(\omega) = \tau\}$, where $i\in[r]$ and $\tau\in U_d$. Here is how: If $\sigma\ne0$, then
\[
Q=\bigcup_{a\in R:b(a)_k=\sigma}\left(\bigcap_{i\in r} Q_{i,a_i}\right),
\]
where $b(a)$ denotes the lexicographically smallest tuple of witnesses in $U_d^s$ for the quantified variables in $\phi(a_1,\dots,a_r)$. If $\sigma = 0$, then $Q$ is the complement of this set. Each $Q_{i,\tau}$ is a measurable set in the measure space $(\Omega,\cM,\mu)$ since
$a_i$ is a measurable function and $Q_{i,\tau} = a^{-1}_i(B_{1/2d}(\tau))$, where $B_{1/2d}(\tau)$ denotes the complex open ball of radius $1/2d$ centered
at $\tau$, which is a Borel set in the standard topology of $\zC$. Since the range of $b_k$ is in the finite set $U_d\cup\{0\}$, the preimage
$b^{-1}_k(S)$ of each Borel subset $S$ of $\zC$ is expressed as a finite Boolean combination of measurable sets, and is thus measurable
in $(\Omega,\cM,\mu)$.

We just proved that each $b_k$ is bounded and measurable, so its equivalence class under almost everywhere equality
is represented in $L^\infty(\Omega,\mu)$. We may assume without loss of generality that $b_k$ is its own representative; else modify it
on a set of measure zero in order to achieve so. Let $F_k = T_{b_k}$ be the multiplication operator given by $b_k$ and let $t_k$ be
assigned the linear operator $B_k = U^{-1} F_kU$, which is bounded because $b_k$ is bounded and $U$ is unitary. Also because $U$
is unitary, each such operator is normal and $B_k^d=I$ since $b_k(\omega)\in U_d$ for almost all $\omega\in\Omega$. Moreover, $E_1,\dots, E_r, F_1,\dots, F_s$ pairwise commute since they are multiplication operators; thus $A_1,\dots, A_r, B_1,\dots, B_s$ pairwise
commute since they are simultaneously similar via $U$. Moreover, as each atomic formula in the quantifier-free part of $\phi$
is satisfied by the mapping that sends $u_i\mapsto a_i(\omega)$ and $t_i\mapsto b_i(\omega)$ for almost all $\omega\in\Omega$, another application of Lemma~\ref{lem:matrix-polys}
shows that the operators that are assigned to the variables of this atomic formula make the corresponding indicator polynomial evaluate to $I$. This means that the assignment to the variables $u_1,\dots, u_r,t_1,\dots,t_s$ makes a satisfying operator assignment for
the constraints of $\cP'$ that come from the constraint $\ang{(u_1,\dots, u_r), R}$ in $\cP$. As different constraints from $\cP$ produce their own sets of
variables $t_1,\dots,t_s$, these definitions of assignments are not in conflict with each other, and the proof of the lemma is complete.

\smallskip

Let $P_=$ be the polynomial interpolating the equality relation, and let $W$ be the set of variables of $\cP'$. As is easily seen, to prove the second step it suffices to prove that for any satisfying operator assignment $\{A_w\}_{w\in W}$ for $\cP'$ if $\ang{(u,v),=_d}$ is a constraint in $\cP'$ then $A_u=A_v$. In the finite-dimensional case let $\{A_w\}_{w\in W}$ be an operator assignment over a $p$-dimensional Hilbert space, $\ang{(u,v),=_d}$ a constraint in $\cP'$, and let $E_u=UA_uU^{-1}, E_v=UA_vU^{-1}$ for some unitary operator $U$ be diagonal operators. Then as before $P_=(E_u(jj),E_v(jj))=1$ for all $j\in[p]$ implying $E_u=E_v$, and hence $A_u=A_v$.

The infinite-dimensional case is similar, except we use the General Strong Spectral Theorem to find a measure space $(\Omega,\cM,\mu)$, a unitary map $U:\cH\to L^2(\Omega,\mu)$, and functions $a,b\in L^\infty(\Omega,\mu)$ such
that, for the multiplication operators $E_u = T_a, E_v=T_b$ of $L^2(\Omega,\mu)$, the relations 
$U A_uU^{-1} = E_u, UA_vU^{-1}=E_v$ hold. Since $P_=(A_u,A_v)=I$, we conclude that $a(\omega)=b(\omega)$ for almost all $\omega\in\Omega$ implying $E_u=E_v$, and therefore $A_u=A_v$. 
\end{proof}

We do not know whether the converse of Lemma~\ref{lem:lift} holds; this is not known even in the case of $d=2$~\cite{AKS19:jcss}.
The obvious idea would be to take the restriction of the operator assignment
that is satisfying for $\cP'$ but it is not clear why this should be satisfying
for $\cP$, because there is no guarantee that the operators assigned to variable in the scope of a constraint of $\cP$ of the form $\ang{\bs, \rel}$ commute. However, under a slight technical assumption on $\Gm$ --- namely,
that it includes the full binary relation $\rel_T$ on $U_d$\footnote{This is a special case of the
so-called \emph{commutativity gadget}~\cite{AKS19:jcss}; the ``T'' stands for
trivial.\\ Also, although the full binary relation can be pp-defined from the equality relation, it does not help to prove Theorem~\ref{the:pp}\,(2) as the full binary relation is needed exactly to deal with pp-definitions.} --- one can
enforce commutativity within a constraint scope and thus project an operator
assignment. 

For an instance $\cP'$ as defined above (and in the statement of Lemma~\ref{lem:lift}), we denote by $\cP''$ the instance obtained from $\cP'$ by adding, for every constraint $\ang{(\vc ur),\rel}$ of $\cP$, constraints of the form $\ang{(u_i,u_j),\rel_T}$ for every $i\neq j\in [r]$.
To prove Theorem~\ref{the:pp}\,(2), it suffices to show the following lemma.

\begin{lemma}\label{lem:proj}
  Let $\Gm$ be a constraint language over $U_d$ with $\rel_T\in\Gm$ and let $\rel$ be pp-definable
  from $\Gm$. Furthermore, let $\cP\in\CSP(\Gm\cup\{\rel\})$ and let
  $\cP''\in\CSP(\Gm)$ be defined as above.  Then, we have the following:\\[2mm]
  (1) For every satisfying operator assignemnt for $\cP$ on a Hilbert
  space $\cH$ there is an extension that is a satisfying operator
  assignment for $\cP''$ on $\cH$.\\[2mm]
  (2) For every satisfying operator assignment for $\cP''$ on $\cH$, the
  restriction of it onto the variables of $\cP$ is a satisfying operator
  assignment for $\cP$ on $\cH$.
\end{lemma}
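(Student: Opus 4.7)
The plan is to treat the two parts of the lemma separately. Part~(1) will essentially reuse the construction in the proof of Lemma~\ref{lem:lift}, while part~(2) requires an additional spectral-theoretic argument inside each joint eigenspace of the original-variable operators. I will describe the finite-dimensional case using Theorem~\ref{the:SST}; the infinite-dimensional case runs in parallel via Theorem~\ref{the:general-sst} and the measurable-selection ideas already present in Lemma~\ref{lem:lift}. For part~(1), starting from a satisfying operator assignment $\{A_v\}_{v\in V}$ for $\cP$ I would carry out the block-by-block construction of the operators $A_{t_k}$ for the fresh quantified variables exactly as in Lemma~\ref{lem:lift}. In each block, arising from a constraint $\ang{(\vc ur),\rel}$ of $\cP$, that construction defines $A_{t_k}=U^{-1}B_kU$ from a single unitary $U$ that jointly diagonalizes $A_{u_1},\dots,A_{u_r}$, with $B_k$ diagonal in the same basis, so that all operators inside one block pairwise commute. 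This simultaneously verifies the gadget conjuncts of $\cP'$ (as in Lemma~\ref{lem:lift}) and the extra $\rel_T(u_i,u_j)$ constraints of $\cP''$, the latter because the required commutativity of $A_{u_i}$ and $A_{u_j}$ within a block is inherited from the scope of the original constraint of $\cP$.

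For part~(2), let $\{A_v\}_{v\in V''}$ satisfy $\cP''$ and restrict to $V$. Normality of order $d$ is preserved by restriction, and condition~(b) of the operator assignment for $\cP$ (pairwise commutativity of $A_{u_1},\dots,A_{u_r}$ in each constraint scope) is delivered exactly by the $\rel_T(u_i,u_j)$ constraints of $\cP''$. To verify $P_\rel(A_{u_1},\dots,A_{u_r})=I$, the plan is to jointly diagonalize the $A_{u_i}$'s via some unitary $U$, so that $E_{u_i}=UA_{u_i}U^{-1}$ is diagonal with entries $e_i(j)\in U_d$. Since the $E_{u_i}$ are diagonal, the identity $P_\rel(A_{u_1},\dots,A_{u_r})=I$ reduces to $(e_1(j),\dots,e_r(j))\in\rel$ for every index $j$ in the diagonal basis, which by the pp-definition~(\ref{eq:pp-formula}) amounts to producing classical witnesses $(f_1,\dots,f_s)\in U_d^s$ for each such tuple.

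The main obstacle is that $\cP''$ only enforces commutativity among original variables: each $A_{t_k}$ commutes only with those $A_{u_i}$ and $A_{t_{k'}}$ co-occurring with it in some conjunct of~(\ref{eq:pp-formula}), so in general $A_{t_k}$ need not preserve the joint eigenspace $V_{\vec e}=\bigcap_i\ker(A_{u_i}-e_i I)$ and one cannot jointly diagonalize the whole block at once. My plan to surmount this is to work inside each $V_{\vec e}$ separately and, for every gadget conjunct $\rel_l(\bw_l)$, jointly diagonalize its commuting operators via Theorem~\ref{the:SST}; this yields, for each joint eigenvalue of the $A_{u_i}$'s in $\bw_l$ that actually occurs inside $V_{\vec e}$, local candidate values of the $A_{t_k}$'s in $\bw_l$ lying in $\rel_l$. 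These local choices must then be glued into a globally consistent witness in $U_d^s$, using the fact that each shared quantified variable $t_k$ is represented by a single global operator $A_{t_k}$ with one spectrum, together with a measurable selection of the same flavour as in the proof of Lemma~\ref{lem:lift}. An application of Lemma~\ref{lem:lemma-3} inside each gadget's commutative subalgebra then converts the classical implication ``such witnesses exist'' into the desired identity $P_\rel(A_{u_1},\dots,A_{u_r})=I$, completing the verification of condition~(c).
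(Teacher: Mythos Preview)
Your treatment of part~(1) matches the paper's: both simply reuse the block-by-block construction from Lemma~\ref{lem:lift} and observe that the extra $\rel_T$ constraints are automatically satisfied because the constructed operators in a block already pairwise commute and $P_{\rel_T}$ is identically~$1$ on $U_d^2$.

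For part~(2) the paper takes a much shorter route than you do. It does not diagonalise anything at all: it sets $Q_i=P_{\rel_i}(\bz_i)-1$ for $i\in[m]$ and $Q=P_\rel(x_1,\dots,x_r)-1$, notes that over $U_d$ the implication $Q_1=\cdots=Q_m=0\ \Rightarrow\ Q=0$ is exactly the statement that $\phi$ defines $\rel$, and invokes Lemma~\ref{lem:lemma-3} once for the whole block to conclude $P_\rel(A_{u_1},\dots,A_{u_r})=I$. The $\rel_T$ constraints are used only to certify the scope-commutativity needed for $\{A_{u_i}\}$ to be an operator assignment for $\cP$; the polynomial identity comes from Lemma~\ref{lem:lemma-3} in one shot.

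Your alternative plan has a genuine gap at the ``gluing'' step. After jointly diagonalising $A_{u_1},\dots,A_{u_r}$ you want, for each joint eigenvalue tuple $\vec e$, to manufacture scalar witnesses $(f_1,\dots,f_s)\in U_d^s$ by diagonalising each conjunct $\rel_l(\bw_l)$ separately. But different conjuncts are diagonalised by different unitaries $U_l$, so a shared quantified variable $t_k$ occurring in conjuncts $l$ and $l'$ receives eigenvalues read off in two unrelated bases; there is no common index along which to match them, and knowing that $A_{t_k}$ has a single global spectrum does not tell you which eigenvalue of $A_{t_k}$ pairs with which eigenvalue of the other operators in either conjunct. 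Nothing in your sketch excludes the possibility that every conjunct is operator-satisfied yet no single scalar tuple $(f_1,\dots,f_s)$ satisfies all conjuncts with the $u_i$-coordinates fixed to $\vec e$. Your closing appeal to Lemma~\ref{lem:lemma-3} ``inside each gadget's commutative subalgebra'' does not repair this: that lemma transfers a polynomial implication across a \emph{single} fully commuting family, whereas you have several overlapping commuting families and an existential statement (``witnesses exist'') rather than a polynomial identity as the target.

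You are right that Lemma~\ref{lem:lemma-3} literally asks for a fully commuting assignment on the block, and that $\cP''$ as defined only forces commutativity among the $u_i$'s. The paper does not dwell on this point; but if one wants to be scrupulous, the clean fix is structural (enlarge $\cP''$ by $\rel_T$ on all block pairs, which preserves both directions of the lemma), not the per-conjunct spectral surgery you outline.
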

Indeed, if $\CSP(\Gm\cup\{\rel\})$ has a satisfiability gap of the
third kind then there is an instance $\cP\in\CSP(\Gm\cup\{\rel\})$ that is not
satisfiable via finite-dimensional operators but has a satisfying
infinite-dimensional operator assignment. By~Lemma~\ref{lem:proj}, $\cP''$ is
not satisfiable via finite-dimensional operators but has a satisfying
infinite-dimensional operator assignment. Hence $\cP''$ establishes that
$\CSP(\Gm\cup\{\rel_T\})$ has a satisfiability gap of the third kind, as required to prove Theorem~\ref{the:pp}\,(2).
\begin{proof}[Proof of Lemma~\ref{lem:proj}]
(1) follows from Lemma~\ref{lem:lift}: The satisfying operator assignment for $\cP'$ constructed in the proof of Lemma~\ref{lem:lift} is also a satisfying operator assignment or $\cP''$. Indeed, the constraints already present in $\cP'$ are by assumption satisfied in $\cP''$. Regarding the extra constraints in $\cP''$ not present in $\cP'$, each such constraint involves the $\rel_T$ relation and the two variables in the scope of the constraint come from the block of some constraint in $\cP'$. The proof of Lemma~\ref{lem:lift} established that the constructed operators pairwise commute on these variables. Also, as $P_{\rel_T}(a,b)=1$ for any $a,b\in U_d$, the polynomial constraints are satisfied as they evaluate to $I$.

For~(2), take a satisfying operator assignment for $\cP''$ and consider its restriction $\{A_v\}_{v\in V}$ onto the variables of $\cP$. Any two operators whose variables appear within the scope of some constraint of $\cP$ necessarily commute since the two variables are in the block of some constraint in $\cP''$. It remains to show that the polynomial constraints of $\cP$ are satisfied, that is, that $P_\rel(A_{u_1},\ldots,A_{u_r})=I$ for every constraint $\ang{(u_1,\ldots,u_r),\rel}$ of $\cP$.
For this, we use Lemma~\ref{lem:lemma-3}.
Let $\phi$ be the formula pp-defining $\rel$ as in~(\ref{eq:pp-formula}). We
  define several polynomials over variables $x_1,\ldots,x_r,y_1,\ldots,y_s$ that
  correspond to the variables in~(\ref{eq:pp-formula}). For every $i\in [m]$,
  let $Q_i$ be the polynomial $P_{\rel_i}(\bz_i)-1$ so that the equation $Q_i=0$
  ensures $P_{\rel_i}(\bz_i)=1$, where $P_{\rel_i}$ is the characteristic
  polynomial of $\rel_i$, and $\bz_i$ is the tuple of variables from
  $x_1,\ldots,x_r,y_1,\ldots,y_s$ that correspond to the variables of the same
  name that appear in the conjunct $\rel_i(\bz_i)$ of~(\ref{eq:pp-formula}).
Let $Q$ be the polynomial $P_\rel(x_1,\ldots,x_r)-1$, where $P_\rel$ is the characteristic polynomial of $\rel$. By the construction of the polynomials and the choice of $\phi$, every assignment over $U_d$ that satisfies all equations $Q_1=\cdots=Q_m=0$ also satisfies $Q=0$. By Lemma~\ref{lem:lemma-3}, we get $P_\rel(A_{u_1},\ldots,A_{u_r})-I=0$, as required.
\end{proof}

\section{Unbounded width and gaps}\label{sec:gap}

In this section we prove the second part of our main result: we show that one
can ``implement'' linear equations over an Abelian group of prime order $p$
in $\CSP(\Gm)$ provided that $\CSP(\Gm)$ does not have bounded width. The rest
of Theorem~\ref{thm:main-informal}\,(2b) will then follow from the existence of a
satisfiability gap of the second kind~\cite{SZ24:arxiv} and, for $p=2$, the
existence of a satisfiability gap of the first
kind~\cite{Mermin1990simple,Mermin1993hidden,Peres1990incompatible} and the
third kind~\cite{Slofstra20:jams}, as all these gap instances are just systems
of linear equations. Moreover, for NP-hard CSPs one can ``implement'' linear
equations over $\mathbb{Z}_2$, and thus obtains satisfiabiilty gaps of all three
kinds, as claimed in Theorem~\ref{thm:main-informal}\,(1).

We achieve our goal in several steps via a chain of reductions that has been
used since the inception of the algebraic method to the
CSP~\cite{Bulatov05:classifying}. While more direct constructions have been
developed later, see, e.g., \cite{BKW17}, we find this original approach to be
better suited for operator CSPs.

\subsection{Bounded width and Abelian groups}

We start by introducing several definitions. 

A constraint language $\Gm$ over $U_d$ is said to be a \emph{core language} if
its every endomorphism is a permutation.
This term comes from finite model theory where it is used for relational structures that do not have endomorphisms (homomorphisms to themselves) that are not automorphisms. Such structures, and therefore languages, have a number of useful properties that we will exploit later. The standard way to convert a constraint language $\Gm$ into a core language is to repeat the following procedure until the resulting language is a core language: Pick an endomorphism $\vr$ of $\Gm$ that is not a permutation and set 
\[
\vr(\Gm)=\{\vr(\rel)\mid \rel\in\Gm\}, \quad\text{where } \vr(\rel)=\{(\vr(a_1)\zd \vr(a_n))\mid  (\vc an)\in\rel\}.
\]
There always exists an endomorphism $\vr$ of $\Gm$ such that $\vr(\Gm)$ is
core~\cite{Bulatov05:classifying,BKW17} and $\vr$ is \emph{idempotent}, that is, $\vr\circ\vr=\vr$. We will denote this core language by $\core(\Gm)$, as it (up to an isomorphism) does not depend on the choice of $\vr$. Note that the fact that $\vr$ is idempotent implies that it acts as identity on its image.

The language $\Gm$ is called \emph{idempotent} if it contains all the \emph{constant} relations, that is, relations of the form $C_a=\{(a)\}$, $a\in U_d$. For an arbitrary language $\Gm$ over $U_d$ we use $\Gm^*=\Gm\cup\{C_a\mid a\in U_d\}$.  A unary relation (a set) $B\sse U_d$ pp-definable in $\Gm$ is called a \emph{subalgebra} of $\Gm$. For a subalgebra $B$ we introduce the \emph{restriction} $\Gm\red B$ of $\Gm$ to $B$ defined as follows
 \[
\Gm\red B=\{\rel\cap B^{ar(\rel)}\mid \rel\in\Gm\}.
\]

An equivalence relation $\th$ pp-definable in $\Gm$ is said to be a \emph{congruence} of $\Gm$. The equivalence class of $\th$ containing $a\in U_d$ will be denoted by $a\fac\th$, and the set of all equivalence classes, the \emph{factor-set}, by $U_d\fac\th$. Congruences of a constraint language allow one to define a \emph{factor-language} as follows. For a congruence $\th$ of the language $\Gm$ the factor language $\Gm\fac\th$ is the language over $U_d\fac\th$ given by 
\[
\Gm\fac\th=\{\rel\fac\th\mid \rel\in\Gm\}, \quad\text{where } \rel\fac\th=\{(a_1\fac\th\zd a_n\fac\th)\mid  (\vc an)\in\rel\}.
\]

In order to fit core languages, subalgebras, and factor-languages in our framework where the domain is the set of roots of unity, we let $e=|\vr(U_d)|$, $e=|B|$ or $e=|U_d\fac\th|$, respectively, arbitrarily choose a bijection $\pi:\vr(U_d)\to U_e$, $\pi:B\to U_e$, and $\pi:U_d\fac\th\to U_e$, and replace $\vr(\Gm)$, $\Gm\red B$, and $\Gm\fac\th$ with $\pi(\vr(\Gm))$, $\pi(\Gm\red B)$, and $\pi(\Gm\fac\th)$, respectively. 

All the languages above are connected with each other in terms of the reducibility of the corresponding CSPs, as Figure~\ref{fig:reductions} and the following statements indicate.

\begin{prop}[\cite{Bulatov05:classifying,BKW17}]\label{pro:reductions}
Let $\Gm$ be a constraint language over $U_d$. Then
\begin{itemize}
\item[(1)]
  $\CSP(\Gm)$ and $\CSP(\core(\Gm))$ are polynomial-time interreducible.
\item[(2)]
If $\Gm$ is a core language, $\CSP(\Gm)$ and $\CSP(\Gm^*)$ are polynomial-time interreducible.
\item[(3)]
If $B$ is a subalgebra of $\Gm$ then $\CSP(\Gm\red B)$ is polynomial-time reducible to $\CSP(\Gm)$.
\item[(4)]
If $\th$ is a congruence of $\Gm$ then $\CSP(\Gm\fac\th)$ is polynomial-time reducible to $\CSP(\Gm)$.
\end{itemize}
\end{prop}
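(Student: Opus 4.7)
The plan is to handle each of the four reductions separately, using Theorem~\ref{the:pp} and standard constructions from the algebraic approach to CSPs; throughout, $\pi$ denotes the chosen bijection identifying $\vr(U_d)$, $B$, or $U_d\fac\th$ with $U_e$. For (1), fix an idempotent endomorphism $\vr$ of $\Gm$ with $\pi(\vr(\Gm))=\core(\Gm)$. To reduce $\CSP(\Gm)\to\CSP(\core(\Gm))$, I replace each constraint $\ang{\bs,\rel}$ of an instance $\cP$ by $\ang{\bs,\pi(\vr(\rel))}$: a solution $s$ of $\cP$ yields $\pi\cc\vr\cc s$ for the new instance (since $\vr$ is an endomorphism), while any solution of the new instance pulls back via $\pi^{-1}$ to an assignment into $\vr(U_d)\sse U_d$ that satisfies the original relations because $\vr(\rel)\sse\rel$. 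The reverse direction is symmetric: use $\pi^{-1}$ to identify each $\core(\Gm)$-relation with a subset of a $\Gm$-relation, and apply $\pi\cc\vr$ to any $\CSP(\Gm)$-solution.

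For (2), $\CSP(\Gm)\le\CSP(\Gm^*)$ is immediate. For $\CSP(\Gm^*)\le\CSP(\Gm)$, I introduce fresh variables $\{w_a\mid a\in U_d\}$ and impose a single $d$-ary constraint on $(w_a)_{a\in U_d}$ whose relation is the set of all endomorphisms of $\Gm$, viewed as a subset of $U_d^d$ via $\sg\mapsto(\sg(a))_{a\in U_d}$. This relation is pp-definable from $\Gm$ via the quantifier-free formula $\bigwedge_{\rel\in\Gm}\bigwedge_{(\vc ak)\in\rel}\rel(x_{a_1},\zd x_{a_k})$ on variables $\{x_a\mid a\in U_d\}$, whose models are exactly the endomorphisms of $\Gm$. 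Since $\Gm$ is core, these endomorphisms are automorphisms, so in any solution $s'$ one has $s'(w_a)=\sg(a)$ for some automorphism $\sg$. I then replace each constant constraint $\ang{(v),C_a}$ by the pp-definable equality $v=w_a$; applying $\sg^{-1}$ pointwise to $s'$ gives a solution of the original $\cP^*$.

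For (3), given $\cP\in\CSP(\Gm\red B)$, I form $\cP'\in\CSP(\Gm)$ by keeping the relations of $\Gm$ unrestricted and adding the unary constraint ``$v\in B$'' for every variable $v$, realized via the pp-definition of $B$; the equivalence of solutions is immediate. For (4), given $\cP\in\CSP(\Gm\fac\th)$, I build $\cP'\in\CSP(\Gm)$ on the same variables by replacing each constraint $\ang{(\vc vn),\rel\fac\th}$ with fresh auxiliaries $\vc un$, the constraint $\ang{(\vc un),\rel}$, and pp-defined constraints $v_i\sim_\th u_i$ (enabled by pp-definability of $\th$). If $\cP$ has a solution $s$, then for each constraint I pick a witness $(\vc an)\in\rel$ with $a_i\in s(v_i)$ (existence follows from the definition of $\rel\fac\th$), set $u_i:=a_i$, and send each $v$ to any element of $s(v)$; conversely, a solution $s'$ of $\cP'$ descends via $v\mapsto s'(v)\fac\th$.

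The main obstacle I expect is (2), where the gadget relies on pp-definability of the full endomorphism set of a core language---the starting observation of the algebraic method, essentially via the ``free'' instance on one variable per element of $U_d$. The remaining items reduce to direct applications of Theorem~\ref{the:pp} together with routine verification of the classical-to-classical correspondences.
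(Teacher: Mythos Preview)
The paper does not prove this proposition; it is quoted as a known result from \cite{Bulatov05:classifying,BKW17}, so there is no ``paper's own proof'' to compare against. Your sketch is essentially the standard argument one finds in those references: the idempotent-endomorphism construction for (1), the indicator relation $\rel_\Gm=\{(\sg(\ld_0),\ldots,\sg(\ld_{d-1}))\mid\sg\text{ an endomorphism of }\Gm\}$ for (2) (this is exactly the relation the paper uses later in Step~2 of Section~\ref{sec:gap}), and adding the pp-definable unary $B$ or binary $\th$ as gadgets for (3) and (4). The classical correctness of all four constructions is fine as you describe it.

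One misplaced citation: you invoke Theorem~\ref{the:pp} to justify (3) and (4), but Theorem~\ref{the:pp} is a statement about preservation of \emph{operator satisfiability gaps}, not about polynomial-time reducibility. What you actually need is the classical fact that pp-definability yields a polynomial-time reduction via the gadget construction---this is recalled (and cited to \cite{Bulatov05:classifying,BKW17}) in Section~\ref{sec:operator-pp} of the paper, just before Theorem~\ref{the:pp}. Replace the reference to Theorem~\ref{the:pp} with that, and your sketch is a faithful rendition of the standard proof.
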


\begin{figure}
\[
\CSP(\Gm)\ \leftrightarrow\ \CSP(\core(\Gm)) \leftrightarrow\ \CSP(\core(\Gm)^*)\ \leftarrow\ \CSP(\core(\Gm)^*\red B) \ \leftarrow\ \CSP(\core(\Gm)^*\red B\fac\th)
\]
\caption{Reductions between CSPs corresponding to derivative languages}\label{fig:reductions}
\end{figure}

Finally, to relate the reductions above with bounded width we apply the
following result that can be extracted from the known results on the algebraic
approach to CSPs~\cite{Bulatov09:width,Barto14:local,BKW17}. Also, the dichotomy
theorem of Bulatov and Zhuk~\cite{Bulatov17:focs,Zhuk20:jacm} implies that
any NP-hard CSP can implement linear equations over $\mathbb{Z}_2$.

\begin{prop}[\cite{Bulatov09:width,Barto14:local,BKW17}]\label{pro:abelian}
For a constraint language $\Gm$ over $U_d$, $\CSP(\Gm)$ does not have bounded
  width if and only there exists a language $\Dl$ pp-definable in $\Gm$, a
  subalgebra $B$ of $\core(\Dl)^*$, a congruence $\th$ of $\core(\Dl)^*\red B$,
  and an Abelian group $\zA$ of prime order $p$ such that $\Gamma'=\core(\Dl)^*\red B\fac\th$ contains relations $\rel_{3,a},\rel_{p+2}$ for every $a\in\zA$ given by 
\[
\rel_{3,a}=\{(x,y,z)\mid x+y+z=a\},\quad\text{and}\quad \rel_{p+2}=\{(\vc a{p+2})\mid a_1+\dots+a_{p+2}=0\}.\footnote{The relations $\rel_{3,a},\rel_{p+2}$ are chosen here because they are needed for our purpose. In fact, they can be replaced with any relations expressible by linear equations over $\zA$.}
\]
  Moreover, if $\CSP(\Gm)$ is NP-hard then one can take $p=2$.
\end{prop}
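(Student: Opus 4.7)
The plan is to derive the proposition by combining the Barto--Kozik characterization of bounded width with standard universal-algebraic bookkeeping about cores, subalgebras, congruences, and pp-definitions, and finally invoking the Bulatov--Zhuk dichotomy for the NP-hard addendum.

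First I would address the ``only if'' direction. By the Barto--Kozik theorem~\cite{Barto14:local}, $\CSP(\Gm)$ fails to have bounded width exactly when the variety generated by $\Al(\Gm)$ admits tame-congruence types $\mathbf{1}$ or $\mathbf{2}$; equivalently, some finite algebra in this variety contains a subalgebra with an Abelian prime quotient $\mathbb{A}/\th$ polynomially equivalent to a module over $\zZ_p$ for a prime $p$. Translating this back to the relational side, the universe of such a subalgebra corresponds to a pp-definable unary relation $B$ over (a power of) $\core(\Gm)^*$, and $\th$ is a pp-definable equivalence on $B$. By absorbing the pp-formulas witnessing $B$, $\th$, and the affine Maltsev term $x-y+z$ into a single pp-definable language $\Dl$, I can arrange that $\core(\Dl)^*\red B\fac\th$ is precisely the relational clone of an affine $\zZ_p$-module. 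Since $\rel_{3,a}$ and $\rel_{p+2}$ are preserved by the Maltsev operation of any affine module over $\zZ_p$, they belong to this clone; a further enlargement of $\Dl$ (closed under pp-definability within $\Gm$) then places them inside $\Gm'$ itself.

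For the ``if'' direction, the family $\{\rel_{3,a}\mid a\in\zA\}\cup\{\rel_{p+2}\}$ pp-defines arbitrary linear equations over $\zA$: the ternary $\rel_{3,a}$ relations supply all affine combinations, and $\rel_{p+2}$ together with pp-definitions enforces the Abelian group of prime order $p$. Systems of linear equations over a nontrivial finite Abelian group are a classical obstruction to bounded width --- e.g.\ the parity example for $p=2$ cannot be refuted by any finite level of local consistency --- so $\CSP(\Gm')$ does not have bounded width. Since every reduction along the chain in Figure~\ref{fig:reductions} preserves bounded width (Proposition~\ref{pro:reductions} together with~\cite{BKW17}), $\CSP(\Gm)$ cannot have bounded width either. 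For the NP-hard addendum, I would invoke the Bulatov--Zhuk dichotomy~\cite{Bulatov17:focs,Zhuk20:jacm}: an NP-hard $\CSP(\Gm)$ pp-interprets every finite-domain CSP, in particular the CSP of linear equations over $\zZ_2$, and pushing this interpretation through the core, subalgebra, and factor steps lets us take the Abelian group in Proposition~\ref{pro:abelian} to have order two.

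The main obstacle I anticipate is the bookkeeping in the first step: converting the abstract universal-algebraic statement ``there is an affine $\zZ_p$-quotient somewhere in the variety'' into the concrete relational statement that the \emph{specific} relations $\rel_{3,a}$ and $\rel_{p+2}$ lie in $\core(\Dl)^*\red B\fac\th$ for an \emph{explicit} $\Dl$ pp-definable in $\Gm$. The footnote in the proposition statement hints at the right strategy by remarking that $\rel_{3,a}$ and $\rel_{p+2}$ may be replaced by any relations expressible as linear equations over $\zA$: rather than attempting to produce these particular relations combinatorially, one fixes any generating set of the affine relational clone and absorbs the pp-definitions of $\rel_{3,a}$ and $\rel_{p+2}$ into $\Dl$, using closure of pp-definability under composition.
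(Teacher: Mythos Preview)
The paper does not prove this proposition. It is stated with citations to~\cite{Bulatov09:width,Barto14:local,BKW17} and introduced in the text as a result that ``can be extracted from the known results on the algebraic approach to CSPs''; it is then used as a black box in the proof of Theorem~\ref{the:hsp-gap}. Consequently there is no in-paper argument to compare your proposal against.

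Your sketch is a reasonable outline of how one would assemble such a proof from the cited literature, and indeed this is what the authors intend the reader to do. Two minor cautions. First, your sentence ``admits tame-congruence types~$\mathbf{1}$ or~$\mathbf{2}$; equivalently, some finite algebra in this variety contains a subalgebra with an Abelian prime quotient'' conflates the two types: type~$\mathbf{2}$ yields an affine $\zZ_p$-module quotient, but type~$\mathbf{1}$ yields a $G$-set, not an Abelian module. The argument still goes through because in the NP-hard (non-Taylor) case one obtains a two-element factor with only projections as polymorphisms, on which \emph{every} Boolean relation --- in particular $\rel_{3,a}$ and $\rel_4$ over $\zZ_2$ --- is pp-definable; this is the route to $p=2$, not polynomial equivalence to a module. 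Second, ``pp-interprets every finite-domain CSP'' is stronger than what one gets directly from NP-hardness in the algebraic framework; what one actually uses is the two-element trivial factor just mentioned, which already suffices.
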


We now have everything to formally state our second main result.

\begin{theorem}\label{the:hsp-gap}
  Let $\Gm$ be a constraint language over $U_d$ such that $\CSP(\Gm)$ does not
  have bounded width. Furthermore, let $\Gm'$ be the language guaranteed by
  Proposition~\ref{pro:abelian}. Then, if $\CSP(\Gm')$ has a satisfiability gap
  of the first or the second kind then so does $\CSP(\Gm)$.  Finally, if
  $\CSP(\Gm')$ has a satisfiability gap of the third kind then so does
  $\CSP(\Gm\cup\{\rel_T\})$.
\end{theorem}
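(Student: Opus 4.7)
The plan is to walk back along the chain of reductions in Figure~\ref{fig:reductions}, establishing at each arrow that a satisfiability gap in the target language lifts to one of the same kind in the source language. Combined with Theorem~\ref{the:pp}, which handles the initial pp-definition arrow from $\Gm$ to $\Dl$ (with $\rel_T$ needed only for gaps of the third kind), this chain will prove the theorem.

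I begin with the \emph{subalgebra restriction} arrow, following the sketch in the overview. For a language $\Dl$ with a subalgebra $B$ and bijection $\pi : U_e \to B$, starting from a gap instance $\cP = (V, U_e, \cC)$ of $\CSP(\pi^{-1}(\Dl \red B))$, I form the instance $\cP^\pi$ over $U_d$ with the same variables, the constraints $\ang{\bs, \pi(\rel)}$ for each $\ang{\bs, \rel} \in \cC$, and $\ang{v, B}$ for each variable $v$. Given a satisfying operator assignment $\{A_v\}$ for $\cP$, the polynomial interpolant of $\pi$ produces operators $C_v = \pi(A_v)$ on the same Hilbert space; by the Spectral Theorem (Theorem~\ref{the:SST} in the finite-dimensional case, Theorem~\ref{the:general-sst} in the infinite-dimensional case) together with Lemma~\ref{lem:mapping}, each $C_v$ is normal with $C_v^d = I$, the family $\{C_v\}$ retains commutativity on every constraint scope, and Lemma~\ref{lem:lemma-3} ensures all polynomial constraints evaluate to $I$. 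Classical unsatisfiability of $\cP^\pi$ transfers immediately because the $B$-constraints force every value into $B$, on which $\pi$ is a bijection back to $U_e$.

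The \emph{congruence-quotient} arrow is treated similarly: starting from a gap instance of $\CSP((\Dl \red B) \fac \th)$, I undo the quotient by keeping the variables but replacing every constraint relation $\rel \fac \th$ by $\rel$. To lift a satisfying operator assignment one picks a coherent representative of each $\th$-class; in the finite-dimensional case this is done per-eigenvalue after simultaneous diagonalization, and in the infinite-dimensional case through a measurable selection in the multiplication-operator representation given by Theorem~\ref{the:general-sst}. Lemma~\ref{lem:mapping} certifies that the resulting operators remain normal, pairwise commute on each scope, and are of order~$d$, while Lemma~\ref{lem:lemma-3} handles the polynomial constraints. The remaining two arrows, \emph{going to the core} and \emph{adding constants}, reduce to polynomial-interpolation arguments of the same flavor: the core step uses the idempotent endomorphism $\vr$, whose interpolant fixes its image pointwise and so composes cleanly with the subalgebra-style transformation; the constant-addition step uses the standard gadget that, via the core property, forces designated gadget variables to take prescribed values, translating into operator equations $A = \ld_k I$ via Lemma~\ref{lem:lemma-3}, with $\rel_T$ supplying the cross-gadget commutativity needed to pull a gap of the third kind through the construction, exactly as in Theorem~\ref{the:pp}\,(2).

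I expect the main difficulty to lie in the congruence-quotient step in the infinite-dimensional regime. There, the representatives of $\th$-classes must be chosen measurably over the common measure space underlying the spectral decomposition of the operators on a constraint scope, and this selection must be consistent across constraints sharing variables. This parallels the measurable witness-selection argument in the infinite-dimensional half of Lemma~\ref{lem:lift}, but is more delicate because the chosen representatives must simultaneously respect the polynomial encodings of all constraints while maintaining normality and the order-$d$ condition globally.
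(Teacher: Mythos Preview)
Your overall architecture matches the paper's, and the subalgebra and core steps are essentially as in the paper. The congruence-quotient step, however, has a real gap.

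You propose to ``undo the quotient by keeping the variables but replacing every constraint relation $\rel\fac\th$ by $\rel$,'' and to lift a satisfying operator assignment by choosing representatives of $\th$-classes ``per-eigenvalue after simultaneous diagonalization.'' These two choices are in tension. A \emph{global} section $\pi^*:U_e\to U_d$ (one representative per $\th$-class) applied as a polynomial to each $A_v$ gives well-defined operators $C_v=\pi^*(A_v)$ that are automatically consistent across constraints; but then the resulting eigenvalue-tuples lie only in the $\th$-saturation of $\rel$, not in $\rel$ itself, so the instance with constraint relations $\rel$ need not be satisfied. Conversely, a \emph{per-tuple} selection (picking, for each eigenvalue-tuple in $\rel\fac\th$, a preimage in $\rel$) destroys the per-variable structure: the unitary diagonalizing the operators depends on the constraint scope, and a variable shared by two constraints would receive two incompatible lifts. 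Your analogy with the witness-selection in Lemma~\ref{lem:lift} breaks down precisely here, since in that lemma the witnesses populate \emph{fresh} variables, so no cross-constraint consistency is required.

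The paper resolves this by replacing $\rel\fac\th$ not by $\rel$ but by the full preimage $\pi^{-1}(\rel\fac\th)=\{(\vc ak)\mid (a_1\fac\th,\ldots,a_k\fac\th)\in\rel\fac\th\}$, which is pp-definable from $\Gm$ via $\exists\vc yk\,\rel(\vc yk)\wedge\bigwedge_i\th(x_i,y_i)$ and hence, after one invocation of Theorem~\ref{the:pp}, may be assumed to belong to $\Gm$. With this relation in place, the global section $\pi^*$ works immediately and the ``main difficulty'' you anticipate simply disappears: $C_v=\pi^*(A_v)$ is defined once and for all, Lemma~\ref{lem:mapping} (or Lemma~\ref{lem:mapping-infinite}) gives normality, order, and local commutativity, and Lemma~\ref{lem:lemma-3} delivers $f_{\pi^{-1}(\rel)}(C_{v_1},\ldots,C_{v_k})=I$. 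The same device (passing to a pp-definable auxiliary relation before lifting) is what makes Steps~2--4 uniform, and is also why the paper notes that for third-kind gaps these steps cost an extra $\rel_T$ via Theorem~\ref{the:pp}\,(2).
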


\subsection{Proof of Theorem~\ref{the:hsp-gap}}

In this section we prove that the connections shown in
Figure~\ref{fig:reductions} hold in terms of satisfiability gaps, thus proving
Theorem~\ref{the:hsp-gap}. We start with a helpful observation. As any mapping on a finite set of complex numbers can be interpolated by a polynomial, we may apply such mappings to operators as well (we assume that such an interpolating polynomial is of the lowest degree possible, and so is unique). A polynomial $\vr$ is said to \emph{interpolate} a set $B\sse U_d$ if $\vr(\ld)=1$ if $\ld\in B$ and $\vr(\ld)=0$ if $\ld\in U_d-B$.

\begin{lemma}[Finite-dimensional case]\label{lem:mapping}
Let $d,e\in\nat$, $e\le d$. 
Let $\pi:U_e\to U_d$ be an injective mapping and $\vr$ a unary polynomial that interpolates $B=\Im(\pi)$.  Let $\vc Ak$ be pairwise commuting normal operators of order $e$ on a finite dimensional Hilbert space. Then $C_i=\pi(A_i)$, $i\in[k]$, are pairwise commuting normal operators of order $d$, and $\vr(C_i)=I$. Conversely, let $\vc Ck$ be pairwise commuting normal operators of order $d$ such that $\vr(C_i)=I$. Then for $A_i=\pi^{-1}(C_i)$, $i\in[k]$, it holds that the $A_i$'s are pairwise commuting normal operators of order $e$, and the eigenvalues of the $C_i$'s belong to $B$.
\end{lemma}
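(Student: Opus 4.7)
The plan is to use the Strong Spectral Theorem (Theorem~\ref{the:SST}) to simultaneously diagonalize the operators, reducing everything to a pointwise statement on eigenvalues where the claim about $\pi$ and $\vr$ is immediate from their definitions on $U_e$ and $U_d$. The key enabling fact is that $\pi:U_e\to U_d$, being a function on the finite set $U_e$, can be interpolated by a (unique lowest-degree) polynomial, which we again denote $\pi$; similarly, $\pi^{-1}:B\to U_e$ extends to a polynomial. Since the relevant operators are all normal, the spectral theorem guarantees that polynomial evaluation behaves well with simultaneous conjugation (Lemma~\ref{lem:matrix-polys}(3),(4)).

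For the forward direction, I will start with pairwise commuting normal $A_1,\dots,A_k$ of order $e$. By Theorem~\ref{the:SST}, there is a unitary $U$ and diagonal $E_i=UA_iU^{-1}$. From $A_i^e=I$ I get $E_i^e=I$, so each diagonal entry of $E_i$ lies in $U_e$. Then $C_i=\pi(A_i)=U^{-1}\pi(E_i)U$ by Lemma~\ref{lem:matrix-polys}(3), and $\pi(E_i)$ is diagonal with entries in $B=\mathrm{Im}(\pi)\subseteq U_d$. This gives $C_i^d=I$, normality via Lemma~\ref{lem:matrix-polys}(2), and pairwise commutativity since the $\pi(E_i)$'s are all diagonal (hence commute) and the $C_i$'s are simultaneously similar to them via $U$. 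Finally, $\vr(C_i)=U^{-1}\vr(\pi(E_i))U=I$ because $\vr$ equals $1$ on $B$ and $\pi(E_i)$ has all diagonal entries in $B$.

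For the converse, I will start with pairwise commuting normal $C_1,\dots,C_k$ of order $d$ satisfying $\vr(C_i)=I$. Again using Theorem~\ref{the:SST}, diagonalize simultaneously: $F_i=UC_iU^{-1}$ diagonal, with entries in $U_d$ by $C_i^d=I$. From $\vr(F_i)=U\vr(C_i)U^{-1}=I$ and the fact that $\vr(\lambda)=1$ holds on $U_d$ exactly when $\lambda\in B$, I conclude that every diagonal entry of every $F_i$ lies in $B$; this already establishes the eigenvalue claim. Then $A_i=\pi^{-1}(C_i)=U^{-1}\pi^{-1}(F_i)U$, and $\pi^{-1}(F_i)$ is diagonal with entries in $U_e$, which forces $A_i^e=I$. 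Normality and pairwise commutativity of the $A_i$'s follow as in the forward direction.

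The proof is essentially routine once one has the spectral theorem and the polynomial-interpolation viewpoint, so I do not foresee a real obstacle; the only care needed is the mild pedantic point that the interpolating polynomials for $\pi$ and $\pi^{-1}$ agree with the set-theoretic maps only on $U_e$ and $B$ respectively, which is precisely why we need the eigenvalue-containment conclusions (from $A_i^e=I$ and from $\vr(C_i)=I$) before invoking them.
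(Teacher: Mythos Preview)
Your proposal is correct and follows essentially the same approach as the paper: simultaneously diagonalize via the Strong Spectral Theorem, reduce to pointwise statements on the diagonal entries (eigenvalues), and use that $A_i^e=I$ forces entries in $U_e$ while $\vr(C_i)=I$ forces entries in $B$. The paper cites Lemma~\ref{lem:matrix-polys}(1),(2) directly for normality and commutativity of the $C_i$'s whereas you argue via the simultaneous diagonalization, but this is a cosmetic difference.
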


\begin{proof}
That the $C_i$'s are normal and pairwise commute follow from Lemma~\ref{lem:matrix-polys}. Let the $A_i$'s be $\ell$-dimensional and $U$ a unitary operator guaranteed by Theorem~\ref{the:SST} such that $UA_iU^{-1}$ is diagonal for all $i\in[k]$, and let 
\[
UA_iU^{-1}=\left(\begin{array}{ccc}\mu_{i1}&\dots&0\\ \vdots&\ddots&\vdots\\ 0&\dots&\mu_{i\ell} \end{array}\right).
\]
Then,
\begin{eqnarray*}
C_i^d &=& (\pi(A_i))^d=U^{-1}U(\pi(A_i))^dU^{-1}U=U^{-1}(\pi(UA_iU^{-1}))^dU\\
&=& U^{-1}\left(\begin{array}{ccc}(\pi(\mu_{i1}))^d&\dots&0\\ \vdots&\ddots&\vdots\\ 0&\dots&(\pi(\mu_{i\ell}))^d \end{array}\right)U = U^{-1}\left(\begin{array}{ccc}1&\dots&0\\ \vdots&\ddots&\vdots\\ 0&\dots&1 \end{array}\right)U=I,
\end{eqnarray*}
 because $\pi(\mu_{ij})\in U_d$ and by Lemma~\ref{lem:matrix-polys}. In a similar way, as $\pi(\mu_{ij})\in B$,
 \[
 \vr(C_i)= U^{-1}\left(\begin{array}{ccc}\vr(\mu_{i1})&\dots&0\\ \vdots&\ddots&\vdots\\ 0&\dots&\vr(\mu_{i\ell}) \end{array}\right)U=I.
 \]

For the second part of the claim we first need verify that all the eigenvalues of the $C_i$'s belong to $B$. Let $U$ be a unitary operator that diagonalizes the $C_i$'s and 
\[
UC_iU^{-1}=\left(\begin{array}{ccc}\mu_{i1}&\dots&0\\ \vdots&\ddots&\vdots\\ 0&\dots&\mu_{i\ell} \end{array}\right),
\]
assuming that the $C_i$'s are $\ell$-dimensional.  Then we have
\[
  I = \vr(C_i) = U^{-1}U\vr(C_i)U^{-1}U = U^{-1}\vr(UC_iU^{-1})U = U^{-1}\left(\begin{array}{ccc}\vr(\mu_{i1})&\dots&0\\ \vdots&\ddots&\vdots\\ 0&\dots&\vr(\mu_{i\ell}) \end{array}\right)U,
\]
implying $\vr(\mu_{ij})=1$. Then we proceed as in the first part of the claim.
\end{proof}

\begin{lemma}\label{lem:polynomial-image}
Let $\vr$ be a polynomial, $(\Omega,\cM,\mu)$ a measure space, and $c\in L^\infty(\Omega,\mu)$ with a finite range. Then $\vr(c)$ is bounded and measurable.
\end{lemma}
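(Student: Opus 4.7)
The proof is essentially a short verification, so the plan is to dispatch the two claims (boundedness and measurability) separately after fixing a measurable representative of $c$.

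First I would pick a measurable representative of the $L^\infty$-class, which I again denote by $c$, and let $\{z_1,\dots,z_k\}\sse\zC$ be its (essential) range. Boundedness of $\vr(c)$ is then immediate: one has
\[
\vr(c)(\Omega)\sse\{\vr(z_1),\dots,\vr(z_k)\},
\]
which is a finite subset of $\zC$ and therefore bounded. In particular $\vr(c)\in L^\infty(\Omega,\mu)$ once measurability is established.

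For measurability, the cleanest route is to use that every polynomial $\vr:\zC\to\zC$ is continuous, hence Borel measurable, and that the composition of a Borel measurable function with an $\cM$-measurable function is $\cM$-measurable. If one prefers a direct preimage argument, for any Borel set $E\sse\zC$ one writes
\[
(\vr\circ c)^{-1}(E)=c^{-1}\bigl(\vr^{-1}(E)\bigr)=\bigcup_{i\,:\,\vr(z_i)\in E}c^{-1}(\{z_i\}),
\]
a finite union of measurable sets (each $c^{-1}(\{z_i\})$ is measurable because $c$ is measurable and $\{z_i\}$ is Borel), hence measurable.

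The only mild bookkeeping issue is the handling of the almost-everywhere equivalence class: modifying $c$ on a $\mu$-null set modifies $\vr(c)$ only on a $\mu$-null set, so the resulting element of $L^\infty(\Omega,\mu)$ is well defined independently of the chosen representative. I do not anticipate any real obstacle; this lemma is a housekeeping step ensuring that $\vr(c)$ is a legitimate element of $L^\infty(\Omega,\mu)$ and thus that $T_{\vr(c)}$ makes sense as a bounded multiplication operator on $L^2(\Omega,\mu)$ in the arguments that follow.
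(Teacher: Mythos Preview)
Your proposal is correct and takes essentially the same approach as the paper: both argue boundedness from the finite range and measurability by writing the preimage of a Borel set as a finite union of sets of the form $c^{-1}(\{z_i\})$. Your additional observation that continuity of $\vr$ already yields measurability via composition is a slightly slicker alternative that the paper does not mention.
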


\begin{proof}
As $c$ has a finite range, so does $\rho(c)$. Therefore $\rho(c)$ is bounded. To
see that $c$ is measurable we use an argument similar to that from the proof
of Lemma~\ref{lem:lift}. For the event $Q_\sigma = \{\omega\in\Omega\mid
  \vr(c)(\omega) =\sigma\}$, for a fixed $\sigma\in U_d$,
we have $Q_\sigma = \{\omega\in\Omega\mid c(\omega) =\vr^{-1}(\sigma)\}$ and $Q_\sigma=\emptyset$ otherwise. As $c$ is a measurable function and $Q_\tau = c^{-1}(B_{1/2d}(\vr^{-1}(\tau)))$, where $B_{1/2d}(\tau)$ denotes the complex open ball of radius $1/2d$ centered
at $\tau$, which is a Borel set in the standard topology of $\zC$, $Q_\tau$ is a measurable set in the measure space $(\Omega,\cM,\mu)$. Since the range of $\vr(c)$ is finite, the preimage $(\vr(c))^{-1}(S)$ of each Borel subset $S$ of $\zC$ is expressed as a finite Boolean combination of measurable sets, and is thus measurable in $(\Omega,\cM,\mu)$.
\end{proof}

\begin{lemma}[Infinite-dimensional case]\label{lem:mapping-infinite}
Let $d,e\in\nat$, $e\le d$. 
Let $\pi:U_e\to U_d$ be an injective mapping and $\vr$ a unary polynomial that interpolates $B=\Im(\pi)$.  Let $\vc Ak$ be pairwise commuting bounded normal operators of order $e$ on a Hilbert space. Then $C_i=\pi(A_i)$, $i\in[k]$, are pairwise commuting bounded normal operators of order $d$, and $\vr(C_i)=I$. Conversely, let $\vc Ck$ be pairwise commuting bounded normal operators of order $d$ such that $\vr(C_i)=I$. Then for $A_i=\pi^{-1}(C_i)$, $i\in[k]$, it holds that the $A_i$'s are pairwise commuting bounded normal operators of order $e$, and there exist a measure space $(\Omega,\cM,\mu)$, a unitary map $U:\cH\to L^2(\Omega,\mu)$ and functions $c_1,\dots,c_k\in L^\infty(\Omega,\mu)$ such that $C_i = U^{-1} T_{c_i}U$ for $i\in [k]$ and the multiplication operators $T_{c_i}$ of $L^2(\Omega,\mu)$, and $\mu(\{\omega\in\Omega\mid c_i(\omega)\not\in B\}) = 0$ for $i\in[k]$.
\end{lemma}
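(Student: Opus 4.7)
The plan is to mirror the finite-dimensional argument of Lemma~\ref{lem:mapping}, replacing Theorem~\ref{the:SST} by its general form, Theorem~\ref{the:general-sst}, and invoking Lemma~\ref{lem:polynomial-image} to guarantee that post-compositions of polynomials with the spectral multipliers stay inside $L^\infty$. Concretely, I would first apply Theorem~\ref{the:general-sst} to the pairwise commuting bounded normal operators $A_1,\dots,A_k$ to obtain a measure space $(\Omega,\cM,\mu)$, a unitary map $U:\cH\to L^2(\Omega,\mu)$, and functions $a_1,\dots,a_k\in L^\infty(\Omega,\mu)$ with $A_i=U^{-1}T_{a_i}U$. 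From $A_i^e=I$ one gets $T_{a_i^e}=T_{a_i}^e=I$, so $a_i(\omega)^e=1$ almost everywhere, and after adjusting each $a_i$ on a null set I may assume $a_i(\omega)\in U_e$ pointwise. Then $\pi\circ a_i$ is bounded and measurable by Lemma~\ref{lem:polynomial-image}, and Lemma~\ref{lem:matrix-polys}(4) yields $C_i=\pi(A_i)=U^{-1}T_{\pi\circ a_i}U$. Since $(\pi\circ a_i)(\omega)\in B\subseteq U_d$ a.e., I get $C_i^d=I$; since $\vr\equiv 1$ on $B$, I also get $\vr(C_i)=U^{-1}T_{\vr\circ\pi\circ a_i}U=U^{-1}T_1U=I$. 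Normality, pairwise commutativity, and boundedness of the $C_i$'s then follow from Lemma~\ref{lem:matrix-polys}(1)--(2) and $\pi\circ a_i\in L^\infty$.

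\textbf{Backward direction.} Symmetrically, I would apply Theorem~\ref{the:general-sst} to $C_1,\dots,C_k$ to obtain a measure space $(\Omega,\cM,\mu)$, a unitary $U:\cH\to L^2(\Omega,\mu)$, and $c_1,\dots,c_k\in L^\infty(\Omega,\mu)$ with $C_i=U^{-1}T_{c_i}U$. From $C_i^d=I$ one gets $c_i(\omega)\in U_d$ a.e., and from $\vr(C_i)=I$ together with Lemma~\ref{lem:matrix-polys}(4) one gets $\vr(c_i(\omega))=1$ a.e. Because $\vr$ interpolates $B$ (it equals $1$ on $B$ and $0$ on $U_d\setminus B$), these two conditions jointly force $c_i(\omega)\in B$ almost everywhere, so $\mu(\{\omega\mid c_i(\omega)\notin B\})=0$, which is the final assertion of the lemma. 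To define the $A_i$'s, I would fix any polynomial $\widetilde\pi^{-1}$ extending the bijection $\pi^{-1}:B\to U_e$ to all of $\zC$; then $A_i:=\widetilde\pi^{-1}(C_i)=U^{-1}T_{\widetilde\pi^{-1}\circ c_i}U$ by Lemma~\ref{lem:matrix-polys}(4), and Lemma~\ref{lem:polynomial-image} places $\widetilde\pi^{-1}\circ c_i$ in $L^\infty$. Since $(\widetilde\pi^{-1}\circ c_i)(\omega)\in U_e$ almost everywhere, the $A_i$'s are bounded with $A_i^e=I$, and pairwise commutativity and normality again follow from Lemma~\ref{lem:matrix-polys}.

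\textbf{Main obstacle.} The only substantive difference from the finite-dimensional proof of Lemma~\ref{lem:mapping} is that one must control measurability and $L^\infty$-membership under the nonlinear substitutions $a_i\mapsto\pi(a_i)$ and $c_i\mapsto\widetilde\pi^{-1}(c_i)$; this is precisely the role of Lemma~\ref{lem:polynomial-image}. A related subtle point is that $\pi^{-1}$ is only defined on $B$, so I would need to observe that the choice of polynomial extension to $\zC\setminus B$ is immaterial: because $c_i$ lives in $B$ almost everywhere, any two extensions agree with $c_i$ up to a null set and therefore produce the same multiplication operator, and hence the same $A_i$.
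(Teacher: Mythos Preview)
Your proposal is correct and follows essentially the same route as the paper's proof: in both directions you invoke the General Strong Spectral Theorem (Theorem~\ref{the:general-sst}) in place of Theorem~\ref{the:SST}, push the polynomial maps $\pi$, $\vr$, $\pi^{-1}$ through the multiplication-operator representation via Lemma~\ref{lem:matrix-polys}, and appeal to Lemma~\ref{lem:polynomial-image} for the $L^\infty$ memberships. If anything, you are slightly more explicit than the paper about the need to choose a polynomial extension $\widetilde\pi^{-1}$ of $\pi^{-1}$ off $B$ and why that choice is immaterial; the paper simply writes $\pi^{-1}(C_i)$ and declares ``the rest of the proof is identical to that of the first part of the lemma.''
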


\begin{proof}
By the General Strong Spectral Theorem (cf.~Theorem~\ref{the:general-sst}) there exist a measure space $(\Omega,\cM,\mu)$, a unitary map $U:\cH\to L^2(\Omega,\mu)$ and functions $a_1,\dots,a_k\in L^\infty(\Omega,\mu)$ such
that, for the multiplication operators $E_i = T_{a_i}$ of $L^2(\Omega,\mu)$, the relations $A_i = U^{-1} E_iU$ hold for each $i\in [k]$. Equivalently,
$U A_iU^{-1} = E_i$. From $A_i^e=I$ we conclude $E_i^e=I$. Hence, $a_i(\omega)^e = 1$ for almost all $\omega\in\Omega$; i.e., formally $\mu(\{\omega\in\Omega\mid a_i(\omega)^e\ne1\}) = 0$. Thus, $a_i(\omega)\in U_e$ for almost all $\omega\in\Omega$. Therefore, replacing $a_i$ with a different representative of its equivalence class in $L^\infty(\Omega,\mu)$ we may assume that $a_i(\omega)\in U_e$ for all $\omega\in\Omega$. Then we have 
\[
C_i=\pi(A_i)=\pi(U^{-1}E_iU)=U^{-1}\pi(E_i)U=U^{-1}\pi(T_{a_i})U=U^{-1}T_{\pi(a_i)}U.
\]
That the $C_i$'s are normal and pairwise commute follow from Lemma~\ref{lem:matrix-polys}, and it suffices to prove that $C^d_i=I$ and $c_i=\pi(a_i)$ are bounded and measurable.

The equality $C^d_i=I$ follows from the fact that $c_i(\omega)\in U_d$ for all $\omega\in\Omega$. By Lemma~\ref{lem:polynomial-image} $c_i$ is bounded and measurable. 

For the second part of the lemma, again, by the General Strong Spectral Theorem (cf. Theorem~\ref{the:general-sst}) there exist a measure space $(\Omega,\cM,\mu)$, a unitary map $U:\cH\to L^2(\Omega,\mu)$ and functions $c_1,\dots,c_k\in L^\infty(\Omega,\mu)$ such
that, for the multiplication operators $D_i = T_{c_i}$ of $L^2(\Omega,\mu)$, the relations $C_i = U^{-1} D_iU$ hold for each $i\in [k]$. As $\vr(C_i)=1$, we have 
\[
1=\vr(C_i)=\vr(U^{-1}D_iU)=U^{-1}\vr(D_i)U=U^{-1}\vr(T_{d_i})U=U^{-1}T_{\vr(d_i)}U.
\]
This implies 
\[
\mu(\{\omega\in\Omega\mid \vr(d_i(\omega))\ne0\})=\mu(\{\omega\in\Omega\mid d_i(\omega)\not\in B\}) = 0.
\]
The rest of the proof is identical to that of the first part of the lemma.
\end{proof}

Next we consider the four connections from Figure~\ref{fig:reductions} one by
one and prove that each of them preserves the satisfiability gap. For Steps
2--4, which rely on pp-definitions, a gap of the third kind is only preserved up
to the addition of $\rel_T$ to the language, as in the statement of
Theorem~\ref{the:hsp-gap}.

\smallskip

{\bf Step 1 (Reductions to a core).}
Let $\Gm$ be a constraint language over the set $U_d$ and $\vr:U_d\to U_d$ an idempotent
endomorphism of $\Gm$. Then $\CSP(\Gm)$ has a satisfiability gap if and only if $\CSP(\vr(\Gm))$ does.

\smallskip

Suppose that $|\Im(\vr)|=e$, let $\pi':\Im(\vr)\to U_e$ be any bijection between $U_e$ and $\Im(\vr)$, and $\pi=\pi'\circ\vr$. Let $\Dl=\{\pi(\rel)\mid \rel\in\Gm\}$, we show that $\CSP(\Dl)$ has a satisfiability gap if and only if $\CSP(\Gm)$ does.

Let $\cP=(V,U_e,\cC)$ be a gap instance of $\CSP(\Dl)$, and let $\cP^\pi=(V,U_d,\cC^\pi)$ be the corresponding instance of $\CSP(\Gm)$, where for each $\ang{\bs,\rel}\in\cC$ the set $\cC^\pi$ includes $\ang{\bs,\relo}$ with $\relo\in\Gm$ and $\pi(\relo)=\rel$. As is easily seen, $\cP^\pi$  has no solution, because for any solution $\vf$ of $\cP^\pi$ the mapping $\pi\circ\vf$ is a solution of $\cP$. 

{\it Finite-dimensional case.}
Let $\{A_v\mid v\in V\}$ be an $\ell$-dimensional satisfying operator assignment for $\cP$. We prove that $\{\pi'^{-1}(A_v)\mid v\in V\}$ is a satisfying operator assignment for $\cP^\pi$. Let $C_v=\pi'^{-1}(A_v)$.

By Lemma~\ref{lem:mapping}, the $C_v$'s are normal, $C_v^d=I$, $v\in V$,
and for any constraint $\ang{\bs,\rel}\in\cC$ and any $v,w\in\bs$, $C_v,C_w$ commute.
Now, let $\ang{\bs,\rel}\in\cC$, $\bs=(\vc vk)$, $\ang{\bs,\relo}$ be the
corresponding constraint of $\cP^\pi$, and $f_\rel,f_\relo$ be the polynomials representing $\rel,\relo$ over $U_e,U_d$ respectively. We have $f_\rel(A_{v_1}\zd A_{v_k})=I$, and we need to show that $f_\relo(C_{v_1}\zd C_{v_k})=I$. Let $U$ diagonalize $A_{v_1}\zd A_{v_k}$ and
\[
UA_{v_i}U^{-1}=\left(\begin{array}{ccc}\mu_{i1}&\dots&0\\ &\ddots&\\ 0&\dots&\mu_{i\ell}\end{array}\right).
\]
Then $(\mu_{1j}\zd\mu_{kj})\in\rel$ for $j\in[\ell]$, because
\[
I=f_\rel(A_{v_1}\zd A_{v_k})=f_\rel(UA_{v_1}U^{-1}\zd UA_{v_k}U^{-1})= \left(\begin{array}{ccc}f_\rel(\mu_{11}\zd\mu_{k1})&\dots&0\\ &\ddots&\\ 0&\dots&f_\rel(\mu_{1\ell}\zd\mu_{k\ell})\end{array}\right).
\]
Therefore, $(\pi'^{-1}(\mu_{1j})\zd\pi'^{-1}(\mu_{kj}))\in\vr(\relo)$ for $j\in[\ell]$. As $\vr$ is an endomorphism, $(\pi'^{-1}(\mu_{1j})\zd\pi'^{-1}(\mu_{kj}))\in\relo$ and so $f_\relo(\pi'^{-1}(\mu_{1j})\zd\pi'^{-1}(\mu_{kj}))=\ld_0=1$. Since 
\[
UC_{v_i}U^{-1}=U\pi'^{-1}(A_{v_i})U^{-1}=\pi'^{-1}(UA_{v_i}U^{-1})=\left(\begin{array}{ccc}\pi'^{-1}(\mu_{i1})&\dots&0\\ &\ddots&\\ 0&\dots&\pi'^{-1}(\mu_{i\ell})\end{array}\right),
\]
we also have $f_\relo(C_{v_1}\zd C_{v_k})=\ld_0I=I$.

{\it Infinite-dimensional case.}
Let $\{A_v\mid v\in V\}$ be an (infinite-dimensional) satisfying operator assignment for $\cP$. We prove that $\{\pi'^{-1}(A_v)\mid v\in V\}$ is a satisfying operator assignment for $\cP^\pi$. Let $C_v=\pi'^{-1}(A_v)$.

By Lemma~\ref{lem:mapping-infinite}, the $C_v$'s are normal, $C_v^d=I$, $v\in V$,
and for any constraint $\ang{\bs,\rel}\in\cC$ and any $v,w\in\bs$, $C_v,C_w$ commute.
Now, let $\ang{\bs,\rel}\in\cC$, $\bs=(\vc vk)$, $\ang{\bs,\relo}$ be the
corresponding constraint of $\cP^\pi$, and $f_\rel,f_\relo$ be the polynomials representing $\rel,\relo$ over $U_e,U_d$ respectively. We have $f_\rel(A_{v_1}\zd A_{v_k})=I$, and we need to show that $f_\relo(C_{v_1}\zd C_{v_k})=I$. By the General Strong Spectral Theorem (cf. Theorem~\ref{the:general-sst}) there exist a measure space $(\Omega,\cM,\mu)$, a unitary map $U:\cH\to L^2(\Omega,\mu)$ and functions $a_1,\dots,a_k\in L^\infty(\Omega,\mu)$ such that, for the multiplication operators $E_i = T_{a_i}$ of $L^2(\Omega,\mu)$, the relations $A_{v_i} = U^{-1} E_iU$ hold for each $i\in [k]$. As $A_{v_i}^e=I$, it holds that $a_i(\omega)\in U_e$ for almost all $\omega\in\Omega$. Choosing a different representative of the equivalence class of $a_i$ we may assume that $a_i(\omega)\in U_e$ for all $\omega\in\Omega$. Then $(a_1(\omega)\zd a_k(\omega))\in\rel$ for almost all $\omega\in\Omega$, because
\[
I=Uf_\rel(A_{v_1}\zd A_{v_k})U^{-1}=f_\rel(UA_{v_1}U^{-1}\zd UA_{v_k}U^{-1})= f_\rel(T_{a_i}\zd T_{a_k})=T_{f_\rel(a_i\zd a_k)},
\]
implying $f_\rel(a_i\zd a_k)(\omega)=1$ for almost all $\omega\in\Omega$. Therefore, $(\pi'^{-1}(a_1(\omega))\zd \pi'^{-1}(a_k(\omega)))\in\vr(\relo)$ for almost all $\omega\in\Omega$, and, as $\vr$ is an endomorphism, $(\pi'^{-1}(a_1(\omega))\zd \pi'^{-1}(a_k(\omega)))\in\relo$ and so $f_\relo(\pi'^{-1}(a_1(\omega))\zd \pi'^{-1}(a_k(\omega)))=\ld_0=1$ for almost all $\omega\in\Omega$. Since 
\[
UC_{v_i}U^{-1}=U\pi'^{-1}(A_{v_i})U^{-1}=\pi'^{-1}(UA_{v_i}U^{-1})=\pi'^{-1}(T_{a_i})=T_{\pi'^{-1}(a_i)},
\]
we also have $f_\relo(C_{v_1}\zd C_{v_k})=\ld_0I=I$. Moreover, $\pi'^{-1}(a_i)\in L^\infty(\Omega,\mu)$ by Lemma~\ref{lem:polynomial-image}.

\smallskip

Now, let $\cP^\pi=(V,U_d,\cC^\pi)$ be a gap instance of $\CSP(\Gm)$ and $\cP=(V,U_e,\cC)$, $\cC=\{\ang{\bs,\rel}\mid\ang{\bs,\relo}\in\cC^\pi,\rel=\pi(\relo)\}$, the corresponding instance of $\CSP(\Dl)$. Then again $\cP$ has no solution over $U_e$.

{\it Finite-dimensional case.}
Let $\{C_v\mid v\in V\}$ be an $\ell$-dimensional satisfying operator assignment for $\cP^\pi$. We need to prove that $\{A_v\mid v\in V\}$, $A_v=\pi(C_v)$ is a satisfying operator assignment for $\cP=(V,\cC)$. For $\ang{\bs,\rel}\in\cC$, $\bs=(\vc vk)$, let $f_\rel,f_\relo$ be polynomials representing $\rel$ and $\relo\in\Gm$ with $\pi(\relo)=\rel$, respectively. 

First, we show that the $C_v$'s can be replaced with $\vr(C_v)$. Since for any
$\ang{\bs,\relo}\in\cC^\pi$, $\bs=(\vc vk)$, and any $\vc ak\in U_d$, we have
that $f_\relo(\vr(a_1)\zd\vr(a_k))=1$ whenever $f_\relo(\vc ak)=1$, by Lemma~\ref{lem:lemma-3} $f_\relo(\vr(C_{v_1})\zd\vr(C_{v_k}))=I$ whenever $f_\relo(C_{v_1}\zd C_{v_k})=I$. Thus, we assume $A_v=\pi'(C_v)$ for $v\in V$.

That $A_v$ is normal, $A_v^e=I$, $v\in V$, and the $A_v$'s locally commute follows from Lemma~\ref{lem:mapping}. Let $f^\pi_\relo(\vc xk)=f_\relo(\pi'^{-1}(x_1)\zd\pi'^{-1}(x_k))$. As is easily seen, for any $\vc ak\in U_e$, if $f^\pi_\relo(\vc ak)=1$ then $f_\rel(\vc ak)=1$. Hence, by Lemma~\ref{lem:lemma-3} $f_\rel(A_{v_1}\zd A_{v_k})=I$ whenever $f^\pi_\relo(A_{v_1}\zd A_{v_k})=I$. Finally, we prove that $\pi'^{-1}(A_v)=C_v$; this implies that $f^\pi_\relo(A_{v_1}\zd A_{v_k})=I$ completing the proof. Let $U$ diagonalize $C_v$ and 
\[
UC_vU^{-1}=\left(\begin{array}{ccc}\mu_1&\dots&0\\ &\ddots&\\ 0&\dots&\mu_\ell\end{array}\right).
\]
Then
\begin{eqnarray*}
\pi'^{-1}(A_v) &=& \pi'^{-1}(\pi'(C_v))=U^{-1}U\pi'^{-1}(\pi'(C_v))U^{-1}U=U^{-1}\pi'^{-1}(\pi'(UC_vU^{-1}))U\\
&=& U^{-1}\left(\begin{array}{ccc}\pi'^{-1}(\pi'(\mu_1))&\dots&0\\ &\ddots&\\ 0&\dots&\pi'^{-1}(\pi'(\mu_\ell))\end{array}\right)U
= U^{-1}\left(\begin{array}{ccc}\mu_1&\dots&0\\ &\ddots&\\ 0&\dots&\mu_\ell\end{array}\right)U=C_v.
\end{eqnarray*}

{\it Infinite-dimensional case.}
Let $\{C_v\mid v\in V\}$ be a (infinite-dimensional) satisfying operator assignment for $\cP^\pi$. We need to prove that $\{A_v\mid v\in V\}$, $A_v=\pi(C_v)$ is a satisfying operator assignment for $\cP=(V,\cC)$. For $\ang{\bs,\rel}\in\cC$, $\bs=(\vc vk)$, let $f_\rel,f_\relo$ be polynomials representing $\rel$ and $\relo\in\Gm$ with $\pi(\relo)=\rel$, respectively. As in the finite-dimensional case it can be shown that the $C_v$'s can be replaced with $\vr(C_v)$, and therefore we assume $A_v=\pi'(C_v)$ for $v\in V$.

That $A_v$ is normal, $A_v^e=I$, $v\in V$, and the $A_v$'s locally commute follows from Lemma~\ref{lem:mapping-infinite}. As before, for any $\vc ak\in U_e$, if $f^\pi_\relo(\vc ak)=1$ then $f_\rel(\vc ak)=1$, implying by Lemma~\ref{lem:lemma-3} that $f_\rel(A_{v_1}\zd A_{v_k})=I$ whenever $f^\pi_\relo(A_{v_1}\zd A_{v_k})=I$. Finally, we prove that $\pi'^{-1}(A_v)=C_v$; this implies that $f^\pi_\relo(A_{v_1}\zd A_{v_k})=I$ completing the proof. By the General Strong Spectral Theorem (cf. Theorem~\ref{the:general-sst}) there exist a measure space $(\Omega,\cM,\mu)$, a unitary map $U:\cH\to L^2(\Omega,\mu)$ and functions $c_1,\dots,c_k\in L^\infty(\Omega,\mu)$ such that, for the multiplication operators $D_i = T_{c_i}$ of $L^2(\Omega,\mu)$, the relations $C_{v_i} = U^{-1} D_iU$ hold for each $i\in [k]$. As before, the $c_i$'s can be chosen such that $c_i(\omega)\in U_d$ for all $\omega\in\Omega$. Then by Lemma~\ref{lem:polynomial-image} $\pi'^{-1}(\pi'(c_i))\in L^\infty(\Omega,\mu)$ and 
\begin{align*}
\pi'^{-1}(A_v) &= \pi'^{-1}(\pi'(C_v))=U^{-1}U\pi'^{-1}(\pi'(C_v))U^{-1}U\\
&=U^{-1}\pi'^{-1}(\pi'(UC_vU^{-1}))U=U^{-1}\pi'^{-1}(\pi'(T_{c_i}))U=U^{-1}T_{\pi'^{-1}(\pi'(c_i))}U.
\end{align*}
Since $\pi'^{-1}(\pi'(c_i(\omega)))=c_i(\omega)$ for almost all $\omega\in\Omega$, we obtain $U^{-1}T_{\pi'^{-1}(\pi'(c_i))}U=C_{v_i}$.

\medskip

{\bf Step 2  (Adding constant relations).}
Let $\Gm$ be a core language. Then $\CSP(\Gm)$ has a satisfiability gap if and only if $\CSP(\Gm^*)$ does.

\smallskip

Since $\Gm\sse\Gm^*$, if $\CSP(\Gm)$ has a satisfiability gap, so does
$\CSP(\Gm^*)$. We prove that if $\CSP(\Gm^*)$ has a satisfiability gap then
$\CSP(\Gm)$ has a satisfiability gap.

We will use the following relation $\rel_\Gm$ that is known to be pp-definable in $\Gm$ \cite{Jeavons99:expressive}: 
\[
\rel_\Gm=\{(\vr(\ld_0)\zd\vr(\ld_{d-1}))\mid \text{$\vr$ is an endomorphism of $\Gm$}\}.
\]
As $\Gm$ is a core language, for any $(\vc ad)\in\rel_\Gm$ it holds that $\{\vc
ad\}=U_d$. By Theorem~\ref{the:pp} we may assume that $\rel_\Gm\in\Gm$.

Let $\cP=(V,U_d,\cC)$ be a gap instance of $\CSP(\Gm^*)$. We construct an instance $\cP'=(V',U_d,\cC')$ of $\CSP(\Gm)$ as follows. 
\begin{itemize}
    \item 
    $V'= V \cup \{v_a | a \in U_d \}$;
    \item
    $\cC'$ consists of three parts: $\{C=\ang{\bs,\rel}\in\cC\mid \rel\in\Gm\}$, $\{\ang{(v_{a_1}\zd v_{a_n}),\rel_\Gm}\}$, and  $\{\ang{(v,v_a),=_d}\mid \ang{(v),C_a}\in\cC\}$, where $=_d$ denotes the equality relation on $U_d$.
\end{itemize}

It is known \cite{Jeavons99:expressive} that $\cP'$ has a classic solution if and only if $\cP$ has one. If $\vf:V\to U_d$ is a solution of $\cP$ then we can extend it to a solution of $\cP'$ by mapping $v_a$ to $a$. Conversely, let $\vf:V'\to U_d$ be a solution of $\cP'$. The restriction of $\vf$ to $V$ may not be a solution of $\cP$, because for some constraint $\ang{(v),C_a}\in\cC$ it may be the case that $\vf(v)=\vf(v_a)\ne a$. This can be fixed as follows.  As $\vf$ is a solution, $(\vf(v_{\ld_0})\zd \vf(v_{\ld_{d-1}}))\in\rel_\Gm$, hence, the mapping $\vr:U_d\to U_d$ given by $\vr(a)=\vf(v_a)$ is an endomorphism of $\Gm$.  As $\Gm$ is a core language, $\vr$ is a permutation on $U_d$ and $\vr^s$ is the identity permutation for some $s$. Then $\vr^{s-1}$ is the inverse $\vr^{-1}$ of $\vr$ and is also an endomorphism of $\Gm$. Therefore $\vf'=\vr^{-1}\circ\vf$ is also a solution of $\cP'$ and $\vf'(v_a)=a$ for $a\in U_d$. Thus, $\vf'\red V$ is a solution of $\cP$.

\smallskip

{\it Finite-dimensional case.}
Suppose that $\{A_v\mid v\in V\}$ is an $\ell$-dimensional satisfying operator assignment for $\cP$. First, we observe that if $\cC$ contains a constraint $\ang{(v),C_a}$ then $A_v$ is the scalar operator $aI$. Indeed, let $f_a(x)$ be a polynomial representing $C_a$, that is,  $f_a(a)=\ld_0$ and $f_a(b)=\ld_1$ for $b\in U_d-\{a\}$. Let also $U$ be a unitary operator that diagonalizes $A_v$ and $\vc\mu\ell$ the eigenvalues of $A_v$. Then, as $f_a(A_v)=I$ we obtain
\[
I=UIU^{-1}=Uf_a(A_v)U^{-1}=f_a(UA_vU^{-1})=
\left(\begin{array}{ccc}f_a(\mu_1))&\dots&0\\ &\ddots&\\ 0&\dots&f_a(\mu_\ell))\end{array}\right)
\]
implying that $f_a(\mu_i)=\ld_0$ for $i\in[\ell]$. Thus, $a$ is the only eigenvalue of $A_v$ and 
\[
A_v=U^{-1}aIU=aI.
\]
All such operators pairwise commute regardless of the value of $a$. Therefore $v_a$ can be assigned $aI$ for $a\in U_d$, and the resulting assignment is a satisfying operator assignment for $\cP'$.

\smallskip

{\it Infinite-dimensional case.}
Suppose that $\{A_v\mid v\in V\}$ is a (infinite-dimensional) satisfying operator assignment for $\cP$. As before, we observe that if $\cC$ contains a constraint $\ang{(v),C_a}$ then $A_v$ is the scalar operator $aI$. Let $f_a(x)$ be a polynomial representing $C_a$, that is,  $f_a(a)=\ld_0$ and $f_a(b)=\ld_1$ for $b\in U_d-\{a\}$. By the General Strong Spectral Theorem (cf. Theorem~\ref{the:general-sst}) there exist a measure space $(\Omega,\cM,\mu)$, a unitary map $U:\cH\to L^2(\Omega,\mu)$ and a function $c\in L^\infty(\Omega,\mu)$ such that, for the multiplication operator $E = T_c$ of $L^2(\Omega,\mu)$, the relation $A_v = U^{-1} EU$ holds. By Lemma~\ref{lem:lemma-3} 
\[
f_a(E)=f_a(UA_vU^{-1})=Uf_a(A_v)U^{-1}=UIU^{-1},
\]
implying $f_a(c(\omega))=1$ for almost all $\omega\in\Omega$. Therefore, $A_v=U^{-1}EU=U^{-1}aIU=aI$. We then complete the proof as in the finite-dimensional case.

\medskip

{\bf Step 3 (Satisfiability gap from subalgebras).}
Let $\Gm$ be a constraint language over the set $U_d$ and let $B$ be its subalgebra.
Then if $\CSP(\Gm\red B)$ has a satisfiability gap then so does $\CSP(\Gm)$.

\smallskip

Let $\Dl=\Gm\red B$. Then by Theorem~\ref{the:pp} we may assume $\Dl\sse\Gm$ and $B\in\Gm$. Let $e=|B|$ and $\pi:U_e\to U_d$ a bijection between $U_e$ and $B$. 

Let $\cP=(V,U_e,\cC)$ be a gap instance of $\CSP(\pi^{-1}(\Dl))$ and the instance $\cP^\pi=(V,U_d,\cC^\pi)$ constructed as follows. For every $\ang{\bs,\rel}\in\cC$ the instance $\cP^\pi$ contains $\ang{\bs,\pi(\rel)}$. As is easily seen, $\cP^\pi$ has no classic solution. Therefore, it suffices to show that for any finite- or infinite-dimensional satisfying operator assignment $\{A_v\mid v\in V\}$ for $\cP$, the assignment $C_v=\pi(A_v)$ is a satisfying operator assignment for $\cP^\pi$.

By Lemmas~\ref{lem:mapping},~\ref{lem:mapping-infinite}, the $C_v$'s are normal, satisfy the condition $C_v^d=I$, locally commute, and bounded in the infinite-dimensional case. For $\ang{\bs,\rel}\in\cC$, $\bs=(\vc vk)$, let $f^\pi_\rel(\vc xk)=f_\rel(\pi^{-1}(x_1)\zd\pi^{-1}(x_k))$. As in \textbf{Step 1}, it can be shown that $\pi^{-1}(C_v)=A_v$, and therefore $f^\pi_\rel(C_{v_1}\zd C_{v_k})=I$. For any $\vc ak\in U_d$, if $(\vc ak)\in\pi(\rel)$ then $\vc ak\in B$. Therefore, if $f^\pi_\rel(\vc ak)=1$ then $f_{\pi(\rel)}(\vc ak)=1$. By Lemma~\ref{lem:lemma-3} this implies $f_{\pi(\rel)}(C_{v_1}\zd C_{v_k})=I$.

\smallskip

{\bf Step 4 (Satisfiability gap from homomorphic images).}
Let $\Gm$ be a constraint language over the set $U_d$ and $\th$ a congruence of
$\Gm$.  Then if $\CSP(\Gm\fac\th)$ has a satisfiability gap then so does $\CSP(\Gm)$.

\smallskip

Let $\vr:U_d\to U_d\fac\th$ be the natural mapping $a\mapsto a\fac\th$ and $\pi':U_d\fac\th\to U_e$, where $e=|U_d\fac\th|$, a bijection. Finally, let $\pi=\pi'\circ\vr$ and $\Dl=\pi(\Gm)$. For $\rel\in\Dl$ let $\pi^{-1}(\rel)$ be the full preimage of $\rel$ under $\pi$. Since $\th$ is pp-definable in $\Gm$, so is $\pi^{-1}(\rel)$ for any $\rel\in\Dl$. Indeed, if $\rel=\pi(\relo)$ for some $\relo\in\Gm$, then 
\[
\pi^{-1}(\rel)(\vc xk)=\exists\vc yk\ \ \relo(\vc yk)\wedge\bigwedge_{i\in[k]}\th(x_i,y_i).
\]
Using Theorem~\ref{the:pp} we may assume that $\pi^{-1}(\rel)\in\Gm$ for $\rel\in\Dl$. Let $\pi^*:U_e\to U_d$ assign to $a\in U_e$ a representative of the $\th$-class $\pi'^{-1}(a)$. Thus, in a certain sense, $\pi^*$ is an inverse of $\pi$.  

Suppose that $\cP=(V,U_e,\cC)$ is a gap instance of $\CSP(\Dl)$ and let $\cP^\pi=(V,U_d,\cC^\pi)$ be given by $\cC^\pi=\{\ang{\bs,\pi^{-1}(\rel)}\mid \ang{\bs,\rel}\in\cC\}$. We prove that $\cP^\pi$ is a gap instance of $\CSP(\Gm)$. Firstly, observe that $\cP^\pi$ has no classic solution, because for any solution $\vf$ of $\cP^\pi$ the mapping $\pi\circ\vf$ is a solution of $\cP$. Let $\{A_v\mid v\in V\}$ be a satisfying operator assignment for $\cP$. We set $C_v=\pi^*(A_v)$ and prove that $\{C_v\mid v\in V\}$ is a satisfying operator assignment for $\cP^\pi$. By Lemma~\ref{lem:mapping},~\ref{lem:mapping-infinite}, the $C_v$'s are normal, satisfy the condition $C_v^d=I$, bounded, and locally commute. For $\ang{\bs,\rel}\in\cC$, $\bs=(\vc vk)$, let $f^\pi_\rel(\vc xk)=f_\rel(\pi(x_1)\zd\pi(x_k))$. As before, it is easy to see that $\pi(C_v)=A_v$ for $v\in V$, and therefore $f^\pi_\rel(C_{v_1}\zd C_{v_k})=I$. Finally, for any $(\vc ak)\in U_d$, if $f^\pi_\rel(\vc ak)=f_\rel(\pi(a_1)\zd\pi(a_k))=1$, then $(\pi(a_1)\zd\pi(a_k))\in\rel$, and so $(\vc ak)\in\pi^{-1}(\rel)$ and $f_{\pi^{-1}(\rel)}(\vc ak)=1$. By Lemma~\ref{lem:lemma-3} this implies that $f_{\pi^{-1}(\rel)}(C_{v_1}\zd C_{v_k})=I$.

\section*{Acknowledgements} We are grateful to William Slofstra for useful
discussions, for telling us about several relevant results, and finally for finding mistakes in a previous version of this paper.

{\small
\bibliographystyle{plainurl}
\bibliography{bz}
}

\end{document}